\documentclass[11pt]{amsart}

\usepackage{amsmath,amssymb,amsthm,mathtools}
\usepackage{bm}
\usepackage{xcolor}

\usepackage{pgfplots}
\pgfplotsset{compat=1.16}
\usepackage{enumitem}
\usepackage[hidelinks]{hyperref}

\usepackage{algorithm}
\usepackage{algpseudocode}

\numberwithin{equation}{section}
\newtheorem{theorem}{Theorem}[section]
\newtheorem{lemma}[theorem]{Lemma}
\newtheorem{proposition}[theorem]{Proposition}
\newtheorem{corollary}[theorem]{Corollary}

\newtheorem{example}{Example}[section]
\theoremstyle{remark}
\newtheorem{remark}[theorem]{Remark}

\algrenewcommand\algorithmicrequire{\textbf{Input:}}
\algrenewcommand\algorithmicensure{\textbf{Output:}}

\DeclareMathOperator{\dist}{dist}
\DeclareMathOperator{\diag}{diag}

\newcommand{\xkh}[1]{\left(#1\right)}
\newcommand{\dkh}[1]{\left\{#1\right\}}
\newcommand{\zkh}[1]{\left[#1\right]}

\newcommand{\norm}[1]{\left\|{#1}\right\|_2}

\newcommand{\abs}[1]{\lvert#1\rvert}
\newcommand{\Abs}[1]{\left\lvert#1\right\rvert}

\newcommand{\vx}{{\bm x}}

\newcommand{\vv}{{\bm v}}
\newcommand{\vw}{{\bm w}}
\newcommand{\vz}{{\bm z}}
\newcommand{\vf}{{\bm f}}
\newcommand{\vd}{{\bm d}}

\newcommand{\vh}{{\bm h}}

\newcommand{\E}{{\mathbb E}}

\newcommand{\C}{{\mathbb C}}
\newcommand{\R}{{\mathbb R}}

\DeclareMathOperator{\Rea}{Re}
\DeclareMathOperator{\Ima}{Im}
\DeclareMathOperator{\supp}{supp}
\newcommand{\Prob}{\mathbb P}

\title[Gauss--Newton for Coded Diffraction Phase Retrieval]
{The Gauss--Newton Method for Phase Retrieval from Coded Diffraction Patterns}

\author{Meng Huang}
\address{School of Mathematical Sciences, Beihang University, Beijing, 100191, China}
\email{menghuang@buaa.edu.cn}

\author{Tianjiao Wang}
\address{School of Artificial Intelligence, University of Science and Technology Beijing, China}
\email{tianjwang6-c@my.cityu.edu.hk}

\date{}
\begin{document}
\begin{abstract}
We study the minimum-norm Gauss--Newton method for phase retrieval from random coded diffraction patterns with additive intensity noise. For bounded symmetric masks satisfying the standard complex moment conditions, we establish convergence guarantees with quadratic error recursions in both the resampled and fixed-data settings.
In the resampled setting, $O(\log^2 n)$ masks per independent block suffice to obtain a spectral initializer within a constant-relative-radius neighborhood and to establish a local quadratic error recursion with an additive noise term. Consequently, in the noiseless case, the method achieves relative accuracy $\epsilon$ using $O(\log\log(1/\epsilon)\log^2 n)$ masks. In the fixed-data setting, $O(\log^6(en))$ masks suffice for spectral initialization within a neighborhood whose relative radius is proportional to $\log^{-5/2}(nL)$. Within this neighborhood, the method satisfies a linear contraction bound and eventually a quadratic error recursion. In particular, in the noiseless case, the method initially converges linearly and enters a quadratic convergence phase after $O(\log n)$ iterations, without acquiring additional masks. Numerical experiments on synthetic signals and a natural image demonstrate the method's efficiency relative to state-of-the-art algorithms.
\end{abstract}
\maketitle

\section{Introduction}
\label{sec:introduction}

\subsection{Problem setup and motivation}
Phase retrieval seeks to reconstruct a signal from measurements that record magnitude but not phase. In diffraction imaging, this leads to the recovery of an object from Fourier intensity data.  However, this problem is generally ill-posed \cite{beinert2018,bendory,edidin2019,huangJoC}.  To address this issue, a common approach to ensuring uniqueness is to use multiple masks that introduce redundancy into the acquired data. This setup is called the {\em coded diffraction pattern} (CDP) model \cite{candes2015cdp,gross2017improved}. Specifically, fix an unknown signal $\vx_\star \in \C^n\setminus\{\bm0\}$ with $n\ge2$. CDP phase retrieval aims to recover $\vx_\star$ from the noisy measurements 
\begin{equation*}
 y_{\ell,k}= \Abs{\sum_{t=0}^{n-1} x_\star(t) \bar{d}_\ell(t) e^{-\frac{2\pi i k t}{n}}}^2 + \xi_{\ell,k}, \quad 0 \le k < n, \; 1 \le \ell \le L,
\end{equation*}
{Here, $\vd_\ell \in \C^n$ are the $L$ known masks, $\xi_{\ell,k}$ are real-valued intensity perturbations, and $\bar d_\ell(t)$ denotes the complex conjugate of $d_\ell(t)$. With $m=nL$, the model can be written as}
\begin{equation}  \label{eq:model}
 y_{\ell,k}=|\bm f_k^*{\bm D}_\ell^*\bm x_\star|^2+\xi_{\ell,k},
 \qquad \bm\xi=(\xi_{\ell,k})_{\ell,k}\in\mathbb R^m, \quad {0 \le k < n, \; 1 \le \ell \le L},
\end{equation}
where 
\begin{equation}
 \bm f_k=(e^{2\pi\mathrm i kj/n})_{j=0}^{n-1},\qquad
 {\bm D}_\ell=\diag(d_{\ell,0},\ldots,d_{\ell,n-1}).
 \label{eq:fourier-vector0}
\end{equation}
%There are several experimental techniques to generate masked Fourier measurements in optical setups, such as inserting a mask or a phase plate after the object \cite{liu2008phase,Loewen}. 
Phase retrieval has attracted significant attention across diverse fields of physical science and engineering, including X-ray crystallography \cite{Xcry1,Xcry2}, diffraction imaging \cite{chai2010array}, microscopy \cite{miao2008extending}, as well as optics and acoustics \cite{walther1963question, balan2010signal, balan2006signal}.

A common approach to reconstructing \(\bm x_\star\in \C^n\)  from the measurements \(\{y_{\ell,k}\}_{\ell,k}\) in \eqref{eq:model}  is to solve
\begin{equation} \label{eq:minls}
 \min_{\bm z\in\mathbb C^n}\quad \frac1{2m}\sum_{\ell=1}^L\sum_{k=0}^{n-1}
 \bigl(|\bm f_k^*{\bm D}_\ell^*\bm z|^2-y_{\ell,k}\bigr)^2.
\end{equation}
Several algorithms with provable guarantees have been developed for \eqref{eq:minls}, including the seminal work of Cand\`es, Li, and Soltanolkotabi \cite{candes2015wf}. They introduced the Wirtinger flow (WF) method, which combines spectral initialization with gradient updates for \eqref{eq:minls}, and showed that, under random masking with $L\ge O(\log^4 n)$, WF converges linearly to the target solution \(\bm x_\star\) with high probability.  Recently,  Li, Li, and Xia \cite{li2022sampling} analyzed resampled truncated Wirtinger flow and obtained relative accuracy \(\epsilon\) using \(O(\log(1/\epsilon)\log^2 n)\) masks in total.

To date, almost all  provable nonconvex methods for CDP phase retrieval use first-order updates. The least-squares structure of \eqref{eq:minls} makes Gauss--Newton a natural second-order method. Establishing its convergence, however, presents two difficulties. First, phase invariance forces the Jacobian at each iterate \(\bm z_t\) to annihilate \(\mathrm i\bm z_t\), so the linearized subproblem has nonunique solutions whenever \(\bm z_t\ne\bm0\). Choosing the minimum-norm solution resolves this ambiguity, but the analysis must account for the resulting null direction. Second, Fourier measurements generated by the same mask share its random entries. With fixed data, the iterates also depend on the masks defining their Jacobians, so concentration estimates at independent deterministic points cannot be applied directly along the trajectory. Resampling simplifies this dependence. For example, WF uses fresh masks during its theoretical initialization procedure, whereas the analysis in \cite{li2022sampling} uses fresh masks at each update. However, this device changes the data-use protocol and makes the total number of masks depend on the number of independent blocks.

These observations motivate the following question: 

{\em Can Gauss--Newton achieve quadratic convergence for coded diffraction phase retrieval using a fixed polylogarithmic collection of masks?}

\subsection{Related work}
\subsubsection{Convex recovery for CDP phase retrieval.}
Over the past decade, numerous algorithms with provable performance guarantees have been developed for the CDP phase retrieval problem. One prominent approach is convex relaxation, which lifts the unknown signal to the rank-one Hermitian matrix \(\bm x_\star\bm x_\star^*\) and formulates recovery as a semidefinite program. Cand\`es, Li, and Soltanolkotabi \cite{candes2015cdp} established exact recovery from \(O(\log^4 n)\) random masks. Gross, Krahmer, and Kueng \cite{gross2017improved} subsequently reduced the required number of masks to \(L \ge C\omega\log^2 n\), with failure probability at most \(e^{-\omega}\). More recently, Huang and Li \cite{huangGao} further improved this bound to the optimal sampling rate of \(O(\log n)\). For noisy CDP measurements, Huang, Wen, and Zhang \cite{huang2026cdp} established estimation error bounds for PhaseLift under both adversarial and subgaussian noise, making progress toward Soltanolkotabi's conjecture. Despite these theoretical advances, the computational cost of solving the resulting semidefinite programs remains a major limitation of convex methods, particularly for large-scale problems.

\subsubsection{First-order methods for CDP phase retrieval.}
To improve computational efficiency, another line of research directly optimizes a nonconvex loss function in the signal space. A pioneering example is the WF method \cite{candes2015wf}, which requires \(O(\log^4 n)\) random masks for the theoretical guarantee. The CDP guarantee for WF differs from its Gaussian-measurement guarantee: it uses resampling during initialization before subsequent geometrically convergent updates. Li, Li, and Xia \cite{li2022sampling} analyzed resampled truncated Wirtinger flow for masked Fourier measurements using a truncated spectral initializer. Li and Li \cite{li2025taf} studied revised truncated amplitude flow for CDPs, proving noiseless convergence and stability under bounded amplitude noise. Their analysis also uses resampling \cite{li2025taf}.  For these results, the number of masks per independent block must be distinguished from the total acquisition budget.

\subsubsection{Gauss--Newton methods for phase retrieval.}
The Gauss--Newton method is widely used for nonlinear least-squares problems.  For phase retrieval, Gao and Xu \cite{gao2017phaseless} established quadratic convergence with resampling for real-valued signals measured by complex Gaussian sensing vectors. Huang \cite{huang2026near} extended the analysis to complex-valued signals using the minimum-norm Gauss--Newton method and proved near-quadratic convergence without sample splitting under complex Gaussian measurements.  That analysis uses leave-one-out arguments to control the data-dependent trajectory. In contrast, the Fourier sensing vectors within a single CDP mask are dependent, and controlling the approximation error after removing an entire mask is difficult when only polylogarithmically many masks are available. Thus, the Gaussian result does not directly yield a fixed-data CDP guarantee, and its leave-one-out argument does not directly apply to the CDP setting considered here.

\subsection{{Our contributions}}

To our knowledge, existing nonconvex methods with provable guarantees for CDP phase retrieval use first-order updates, and their analyses rely heavily on resampling to ensure independence between the random masks and the iterates.  In this paper, we analyze the minimum-norm Gauss--Newton method and establish quadratic convergence guarantees, with particular emphasis on the fixed-data setting.  Because the iterates depend on the masks and Fourier measurements generated by the same mask share its random entries, concentration estimates at independent deterministic points cannot be applied directly along a fixed-data trajectory.  To overcome this challenge,  at each iterate $\bm z_t$, we partition the measurements $\dkh{{\bm D}_{\ell}\bm f_k}_{\ell, k}$ into the ``bad" set  \( S=\{(\ell,j):|\bm f_j^*{\bm D}_\ell^*\xkh{\vz_t-\widetilde{\vx}_\star}|>c_0\}\) and its ``good" complement $S^c$, for a constant $c_0>0$. 
When $\bm z_t$ is sufficiently close to $\vx_\star$, we show that $S$ has small cardinality and control the aggregate contribution of the ``bad" measurements uniformly over all such $S$, using bounds for suprema of chaos processes from \cite{kmr2014}.  The resampled analysis provides intuition for the fixed-data analysis. We therefore study the minimum-norm Gauss--Newton method under both data-use protocols. Our main results are:
\begin{enumerate}
\item \emph{Resampled convergence.}  We prove noisy spectral-initialization and one-step quadratic-error bounds using \(O(\log^2 n)\) masks per independent block. In particular, in the noiseless case, the minimum-norm Gauss--Newton method achieves relative accuracy \(\epsilon\) using \(O(\log\log(1/\epsilon)\log^2 n)\) masks.
\item \emph{Fixed-data convergence.} Under mild conditions on the noise vector, with \(O(\log^6(en))\) masks in total, spectral initialization enters a neighborhood of relative radius proportional to \(\log^{-5/2}(nL)\), and the minimum-norm Gauss--Newton method satisfies a linear contraction bound. Under a stronger noise condition, it also satisfies a quadratic error recursion after \(O(\log n)\) iterations. In particular, in the noiseless case, the minimum-norm Gauss--Newton method converges linearly at first and enters a quadratic convergence phase after \(O(\log n)\) iterations, without acquiring new masks.
\end{enumerate}
 Numerical experiments examine recovery accuracy, convergence, and noise robustness for synthetic signals and a natural image. 
 %The uniform bound for $\mbox{sup}_{S} \left\|\frac1m\sum_{(\ell,k)\in S} {\bm D}_\ell\bm f_k\bm f_k^*{\bm D}_\ell^*\right\|_2$ established in Lemma~\ref{lem:subset} is also of independent interest. 

\subsection{Notation}
Vectors are written in bold, with scalar components denoted by the
corresponding subscripts. The symbols \((\cdot)^\top\),
\((\cdot)^*\), and \(\overline{(\cdot)}\) denote transpose,
conjugate transpose, and entrywise conjugation, respectively;
\(\Rea\) and \(\Ima\) denote real and imaginary parts.
For a vector, \(\|\cdot\|_2\) is the Euclidean norm; for a
matrix or a linear operator, it is the induced operator norm.
The Frobenius norm is denoted by \(\|\cdot\|_F\).
For a vector, \(\|\cdot\|_\infty\) denotes its largest component
magnitude.
The notation \(\supp\bm v\) denotes the set of nonzero
coordinates of \(\bm v\), \(|S|\) is the cardinality of a finite
set, and \(\diag\) forms a diagonal matrix from its arguments.
For self-adjoint operators, \(\bm A\preceq\bm B\) means that
\(\bm B-\bm A\) is positive semidefinite.  We write \(\E\) and \(\Prob\) for expectation and probability.
The constants \(C_M,C_{M,\beta},c_{M,\beta}>0\) may change between displays.  The notation \(O(\cdot)\)
suppresses constants independent of the dimension.

We view \(\mathbb C^n\) as a real Euclidean space with inner product
\begin{equation}
 \langle\bm u,\bm v\rangle_{\mathbb R}=\Rea(\bm u^*\bm v).
 \label{eq:real-inner-product}
\end{equation}
Adjoints and Moore--Penrose pseudoinverses of real-linear Jacobians
are taken with respect to this real structure.  Multiplying \(\bm x_\star\) by \(e^{\mathrm i\theta}\), for any \(\theta\in\mathbb R\), leaves
all measurements unchanged. Recovery is therefore understood up to
this unavoidable global phase. We define the distance modulo global phase by
\begin{equation}
 \dist(\bm z,\bm x_\star)=\min_{\theta\in\mathbb R}
     \|\bm z-e^{\mathrm i\theta}\bm x_\star\|_2.
 \label{eq:phase-distance}
\end{equation}
For \(\bm z^*\bm x_\star\ne0\), let
\(\widetilde{\bm x}_\star(\bm z)\) be the representative satisfying
\begin{equation}
 \bm z^*\widetilde{\bm x}_\star(\bm z)=|\bm z^*\bm x_\star|\in\mathbb R_+.
 \label{eq:alignment}
\end{equation}
If \(\bm z^*\bm x_\star=0\), choose any representative attaining
the phase distance.

Throughout this paper, we assume that all \(nL\) mask entries are independent copies of a complex random variable \(d\) satisfying
\begin{equation}
 \begin{gathered}
 d\overset{\mathrm{law}}{=}-d,\qquad \E d=\E d^2=0,\\
 \E|d|^2=1,\qquad \E|d|^4=2,\qquad |d|\le M.
 \end{gathered}
 \label{eq:mask}
\end{equation}
The first condition states that \(d\) and \(-d\) have the same distribution, while the normalization \(\E|d|^2=1\) is imposed for convenience. These assumptions are standard in phase retrieval from random coded diffraction patterns; see \cite{candes2015cdp,gross2017improved}.

\subsection{Organization}
Section~\ref{sec:algorithm} describes spectral initialization and the minimum-norm Gauss--Newton update. Sections~\ref{sec:resampled-results} and~\ref{sec:fixed-results} state the resampled and fixed-data convergence guarantees, respectively. Section~\ref{sec:experiments} presents numerical experiments. The resampled results are proved in Section~\ref{sec:resampled-proof}, and the fixed-data results in Section~\ref{sec:fixed-proof}. Section~\ref{sec:discussion} discusses the scope and limitations of the guarantees. Appendix~\ref{sec:proLesubset} proves the uniform subset estimate, and Appendix~\ref{app:auxiliary} collects auxiliary lemmas.

\section{Gauss--Newton algorithm}
\label{sec:algorithm}

\subsection{Gauss--Newton refinement}
For \(k=0,\ldots,n-1\), let \(\bm{f}_k\in\mathbb C^n\) be the unnormalized discrete Fourier vector
\begin{equation}
\bm{f}_k =\xkh{1,\exp(2\pi\mathrm{i}k/n), \ldots, \exp(2\pi\mathrm{i}k(n-1)/n)}^\top.
\label{eq:fourier-vector}
\end{equation}
Fourier orthogonality gives \(\sum_{k=0}^{n-1}\bm{f}_k\bm{f}_k^*=nI_n\). We write the \(\ell\)-th random mask as
\begin{equation}
{\bm D}_\ell =
\diag(d_{\ell,0},\ldots,d_{\ell,n-1}),
\qquad \ell=1,\ldots,L.
\label{eq:fourier-vector1}
\end{equation}
Let \(m=nL\). A common approach to reconstructing \(\bm x_\star\)  from the measurements \(\{y_{\ell,k}\}_{\ell,k}\) in \eqref{eq:model}  is to solve
\begin{equation}
 \min_{\bm z\in\mathbb C^n}\quad
 \Phi(\bm z)=\frac1{2m}\sum_{\ell=1}^L\sum_{k=0}^{n-1}
 \bigl(|\bm f_k^*{\bm D}_\ell^*\bm z|^2-y_{\ell,k}\bigr)^2
 =\frac12\|\bm F(\bm z)\|_2^2,
 \label{eq:objective}
\end{equation}
{where the $(\ell,k)$-th component of $\bm F(\vz)$ is}
\begin{equation}
 [\bm F(\bm z)]_{\ell,k}
 =\frac{|\bm f_k^*{\bm D}_\ell^*\bm z|^2-y_{\ell,k}}{\sqrt m}.
 \label{eq:residual}
\end{equation}
To solve this nonlinear least-squares problem, we apply the
Gauss--Newton method. Using Wirtinger calculus, the real first-order
expansion at the current iterate \(\bm z_t\) is
\begin{equation}
 \bm F(\bm z_t+\bm h)
 =\bm F(\bm z_t)+\mathcal J_{\bm z_t} \vh  +O(\|\bm h\|_2^2),
 \label{eq:taylor}
\end{equation}
where the real-linear operator \(\mathcal J_{\bm z_t}:\mathbb C^n\to\mathbb R^m\)
is defined by
\begin{equation}
 \mathcal J_{\bm z_t}\bm v
 =\frac2{\sqrt m}
 \begin{bmatrix}
 \Rea(\overline{\bm f_0^*{\bm D}_1^*\bm z_t}\,\bm f_0^*{\bm D}_1^*\bm v)\\
 \vdots\\
 \Rea(\overline{\bm f_{n-1}^*{\bm D}_L^*\bm z_t}\,
                       \bm f_{n-1}^*{\bm D}_L^*\bm v)
 \end{bmatrix}.
 \label{eq:jac}
\end{equation}
{At the current iterate \(\bm z_t\), Gauss--Newton replaces the residual by its first-order approximation and solves}
\begin{equation}
\mbox{min}_{\bm h\in\mathbb C^n}\quad
 \frac12\left\| \mathcal J_{\bm z_t} \vh  +\bm F(\bm z_t)\right\|_2^2,
 \label{eq:linearized}
\end{equation}
{The next iterate is $\vz_{t+1}:=\vz_t+\vh_t$, where $\vh_t$ solves \eqref{eq:linearized}. Since \(\mathcal J_{\bm z_t}(\mathrm i\bm z_t)=\bm0\), the Jacobian is rank deficient and the least-squares solution is nonunique whenever \(\bm z_t\ne\bm0\). We select the minimum-norm solution, which gives}
\begin{equation}
 \bm z_{t+1}  =\bm z_t-\mathcal J_{\bm z_t}^{\dagger}\bm F(\bm z_t).
 \label{eq:gn}
\end{equation}
The pseudoinverse is taken with respect to the real inner product given in \eqref{eq:real-inner-product}. An equivalent matrix representation is obtained by defining
\begin{equation}
 J_{\bm z_t}=\frac1{\sqrt m}
 \begin{bmatrix}
 \bm z_t^*{\bm D}_1\bm f_0\bm f_0^*{\bm D}_1^*
 &\bm z_t^\top\overline{{\bm D}_1\bm f_0}({\bm D}_1\bm f_0)^\top\\
 \vdots&\vdots\\
 \bm z_t^*{\bm D}_L\bm f_{n-1}\bm f_{n-1}^*{\bm D}_L^*
 &\bm z_t^\top\overline{{\bm D}_L\bm f_{n-1}}({\bm D}_L\bm f_{n-1})^\top
 \end{bmatrix}
 \in\mathbb C^{m\times2n},
 \label{eq:jacobian}
\end{equation}
so that $\mathcal J_{\bm z_t} \vh={J_{\bm z_t}} \begin{bmatrix} \bm h\\ \overline{\bm h} \end{bmatrix} $. Therefore, the update \eqref{eq:gn} can also be written as
 \begin{equation}
  \bm z_{t+1}  =\bm z_t-\bigl(J_{\bm z_t}^{\dagger}\bm F(\bm z_t)\bigr)_{1:n},
  \label{eq:gn1}
\end{equation}
where \((\cdot)_{1:n}\) denotes the first \(n\) entries, and $ J_{\bm z_t}^{\dagger}$ is the pseudoinverse of the matrix $J_{\bm z_t}$.

\subsection{Spectral initialization}
Due to the nonconvexity of \(\Phi\), a good initial point is required.
We adopt the spectral method used in \cite{candes2015wf}. Set
\begin{equation}
 Y=\frac1m\sum_{\ell=1}^L\sum_{k=0}^{n-1}
       y_{\ell,k}{\bm D}_\ell\bm f_k\bm f_k^*{\bm D}_\ell^*,
 \qquad \lambda=\frac1m\sum_{\ell=1}^L\sum_{k=0}^{n-1}y_{\ell,k}.
 \label{eq:init}
\end{equation}
In the absence of noise, under the admissible mask assumption \eqref{eq:mask}, the population matrix has the form
\[
 \E Y=\bm x_\star\bm x_\star^*+\|\bm x_\star\|_2^2I_n,
 \qquad \E\lambda=\|\bm x_\star\|_2^2.
\]
{Thus \(\bm x_\star\) is a leading eigenvector of \(\E Y\), motivating the estimator \(\sqrt\lambda\,\bm v_0\), where \(\bm v_0\) is a unit leading eigenvector of \(Y\). In the noiseless case, \cite{candes2015wf} shows that \(O(\log^3 n)\) masks suffice for \(\dist(\sqrt\lambda\,\bm v_0,\bm x_\star)\le\|\bm x_\star\|_2/8\) with high probability. Under the noise conditions below, \(\lambda>0\) also holds with high probability, so the initializer is well defined.}

{Our minimum-norm Gauss--Newton method combines spectral initialization with the update \eqref{eq:gn}, as summarized in Algorithm~\ref{rs:alg:sampled-gn}.}

\begin{algorithm}[htbp]
\caption{{Minimum-norm} Gauss--Newton for coded diffraction phase retrieval}
\label{rs:alg:sampled-gn}
\begin{algorithmic}[1]
\Require Masks ${\vd_\ell \in \C^n,\ \ell=1,\ldots,L}$; observations ${\{y_{\ell,k}\}_{1\le\ell\le L,\,0\le k<n}}$; the maximum number of
iterations $T$.
\State Compute \(\bm z_0=\sqrt\lambda\,\bm v_0\), where \(\bm v_0\) is a unit leading eigenvector of \(Y\) and $\lambda$ is defined in
\eqref{eq:init}.
\For{\(t=0,\ldots,T-1\)}
\[
 \bm z_{t+1}=\bm z_t- \mathcal J_{\bm z_t}^{\dagger}\bm F(\bm z_t),
\]
where $\mathcal J_{\bm z_t}$ and ${\bm F}(\bm z_t)$ are given in \eqref{eq:jac} and \eqref{eq:residual}, respectively.
\EndFor
\Ensure \(\bm z_T\).
\end{algorithmic}
\end{algorithm}

A direct implementation of the minimum-norm step uses a singular value decomposition, but explicitly forming the pseudoinverse can be computationally expensive. In practice, the linearized subproblem can instead be solved approximately using an iterative least-squares method. We use LSQR in the numerical experiments. Each LSQR iteration requires one application of the Jacobian and one application of its adjoint. Exploiting the CDP structure, we compute both products using fast Fourier transforms (FFTs). Each product requires $O(L)$ Fourier transforms of length $n$ and $O(m)$ additional operations, resulting in a cost of $O(Ln\log n)=O(m\log n)$. Thus, with at most $T_0\geq 1$ inner LSQR iterations, each Gauss--Newton iteration costs $O(T_0m\log n)$. For a fixed inner iteration budget, such as $T_0=10$, this becomes $O(m\log n)$, matching the asymptotic cost of an FFT-based gradient iteration.

\section{Convergence with resampling}
\label{sec:resampled-results}
{This section establishes a noisy quadratic error bound with \(O(\log^2 n)\) masks per block and a complete noiseless convergence guarantee with resampling. Section~\ref{sec:fixed-results} treats the linear-to-quadratic behavior of Algorithm~\ref{rs:alg:sampled-gn} with fixed data.}

{We first extend the noiseless spectral-initialization guarantee in \cite{candes2015wf} to bounded intensity noise and reduce the number of masks from \(O(\log^3 n)\) to \(O(\log^2 n)\).}

\begin{theorem} \label{rs:cor:spectral-initialization}
Let ${\vx_\star\ne\bm0}$ be an unknown signal. Assume that the entries of the masks $\dkh{\vd_l}_{l=1}^L$ are independent copies of a complex random variable $d$ satisfying \eqref{eq:mask}.  Consider the noisy measurements $y_{\ell,k}=\abs{\vf_k^* {\bm D}_{\ell}^* \vx_\star}^2+\xi_{\ell,k}, 1\le \ell\le L, {0\le k<n}$.  
If the noise vector $\bm\xi\in \mathbb R^{nL}$ satisfies
\[
\|\bm\xi\|_\infty\le\frac1{128 M^2}\|\bm x_\star\|_2^2,
\qquad
\frac{\|\bm\xi\|_1}{nL}\le\frac1{128}\|\bm x_\star\|_2^2,
\]
then with probability at least \(1-2n^{-3}\), the initial point $\vz_0:=\sqrt\lambda \vv_0$  obeys 
\begin{equation}
\dist(\bm z_0,\bm x_\star) <\frac18\|\bm x_\star\|_2,
\label{eq:log-squared-initialization}
\end{equation}
provided  \(L\ge C_{M} \log^2 n\). Here, \(\bm v_0\) is a unit-norm leading eigenvector of \(Y\), and \(C_M\) is a constant depending only on \(M\), the bound specified in \eqref{eq:mask}.
\end{theorem}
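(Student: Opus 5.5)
We reduce to a spectral-perturbation statement. Normalize $\|\bm x_\star\|_2=1$ and write $Y=Y_0+Y_\xi$. Here
\[
Y_0=\frac1m\sum_{\ell,k}|\bm f_k^*{\bm D}_\ell^*\bm x_\star|^2{\bm D}_\ell\bm f_k\bm f_k^*{\bm D}_\ell^*,
\qquad
Y_\xi=\frac1m\sum_{\ell,k}\xi_{\ell,k}{\bm D}_\ell\bm f_k\bm f_k^*{\bm D}_\ell^*,
\]
and similarly $\lambda=\lambda_0+\lambda_\xi$.

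\emph{Deterministic reduction.} If $\|Y-\E Y_0\|_2\le\delta$ with $\E Y_0=\bm x_\star\bm x_\star^*+I_n$ and eigengap $1$, then Davis--Kahan (or the elementary $\sin\theta$ argument in \cite{candes2015wf}) gives $|\bm v_0^*\bm x_\star|^2\ge 1-2\delta/(1-\delta)$, roughly. Combined with $|\lambda-1|\le\delta'$, this yields
\[
\dist(\sqrt\lambda\,\bm v_0,\bm x_\star)^2=\lambda+1-2\sqrt\lambda\,|\bm v_0^*\bm x_\star|.
\]
It therefore suffices to show $\delta,\delta'$ at most a small absolute constant, e.g.\ $\delta\le 1/60$ and $\delta'\le1/60$, so that the distance is $<1/8$.

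\emph{Noise terms.} The noise contributions are deterministic given the masks. Since $\sum_k{\bm D}_\ell\bm f_k\bm f_k^*{\bm D}_\ell^*=n{\bm D}_\ell{\bm D}_\ell^*\preceq nM^2I$, we get
\[
\|Y_\xi\|_2\le \|\bm\xi\|_\infty\frac1m\Bigl\|\sum_{\ell,k}{\bm D}_\ell\bm f_k\bm f_k^*{\bm D}_\ell^*\Bigr\|_2\le M^2\|\bm\xi\|_\infty\le 1/128,
\]
and $|\lambda_\xi|\le\|\bm\xi\|_1/m\le1/128$. This is exactly why the two noise hypotheses carry the factors $M^2$ and $1$, and these terms consume part of the constant budget.

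\emph{Concentration of $\lambda_0$.} We have
\[
\lambda_0=\frac1m\sum_{\ell,k}|\bm f_k^*{\bm D}_\ell^*\bm x_\star|^2=\frac1L\sum_\ell\sum_t|d_{\ell,t}|^2|x_\star(t)|^2
\]
by Parseval. This is an average of $L$ independent bounded variables with mean $1$, each bounded by $M^2$. Hoeffding gives $|\lambda_0-1|\le C M^2\sqrt{\log n/L}$ with probability $1-n^{-3}$, which is small once $L\ge C_M\log n$.

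\emph{Concentration of $Y_0$ (main obstacle).} Write
\[
Y_0=\frac1L\sum_\ell X_\ell,\qquad X_\ell=\frac1n\sum_k|\bm f_k^*{\bm D}_\ell^*\bm x_\star|^2{\bm D}_\ell\bm f_k\bm f_k^*{\bm D}_\ell^*,
\]
a sum of i.i.d.\ Hermitian matrices with $\E X_\ell=\bm x_\star\bm x_\star^*+I$. By convolution identities, $X_\ell={\bm D}_\ell\, C_\ell\,{\bm D}_\ell^*$, where $C_\ell$ is the circulant matrix whose entries are the autocorrelation of ${\bm D}_\ell^*\bm x_\star$. Its operator norm is $\max_k|\bm f_k^*{\bm D}_\ell^*\bm x_\star|^2$, which by Hoeffding over the bounded symmetric entries is $\le CM^2\log n$ with probability $1-n^{-4}$ after a union bound over $k$. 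On this event $\|X_\ell\|_2\le CM^4\log n$. The truncation costs only a negligible mean shift, because $\E\|X_\ell\|_2\mathbf 1_{\text{bad}}\lesssim n^{-3}$ from the deterministic bound $\|X_\ell\|\le nM^4$. The variance proxy satisfies
\[
\|\E X_\ell^2\|_2\le \max\|X_\ell\|_2\cdot\|\E X_\ell\|_2\le 2CM^4\log n.
\]
Matrix Bernstein then yields
\[
\|Y_0-\E Y_0\|_2\lesssim \sqrt{\frac{M^4\log n\cdot\log n}{L}}+\frac{M^4\log n\cdot\log n}{L}
\]
with probability $1-n^{-3}$. This is below the required small constant when $L\ge C_M\log^2 n$, improving the $\log^3 n$ of \cite{candes2015wf}, which came from a cruder per-mask bound. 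The delicate part is making the truncation and the intensity bound uniform in $k$ while keeping the failure probability at $n^{-3}$. A union bound over the three events gives total failure probability at most $2n^{-3}$ after adjusting constants. Adding the deterministic noise bounds to the random deviations and feeding the result into the reduction step completes the plan.
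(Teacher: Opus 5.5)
Your architecture matches the paper's: you split $Y=Y_0+Y_\xi$ and $\lambda=\lambda_0+\lambda_\xi$, bound the noise deterministically by $M^2\|\bm\xi\|_\infty$ and $\|\bm\xi\|_1/m$, apply Hoeffding to $\lambda_0$ and truncated matrix Bernstein to $Y_0$, and finish with an eigenvector perturbation step.

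\textbf{Main gap: the eigenvector step.} You quantified it with a bound too weak for the theorem's fixed noise budget. The estimate you quote, $|\bm v_0^*\bm x_\star|^2\ge1-2\delta/(1-\delta)$ (or the Rayleigh-quotient version $1-2\delta$), controls $\sin^2\theta$ linearly in $\delta$. Hence it controls $\dist(\bm v_0,\bm x_\star)$ only like $\sqrt{2\delta}$.

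For every $\lambda$ you have $\dist^2(\sqrt\lambda\,\bm v_0,\bm x_\star)=\lambda+1-2\sqrt\lambda\,|\bm v_0^*\bm x_\star|\ge1-|\bm v_0^*\bm x_\star|^2$. So the radius you can certify this way is never below $\sqrt{2\delta/(1-\delta)}$. Your sample choice $\delta=1/60$ already gives about $0.18>1/8$.

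More importantly, $\delta$ cannot be pushed below $1/128$. Your own noise bound contributes up to $1/128$, and that part does not shrink when $C_M$ is enlarged. At $\delta\ge1/128$ the certified radius is at least $\sqrt{2/127}>1/8$, so the conclusion does not follow.

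You need the genuine $\sin\Theta$ bound, which is linear in $\delta$. Weyl gives $\lambda_1(Y)\ge2-\delta$, and the rest of the spectrum of $\E Y_0$ equals $1$. Davis--Kahan then yields $\dist(\bm v_0,\bm x_\star)\le\sqrt2\,\delta/(1-\delta)$. Take $\delta,\delta'\le1/64$, i.e., $1/128$ from concentration and $1/128$ from noise. The triangle inequality then gives $\dist(\bm z_0,\bm x_\star)\le\dist(\bm v_0,\bm x_\star)+|\sqrt\lambda-1|\le\sqrt2/63+1/64<1/8$. This is exactly how the paper closes the argument.

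\textbf{Smaller gap: truncation across masks.} You make $\max_k|\bm f_k^*{\bm D}_\ell^*\bm x_\star|^2\le CM^2\log n$ hold with probability $1-n^{-4}$ for a single mask. The truncated decomposition, however, needs this for all $\ell$ simultaneously, which costs $Ln^{-4}$. The theorem puts no upper bound on $L$, so this is not $O(n^{-3})$, and adjusting constants does not repair it.

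The paper truncates at level $8M^2(\beta+11)\log(nL)$. The union over all $nL$ pairs then costs $4Lm^{-2\beta-21}$. The resulting rate $\sqrt{\log(nL)/L}+\log^2(nL)/L$ is still small under $L\ge C_M\log^2 n$, because $\log^2 L/L\to0$.

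\textbf{Variance proxy: fine as is.} Your variance bound differs from the paper's but works. Apply it to the truncated blocks: $(X_\ell\mathbf 1_{\mathcal E_\ell})^2\preceq R\,X_\ell$, where $R\lesssim M^4\log(nL)$ bounds $\|X_\ell\mathbf 1_{\mathcal E_\ell}\|_2$. This gives $\sigma^2\lesssim M^4\log(nL)$. That is cruder than the paper's dimension-free $\|\E B_\ell^2\|_2\le C_M$, which comes from eighth moments and Cauchy--Schwarz. It costs nothing, though, because the $R\log(nL)/L$ term in Bernstein already forces $L\gtrsim\log^2$.
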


We next establish the following one-step convergence guarantee.

\begin{theorem} \label{rs:thm:complete}
Let $\vx_\star \neq \bm 0$ be an unknown signal. Assume that the entries of the masks $\dkh{\vd_l}_{l=1}^L$ are independent copies of a complex random variable $d$ satisfying \eqref{eq:mask}.  Consider the noisy measurements $y_{\ell,k}=\abs{\vf_k^* {\bm D}_{\ell}^* \vx_\star}^2+\xi_{\ell,k}, 1\le \ell\le L, {0\le k<n}$.   Suppose that the current {iterate $\vz_t \in \C^n$} is independent of the random masks $\dkh{\vd_l}_{l=1}^L$ and satisfies ${\dist(\bm z_t,\bm x_\star)} \le 1/8\cdot \|\bm x_\star\|_2$. For any \(\delta\in(0,1/2]\),  if 
 $L\ge {C_{M,\delta}} \log^2 n$ for some constant ${C_{M,\delta}}$ depending only on $\delta,M$, then
 with probability at least \(1-2n^{-3}\), the {iterate $\vz_{t+1}$} given by \eqref{eq:gn} obeys
\begin{equation*}
\frac{\dist(\bm z_{t+1},\bm x_\star)}{\norm{\vx_\star}}
\le \kappa_\delta \xkh{\frac{\dist(\bm z_{t},\bm x_\star)}{\norm{\vx_\star}}}^2 +
\sqrt\frac{ 2}{{(1-\delta)}}
\frac{\|\bm\xi \|_2} {\sqrt{nL}\,\|\bm x_\star\|_2^2},
\end{equation*}
where $\kappa_\delta =2\sqrt{\frac{2+\delta}{2(1-\delta)}} \le \sqrt{10}$, {and $\bm\xi=(\xi_{\ell,k})_{1\le\ell\le L,\,0\le k<n}$ is the noise vector.}
\end{theorem}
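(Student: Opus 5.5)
Write $\bm z=\bm z_t$, $\bm x=\widetilde{\bm x}_\star(\bm z)$ (aligned so that $\bm z^*\bm x\ge0$), $\bm h=\bm z-\bm x$, and $\bm a_{\ell,k}={\bm D}_\ell\bm f_k$. Start from the exact algebraic identity
\[
|\bm a^*\bm z|^2-|\bm a^*\bm x|^2=2\Rea(\overline{\bm a^*\bm z}\,\bm a^*\bm h)-|\bm a^*\bm h|^2 .
\]
It gives $\bm F(\bm z)=\mathcal J_{\bm z}\bm h-\bm r-\bm\xi/\sqrt m$, where $r_{\ell,k}=|\bm a_{\ell,k}^*\bm h|^2/\sqrt m$. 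Hence
\[
\bm z_{t+1}-\bm x=\bm h-\mathcal J_{\bm z}^\dagger\mathcal J_{\bm z}\bm h+\mathcal J_{\bm z}^\dagger(\bm r+\bm\xi/\sqrt m)=P\bm h+\mathcal J_{\bm z}^\dagger(\bm r+\bm\xi/\sqrt m),
\]
where $P$ is the real-orthogonal projection onto $\ker\mathcal J_{\bm z}$. The plan is to bound the three pieces separately.

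\textbf{The null component and the lower singular value.} First show that on the event below, $\ker\mathcal J_{\bm z}=\mathrm{span}_{\mathbb R}\{\mathrm i\bm z\}$. Then $P\bm h$ is the real projection of $\bm h$ onto $\mathrm i\bm z$, i.e. $\Ima(\bm z^*\bm h)/\|\bm z\|_2^2\cdot\mathrm i\bm z$. The alignment gives $\bm z^*\bm x\in\mathbb R$, so $\Ima(\bm z^*\bm h)=\Ima(\|\bm z\|^2-\bm z^*\bm x)=0$ and $P\bm h=\bm0$. This is why the minimum-norm choice costs nothing.

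Next bound $\|\mathcal J_{\bm z}^\dagger\|$ by a lower bound on $\mathcal J_{\bm z}^*\mathcal J_{\bm z}$ restricted to $(\mathrm i\bm z)^{\perp_{\mathbb R}}$. We have
\[
\|\mathcal J_{\bm z}\bm v\|_2^2=\frac4m\sum(\Rea(\overline{\bm a^*\bm z}\bm a^*\bm v))^2 .
\]
Since $\bm z$ is independent of the masks, this is a quadratic form in the lifted matrix $\frac1m\sum(\ldots)$. Its expectation under \eqref{eq:mask} is computable. Using $\E|d|^4=2$, $\E d^2=0$ and symmetry, one gets
\[
\E\|\mathcal J_{\bm z}\bm v\|^2=2\bigl(\|\bm z\|^2\|\bm v\|^2+(\Rea\bm z^*\bm v)^2-(\Ima\bm z^*\bm v)^2\bigr)+(\text{corrections}).
\]
For $\bm v\perp_{\mathbb R}\mathrm i\bm z$ this is at least $2\|\bm z\|^2\|\bm v\|^2$. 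I would then prove concentration of $\frac1m\sum|\bm a^*\bm z|^2\bm a\bm a^*$ and $\frac1m\sum(\bm a^*\bm z)^2\bar{\bm a}\bm a^*$ around their means within $\delta\|\bm z\|^2$. Doing this mask-by-mask via a matrix Bernstein/Hoeffding over $L$ independent blocks, with each block's norm controlled through the bound $\max_k|\bm a_{\ell,k}^*\bm z|^2\lesssim M^2\log n\|\bm z\|^2$ (Hoeffding), gives $L\gtrsim\log^2 n$. The result is $\sigma_{\min}^2\ge 2(1-\delta)\|\bm z\|^2$, so $\|\mathcal J_{\bm z}^\dagger\|\le(2(1-\delta))^{-1/2}/\|\bm z\|$.

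\textbf{The quadratic and noise terms.} Use $\|\mathcal J_{\bm z}^\dagger\bm r\|=\|(\mathcal J^*\mathcal J)^{\dagger}\mathcal J^*\bm r\|$, with
\[
\mathcal J^*\bm r=\frac{2}{m}\sum|\bm a^*\bm h|^2(\bm a^*\bm z)\bm a ,
\]
and pair it with a test vector. The cleaner route is the crude bound $\|\bm r\|_2^2=\frac1m\sum|\bm a^*\bm h|^4$. Here $\bm h$ depends only on $\bm z$ and $\bm x$, so it is independent of the masks. The fourth-moment identity gives $\E\frac1m\sum_k|\bm a^*\bm h|^4\le 2\|\bm h\|^4\cdot$const, and concentration over $L$ blocks (again needing $\log^2 n$ for the $\log$-size sup over $k$) gives $\|\bm r\|\le\sqrt{2(2+\delta)}\|\bm h\|^2$. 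The noise term is bounded by $\|\mathcal J^\dagger\|\,\|\bm\xi\|/\sqrt m$. Finally, $\|\bm z\|\ge\frac78\|\bm x_\star\|$, and the constant bookkeeping, including converting $1/\|\bm z\|$ to $1/\|\bm x_\star\|$ within the stated $\kappa_\delta$ and $\sqrt{2/(1-\delta)}$ (possibly via $\|\bm z\|\ge\|\bm x\|/2$ generously), yields the claim. Note $\dist(\bm z_{t+1},\bm x_\star)\le\|\bm z_{t+1}-\bm x\|$.

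\textbf{Main obstacle.} The main difficulty is the uniform lower bound on $\mathcal J^*\mathcal J$ over all $\bm v\perp\mathrm i\bm z$ with only $\log^2 n$ masks. The vectors $\bm a_{\ell,k}$ within one mask are dependent, so I would treat each mask's sum $\sum_k$ as a single random matrix. Its expectation is handled exactly via Fourier orthogonality, and its operator norm is bounded by $nM^2\max_k|\bm a^*\bm z|^2$ after truncating on the high-probability event. Matrix Bernstein then needs $L\gtrsim\delta^{-2}M^4\log n\cdot\log n$, with the failure probability split so the total is $2n^{-3}$.
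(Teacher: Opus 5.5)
Your proof is correct. Its skeleton matches the paper's:
\begin{itemize}
\item the identity $\bm F(\bm z)=\mathcal J_{\bm z}\bm h-\bm r-\bm\xi/\sqrt m$;
\item the fact that $\mathcal J_{\bm z}^\dagger\mathcal J_{\bm z}\bm h=\bm h$, because $\bm h\in\mathcal H_{\bm z}$ is orthogonal to the kernel;
\item the bound $\|\mathcal J_{\bm z}^\dagger\|_2\le(2(1-\delta))^{-1/2}\|\bm z\|_2^{-1}$;
\item the fourth-moment control $\|\bm r\|_2^2=\bm h^*\bigl(\frac1m\sum|\bm a^*\bm h|^2\bm a\bm a^*\bigr)\bm h$;
\item the lower bound $\|\bm z\|_2\ge\frac78\|\bm x_\star\|_2$.
\end{itemize}

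The genuine difference is how you get the Jacobian lower bound. The paper writes $\mathcal J_{\bm z}\bm h$ as the lifted measurement of $Z=\bm z\bm h^*+\bm h\bm z^*$ in the tangent space $\mathcal T_{\bm z}$. It then imports the Gross--Krahmer--Kueng robust-injectivity estimate on that tangent space (adapted to complex masks) and uses $\|Z\|_F^2\ge2\|\bm z\|_2^2\|\bm h\|_2^2$ on $\mathcal H_{\bm z}$. It handles the null direction through the restricted Jacobian $\mathsf R_{\bm z}$ of Lemma~\ref{rs:lem:equivalent-update}.

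You instead split $\|\mathcal J_{\bm z}\bm v\|_2^2$ into a covariance form and a pseudo-covariance form. You concentrate both with the same truncated mask-block matrix Bernstein argument used for Lemma~\ref{rs:prop:cdp-regularity}(2), and you dispose of the null direction by $P\bm h=\bm0$. Your route is self-contained and avoids the external injectivity theorem and its complex-mask adaptation. It costs $L\gtrsim\log^2 n$ for this step instead of the $\log n$ the tangent-space estimate gives. That is harmless, since the remainder bound already requires $\log^2 n$ masks.

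A few details need repair.
\begin{itemize}
\item The exact mean is $\E\|\mathcal J_{\bm z}\bm v\|_2^2=2\|\bm z\|_2^2\|\bm v\|_2^2+6(\Rea\bm z^*\bm v)^2-2(\Ima\bm z^*\bm v)^2$, which vanishes at $\bm v=\mathrm i\bm z$, as it must. Your formula omits the nonnegative term $4(\Rea\bm z^*\bm v)^2$, so your conclusion on $\mathcal H_{\bm z}$ is unaffected.
\item Since $(\Rea w)^2=\frac12(|w|^2+\Rea w^2)$ with $w=\overline{\bm a^*\bm z}\,\bm a^*\bm v$, the pseudo-covariance you need is $\frac1m\sum(\overline{\bm a^*\bm z})^2\bar{\bm a}\bm a^*$, with mean $2\bar{\bm z}\bm z^*$. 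With $(\bm a^*\bm z)^2$ as you wrote it, the mean involves $\E d^4$, which \eqref{eq:mask} leaves unconstrained.
\item That pseudo-covariance matrix is not Hermitian, so you need the rectangular (dilation) form of matrix Bernstein. Alternatively, note that $\|\mathcal J_{\bm z}\bm v\|_2^2=\frac4m\sum\langle\bm v,(\bm a^*\bm z)\bm a\rangle_{\mathbb R}^2$ is a single positive semidefinite form on $\mathbb R^{2n}$ and concentrate that instead.
\item Your fourth-moment concentration actually gives $\|\bm r\|_2\le\sqrt{2+\delta}\,\|\bm h\|_2^2$, not $\sqrt{2(2+\delta)}\,\|\bm h\|_2^2$. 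Your looser constant still closes with $\|\bm z\|_2\ge\frac78\|\bm x_\star\|_2$, because $\frac87\sqrt2<2$. However, the ``generous'' bound $\|\bm z\|_2\ge\|\bm x_\star\|_2/2$ would then overshoot $\kappa_\delta$ by a factor of $\sqrt2$.
\end{itemize}
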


\begin{proof}
{See Section~\ref{sec:resampled-onestep-proof}.}
\end{proof}

{To satisfy the independence assumption of Theorem~\ref{rs:thm:complete}, Algorithm~\ref{alg:resampled-gn} uses block \(B_0\) for initialization and a fresh block \(B_{t+1}\) for the update at \(\bm z_t\).}

\begin{algorithm}[H]
\caption{Resampled Gauss--Newton for Coded Diffraction Phase Retrieval}
\label{alg:resampled-gn}
\begin{algorithmic}[1]
\Require Masks ${\vd_\ell \in \C^n,\ \ell=1,\ldots,L}$; observations ${\{y_{\ell,k}\}_{1\le\ell\le L,\,0\le k<n}}$; relative precision ${0<\epsilon<1/8}$.
\State {Set $T=\lceil\log_2\log_2(1/\epsilon)\rceil$, where $\log_2$ denotes the base-two logarithm.}
\State {Form $T+1$ disjoint mask blocks $B_0,\ldots,B_T$, each of size $\lfloor L/(T+1)\rfloor$, discarding any unused masks.}
\State {Set $\vz_0=\sqrt\lambda \vv_0$, where $\lambda=\frac{1}{n |B_0|} \sum_{\ell\in B_0}\sum_{k=0}^{n-1} y_{\ell,k}$, and $\vv_0$ is a unit leading eigenvector of $Y_0:=\frac{1}{n |B_0|} \sum_{\ell\in B_0}\sum_{k=0}^{n-1} y_{\ell,k} ({\bm D}_\ell \vf_k) ({\bm D}_\ell \vf_k)^*$.}
\For{\(t=0,\ldots,T-1\)}
 \[
 \bm z_{t+1}=\bm z_t-\mathcal J_{t+1, \bm z_t}^{\dagger}{\bm F_{t+1}(\bm z_t)},
 \]
  {where $\mathcal J_{t+1,\bm z_t}$ and $\bm F_{t+1}(\bm z_t)$ are given by \eqref{eq:jac} and \eqref{eq:residual} using block $B_{t+1}$ and normalization $1/\sqrt{n|B_{t+1}|}$.}
\EndFor
\Ensure \(\bm z_T\).
\end{algorithmic}
\end{algorithm}

{
\begin{corollary} \label{rs:cor:trajectory}
Let \(\bm x_\star\ne\bm0\) be fixed, and assume the mask model \eqref{eq:mask} with noiseless measurements \(y_{\ell,k}=|\bm f_k^*{\bm D}_\ell^*\bm x_\star|^2\). For \(0<\epsilon<1/8\), set \(T=\lceil\log_2\log_2(1/\epsilon)\rceil\). If \(L\ge C_M(T+1)\log^2 n\), then, with probability at least \(1-2(T+1)n^{-3}\), Algorithm~\ref{alg:resampled-gn} outputs \(\bm z_T\) satisfying
\[
\dist(\bm z_T,\bm x_\star)<\epsilon\|\bm x_\star\|_2.
\]
Here, \(C_M>0\) depends only on \(M\), and the total number of masks is \(O(\log^2 n[1+\log\log(1/\epsilon)])\).
\end{corollary}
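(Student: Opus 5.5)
The plan is to combine Theorem~\ref{rs:cor:spectral-initialization} for block $B_0$ with $T$ applications of Theorem~\ref{rs:thm:complete}, one for each fresh block $B_{t+1}$, and then union bound over the $T+1$ blocks. First fix $\delta=1/2$. Then $\kappa_\delta=2\sqrt{(5/2)/1}=\sqrt{10}$. Since the data are noiseless, $\bm\xi=\bm0$, so the noise hypotheses of Theorem~\ref{rs:cor:spectral-initialization} hold trivially and the additive term in Theorem~\ref{rs:thm:complete} vanishes. Choose $C_M$ as the maximum of the constant in Theorem~\ref{rs:cor:spectral-initialization} and $C_{M,1/2}$, enlarged by a factor (say $2$) to absorb the floor. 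Then $L\ge C_M(T+1)\log^2 n$ gives $|B_t|=\lfloor L/(T+1)\rfloor\ge C'\log^2 n$ for every block, so each theorem applies with $n|B_t|$ in place of $nL$.

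Write $r_t=\dist(\bm z_t,\bm x_\star)/\|\bm x_\star\|_2$. On the event $E_0$ from Theorem~\ref{rs:cor:spectral-initialization} applied to $B_0$, we have $r_0<1/8$. For the induction, note that $\bm z_t$ is a measurable function of the masks in $B_0,\dots,B_t$ only, so it is independent of $B_{t+1}$. Condition on $\bm z_t$. If $r_t\le1/8$, Theorem~\ref{rs:thm:complete}, applied conditionally with the deterministic point $\bm z_t$, gives an event $E_{t+1}$ of conditional probability at least $1-2n^{-3}$ on which $r_{t+1}\le\sqrt{10}\,r_t^2$. Integrating out gives the unconditional bound. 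On $E_0\cap\cdots\cap E_T$, which has probability at least $1-2(T+1)n^{-3}$ by the union bound, we get $\sqrt{10}\,r_{t+1}\le(\sqrt{10}\,r_t)^2$. Iterating,
\[
\sqrt{10}\,r_T\le(\sqrt{10}\,r_0)^{2^T}.
\]
Since $\sqrt{10}\,r_0<\sqrt{10}/8<1/2$, the recursion also keeps $r_t<1/8$, which makes the induction consistent. Therefore
\[
r_T<\tfrac{1}{\sqrt{10}}\,2^{-2^T}\le 2^{-2^T}\le 2^{-\log_2(1/\epsilon)}=\epsilon,
\]
using $2^T\ge\log_2(1/\epsilon)$.

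The total mask count is $L=O((T+1)\log^2 n)$. With $T=\lceil\log_2\log_2(1/\epsilon)\rceil\le 1+\log_2\log_2(1/\epsilon)$, this is $O(\log^2 n[1+\log\log(1/\epsilon)])$. The only delicate point is the conditioning step. Theorem~\ref{rs:thm:complete} requires $\bm z_t$ to be independent of the block it uses, and its probability bound must hold uniformly over deterministic $\bm z_t$ in the $1/8$-neighborhood. That uniformity is what allows Fubini to turn the conditional bound into an unconditional one. The remaining steps are bookkeeping: the floor in the block size, and the requirement $\epsilon<1/8$, which gives $T\ge1$.
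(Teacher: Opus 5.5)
Your proposal is correct and follows the paper's proof essentially step for step. You take $\delta=1/2$ so that $\kappa_{1/2}=\sqrt{10}$, use $B_0$ for the initialization theorem, and apply the one-step bound conditionally using the independence of $\bm z_t$ from the fresh block $B_{t+1}$; you then iterate $\sqrt{10}\,r_{t+1}\le(\sqrt{10}\,r_t)^2$ to get $r_T<2^{-2^T}\le\epsilon$ and take a union bound over the $T+1$ failure events, while your remarks on the floor in the block size and on integrating out the conditional bound spell out details the paper leaves implicit.
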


\begin{proof}
Increasing \(C_M\) if necessary ensures that each block has enough masks for Theorems~\ref{rs:cor:spectral-initialization} and~\ref{rs:thm:complete}, the latter with \(\delta=1/2\). The initialization event gives \(\dist(\bm z_0,\bm x_\star)<\|\bm x_\star\|_2/8\). Conditional on all previous blocks and their successful updates, \(\bm z_t\) is independent of \(B_{t+1}\). The one-step bound, with \(\kappa_{1/2}=\sqrt{10}\), preserves this neighborhood since \(\sqrt{10}/8<1/2\). Induction therefore gives
\[
\frac{\dist(\bm z_t,\bm x_\star)}{\|\bm x_\star\|_2}\le\frac1{\sqrt{10}}\left(\frac{\sqrt{10}}8\right)^{2^t}<2^{-2^t},\qquad 0\le t\le T.
\]
The sum of the initialization failure probability and the \(T\) conditional step-failure probabilities is at most \(2(T+1)n^{-3}\). Finally, the definition of \(T\) gives \(2^{-2^T}\le\epsilon\).
\end{proof}
}

\section{Convergence with fixed data} \label{sec:fixed-results}
In this section, we establish convergence guarantees for the Gauss--Newton method with fixed data, using the same \(L\) masks for initialization and every update. With \(O(\log^6(en))\) masks, Algorithm~\ref{rs:alg:sampled-gn} admits a linear contraction bound followed by a quadratic error recursion in a smaller neighborhood. The main technical ingredient is the uniform subset estimate in Lemma~\ref{lem:subset}, which controls the pseudoinverse throughout the attraction neighborhood. The guarantee holds simultaneously for all admissible noise vectors, even when they depend on the masks.

\begin{theorem} \label{thm:noisy}
Fix \(n\ge2\), \(\beta>0\), and let \(\bm x_\star\in\mathbb C^n\setminus\{\bm0\}\) be an unknown signal.
Under the mask model \eqref{eq:mask}, there are constants
\(C_{M,\beta},c_{M,\beta}>0\) such that, if \(  L\ge C_{M,\beta}\log^6(nL)\) and {the noise vector satisfies} \( \|\bm\xi\|_\infty \le c_{M,\beta} \log^{-5/2}(nL) \|\bm x_\star\|_2^2\), then with probability at least \(1-(nL)^{-\beta}\), the following {statements hold} for Algorithm \ref{rs:alg:sampled-gn}:
\begin{enumerate}
 \item The initial vector $\vz_0$ satisfies 
 \begin{equation*}
 \frac{\dist(\bm z_0,\bm x_\star)}{\|\bm x_\star\|_2} \le C_{M,\beta} \log^{-5/2}(nL).
 \end{equation*}
 \item  For every \(t\ge0\),
 \begin{equation*}
 \dist(\bm z_{t+1},\bm x_\star)
 \le\frac{15}{16}\dist(\bm z_t,\bm x_\star) +\frac{\|\bm\xi\|_2}{\sqrt m\,\|\bm x_\star\|_2}.
 \end{equation*}
  \item If in addition \(
 \frac{\|\bm\xi\|_2}{\sqrt m\,\|\bm x_\star\|_2^2}
 \le\frac1{512M^2\sqrt n} \),  there is a deterministic \(T\le C_{M,\beta}\log(en)\) such that, for every \(t\ge T\),
\[
 \frac{\dist(\bm z_t,\bm x_\star)}{\|\bm x_\star\|_2}
       \le\frac1{16M^2\sqrt n}
 \]
and 
\[       
 \frac{\dist(\bm z_{t+1},\bm x_\star)}{\|\bm x_\star\|_2}
 \le2M^2\sqrt n  \left(\frac{\dist(\bm z_t,\bm x_\star)}
                        {\|\bm x_\star\|_2}\right)^2
       +\frac{\|\bm\xi\|_2}{\sqrt m\,\|\bm x_\star\|_2^2}.
\]
\end{enumerate}
Here, $m=nL$.
\end{theorem}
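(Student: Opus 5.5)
We prove the three parts in order, working throughout on a single high-probability event $\mathcal E$ on which all needed uniform concentration bounds hold simultaneously. No resampling is available, so every bound along the trajectory must be uniform in $\bm z$. Normalize $\|\bm x_\star\|_2=1$.

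\textbf{The event $\mathcal E$.} With $a_{\ell,k}:={\bm D}_\ell\bm f_k$, the event $\mathcal E$ consists of four bounds.
\begin{enumerate}[label=(\roman*)]
\item Deterministic boundedness, $|a_{\ell,k}^*\bm u|\le M\sqrt n\|\bm u\|_2$ for every $\bm u$.
\item A uniform two-sided isometry for the real Gram operator: for all $\bm z,\bm v$ with $\bm v$ real-orthogonal to $\mathrm i\bm z$,
\[
\frac1m\sum_{\ell,k}\Rea\!\bigl(\overline{a_{\ell,k}^*\bm z}\,a_{\ell,k}^*\bm v\bigr)^2
\]
is comparable to its expectation. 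The expectation equals $\frac12\bigl(\|\bm z\|_2^2\|\bm v\|_2^2+|\bm z^*\bm v|^2+\Rea((\bm z^*\bm v)^2)\bigr)$, which follows from the moment conditions \eqref{eq:mask}, and it is $\gtrsim\|\bm z\|_2^2\|\bm v\|_2^2$ on that complement.
\item The uniform subset estimate of Lemma~\ref{lem:subset}:
\[
\sup_{|S|\le s}\Bigl\|\frac1m\sum_{(\ell,k)\in S}a_{\ell,k}a_{\ell,k}^*\Bigr\|_2\le \frac{C s\log^{p}(nL)}{m}+\dots
\]
This bound is derived from the Krahmer--Mendelson--Rauhut chaos bound.
\item The spectral bound $\|Y-\E Y\|_2\lesssim\log^{-5/2}(nL)$.
\end{enumerate}
Bound (iv) holds when $L\gtrsim\log^6(nL)$: write $Y$ as a sum of $L$ independent, bounded, Hermitian mask-wise terms, truncate each at $\max_k|a^*_{\ell,k}\bm x_\star|^2\lesssim\log(nL)$ (a Hoeffding bound on the sum of symmetric bounded variables), and apply matrix Bernstein. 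Part~(1) then follows from Davis--Kahan plus $|\lambda-1|$ small, with the noise contributing at most $\|\bm\xi\|_\infty\|\frac1m\sum a a^*\|_2\lesssim\|\bm\xi\|_\infty$.

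\textbf{Error decomposition for parts (2) and (3).} Write $\bm h=\bm z_t-\widetilde{\bm x}_\star$, with $\widetilde{\bm x}_\star$ aligned so that $\Ima(\bm z_t^*\widetilde{\bm x}_\star)=0$. Exact algebra gives
\[
\bm F(\bm z_t)=\mathcal J_{\bm z_t}\bm h-\frac{1}{\sqrt m}\bigl(|a^*\bm h|^2\bigr)-\frac{\bm\xi}{\sqrt m}.
\]
The alignment places $\bm h$ in $(\mathrm i\bm z_t)^{\perp_{\mathbb R}}=\mathrm{range}(\mathcal J^*_{\bm z_t})$. Hence $\mathcal J^\dagger\mathcal J\bm h=\bm h$, and
\[
\bm z_{t+1}-\widetilde{\bm x}_\star=\mathcal J^\dagger_{\bm z_t}\Bigl(\frac{|a^*\bm h|^2+\bm\xi}{\sqrt m}\Bigr).
\]
So $\dist(\bm z_{t+1},\bm x_\star)\le\|\mathcal J^\dagger\|\,\bigl\|(|a^*\bm h|^2)/\sqrt m\bigr\|_2+\|\mathcal J^\dagger\|\,\|\bm\xi\|_2/\sqrt m$. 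By (ii), $\|\mathcal J^\dagger\|_2\le(1+o(1))/\|\bm z_t\|_2\le 1+o(1)$ uniformly on the neighborhood, which produces the noise term with constant $1$.

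\textbf{Quadratic term, the main obstacle.} We need $\bigl\|\mathcal J^{\dagger}(|a^*\bm h|^2)/\sqrt m\bigr\|_2\le\frac{15}{16}\|\bm h\|_2$ minus slack, uniformly over the trajectory. The crude bound from (i) gives $\bigl(\frac1m\sum|a^*\bm h|^4\bigr)^{1/2}\le M^2\sqrt n\|\bm h\|_2^2$. This directly yields part (3) once $\|\bm h\|_2\le 1/(16M^2\sqrt n)$, with constant $2M^2\sqrt n$ after absorbing $\|\mathcal J^\dagger\|\le 2$. Part (3)'s noise condition guarantees that this ball is invariant, since $2M^2\sqrt n(16M^2\sqrt n)^{-2}+(512M^2\sqrt n)^{-1}\le(16M^2\sqrt n)^{-1}$.

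For part (2), in the larger ball of radius $\sim\log^{-5/2}(nL)$, split using the bad set $S=\{(\ell,k):|a^*_{\ell,k}\bm h|>c_0\}$.
\begin{itemize}
\item On $S^c$, each entry satisfies $|a^*\bm h|^2\le c_0|a^*\bm h|$, so this part is at most $c_0\bigl(\frac1m\sum|a^*\bm h|^2\bigr)^{1/2}\lesssim c_0\|\bm h\|_2$ by the upper isometry. Choosing $c_0$ small makes its contribution at most $\frac1{32}\|\bm h\|_2$.
\item On $S$, Markov's inequality with the upper isometry gives $|S|\le m\|\bm h\|_2^2(1+o(1))/c_0^2$. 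Then
\[
\frac1m\sum_S|a^*\bm h|^4\le\max_{(\ell,k)}|a^*\bm h|^2\cdot\Bigl\|\frac1m\sum_Saa^*\Bigr\|_2\|\bm h\|_2^2 .
\]
Here we expect the main difficulty. The sup-norm factor must be bounded not by $M^2 n\|\bm h\|_2^2$ but via (iii) together with a uniform log-factor bound on $\max|a^*\bm h|$ over directions. The exponent $5/2$ in the radius is chosen so that (iii) with $|S|/m\lesssim\|\bm h\|_2^2$ makes this bad-set contribution $\le\frac1{32}\|\bm h\|_2$.
\end{itemize}
Summing and multiplying by $\|\mathcal J^\dagger\|\le 1+o(1)$ gives the contraction factor $15/16$.

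By induction, the trajectory stays in the ball, using the noise bound $\|\bm\xi\|_2/\sqrt m\le\|\bm\xi\|_\infty\lesssim\log^{-5/2}$. Iterating part (2) reaches radius $1/(16M^2\sqrt n)$ within $T\lesssim\log(\sqrt n\log^{5/2}(nL))\lesssim\log(en)$ steps, using $L\le$ poly$(n)$ w.l.o.g. or absorbing $\log L$. From then on, part (3)'s quadratic recursion applies.
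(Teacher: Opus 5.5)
There is a genuine gap, and it sits exactly at the step you flag as the main difficulty. No ``uniform log-factor bound on $\max_{\ell,k}|a_{\ell,k}^*\bm h|$ over directions'' exists. Write $a_{\ell,k}={\bm D}_\ell\bm f_k$ and take $\bm h=\|\bm h\|_2\,a_{1,0}/\|a_{1,0}\|_2$. Then $|a_{1,0}^*\bm h|=\|a_{1,0}\|_2\|\bm h\|_2\approx\sqrt n\,\|\bm h\|_2$. Because $\bm h=\bm z_t-\widetilde{\bm x}_\star$ depends on the masks along a fixed-data trajectory, no fixed-direction tail bound applies.

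With the true worst case $\max|a^*\bm h|^2\le M^2n\|\bm h\|_2^2$, your estimate, Lemma~\ref{lem:subset}, and $|S|/m\lesssim\|\bm h\|_2^2$ only bound the bad-set remainder by $\sqrt{n/L}\,\|\bm h\|_2^2+\sqrt n\log^2(nL)\|\bm h\|_2^3$. This is below $\frac1{32}\|\bm h\|_2$ only when $\|\bm h\|_2\lesssim\sqrt{L/n}$. That radius is polynomially small, far inside what the spectral initializer reaches.

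The same example refutes the upper half of your uniform isometry (ii): $\bm z=\bm v=a_{1,0}/\|a_{1,0}\|_2$ gives $\frac1m\sum_{\ell,k}|a_{\ell,k}^*\bm z|^4\gtrsim n/L$. Where you use that upper bound (the $S^c$ term and the count $|S|$), the deterministic bound \eqref{eq:frame} suffices. However, your bound on $\|\mathcal J_{\bm z_t}^\dagger\|_2$ over the $\log^{-5/2}$ neighborhood rests on a uniform-in-$\bm z$ lower isometry you have not established. Note also that your accounting would give a contraction factor near $1/16$ rather than $15/16$. That is a sign the bad rows are being treated as if they were small.

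The missing idea, as in Proposition~\ref{prop:linear}, is never to bound $\bm b_S$ in magnitude, where $\bm b=(|a_{\ell,k}^*\bm h|^2/\sqrt m)_{\ell,k}$. Instead write
\[
\bm b_S=\tfrac12\mathcal J_{\bm h,S}\bm h=\tfrac12\mathcal J_{\bm z_t,S}\bm h-\tfrac12\mathcal J_{\bm x_\star,S}\bm h .
\]
\begin{itemize}
\item After applying $\mathcal J_{\bm z_t}^\dagger=U_{\bm z_t}\mathsf R_{\bm z_t}^\dagger$, the first piece becomes $\tfrac12U_{\bm z_t}(\mathsf R_{\bm z_t}^\top\mathsf R_{\bm z_t})^{-1}\mathsf R_{\bm z_t,S}^\top\mathsf R_{\bm z_t,S}\Rea(U_{\bm z_t}^*\bm h)$. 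Its norm is at most $\frac{11}{4\sqrt{10}}\|\bm h\|_2$ by Lemma~\ref{lem:half}, once $I\preceq\mathsf R_{\bm z_t,S^c}^\top\mathsf R_{\bm z_t,S^c}\preceq10I$. This is the source of the factor $15/16$.
\item The second piece is controlled by the erasure bound \eqref{eq:erasure}. That bound needs $\max_{\ell,k}|a_{\ell,k}^*\bm x_\star|^2\lesssim\log(nL)$ only at the fixed, mask-independent $\bm x_\star$, combined with Lemma~\ref{lem:subset}. This yields $\gamma\sim\log^{-5}(nL)$ and hence the radius $\log^{-5/2}(nL)$.
\end{itemize}

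The conditioning of $\mathsf R_{\bm z_t,S^c}$, and with it $\|\mathcal J_{\bm z_t}^\dagger\|_2\le1/\|\bm x_\star\|_2$, is also proved locally rather than uniformly. Normalize $\|\bm x_\star\|_2=1$ and split $\mathcal J_{\bm z_t,S^c}=\mathcal J_{\bm x_\star,S^c}+\mathcal J_{\bm h,S^c}$. Since $|a^*\bm h|\le1/(64M)$ on $S^c$, you get $\|\mathcal J_{\bm h,S^c}\|_2\le\frac1{32}$. Meanwhile $\mathcal J_{\bm x_\star,S^c}$ is controlled by the Gram bounds \eqref{eq:target-gram} at $\bm x_\star$ minus the erasure term. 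With this in place, your initialization, the part~(3) recursion, and the basin induction go through.
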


\begin{remark}
Theorem~\ref{thm:noisy} implies
\[
 \limsup_{t\to\infty}\dist(\bm z_t,\bm x_\star) \le16\frac{\|\bm\xi\|_2}{\sqrt m\,\|\bm x_\star\|_2}.
\]
{This is a deterministic stability bound. A similar error bound $O(\|\bm\xi\|_2/\sqrt m)$ has been established in \cite{huangPerfm} for phase retrieval with Gaussian random measurements, where the bound is proved to be sharp.
}
\end{remark}

\begin{remark}
In the noiseless case \(\bm\xi=\bm0\), Theorem~\ref{thm:noisy} guarantees that, with high probability, the minimum-norm Gauss--Newton method in Algorithm~\ref{rs:alg:sampled-gn} converges at least linearly during the first $O(\log n)$ iterations and quadratically thereafter. Consequently, it reaches relative accuracy $\epsilon$ within
\[
O(\log n+ \log\log(1/\epsilon))
\]
iterations.
\end{remark}

\section{Numerical experiments} \label{sec:experiments}
We evaluate the effectiveness and robustness of the Gauss--Newton method for coded diffraction phase retrieval and compare it with Wirtinger flow (WF) \cite{candes2015wf}, truncated Wirtinger flow (TWF)
\cite{chen2017twf}, and truncated amplitude flow (TAF)
\cite{wang2018taf}.   The masks are octanary:
\[
 \Prob\{d=\omega/\sqrt2\}=\frac15,\qquad
 \Prob\{d=\sqrt3\,\omega\}=\frac1{20},
 \qquad \omega\in\{1,-1,\mathrm i,-\mathrm i\}.
\]
Thus \eqref{eq:mask} holds with \(M=\sqrt3\). Within each trial, the four methods share both the observations and
the same spectral initial point, computed by 50 power iterations
for \eqref{eq:init}.  The GN subproblem is solved by LSQR \cite{paige1982lsqr}, starting
from zero, with tolerance \(10^{-5}\) and at most 10 inner
iterations for vectors and 5 for images. We use the step sizes
\[
 \mu_{\rm WF}(t)=\min\{1-\exp(-t/330),\,0.2\},\qquad
 \mu_{\rm TWF}=0.2,\qquad \mu_{\rm TAF}=0.8.
\]
The maximum WF step size is increased to \(0.4\) in the image experiment.
The TWF truncation parameters are \(0.3,5,5\), and the TAF truncation parameter is \(0.7\).
The experiments use MATLAB R2023a  on an Apple M1 Pro
computer with 16~GB of memory.  The error reported below is
\[
 \frac{\dist(\bm z_t,\bm x_\star)}{\|\bm x_\star\|_2}.
\]

\subsection{Recovery from noiseless measurements}

\begin{figure}[htbp]
\centering
\begin{tikzpicture}
\begin{loglogaxis}[width=0.46\textwidth,height=0.34\textwidth,tick label style={font=\scriptsize},label style={font=\small},title style={font=\small},grid=major,grid style={black!10},legend style={font=\scriptsize,cells={anchor=west}},legend pos=south west,xlabel={Iteration $t+1$},ylabel={Relative error},title={(a) Noiseless recovery},xmin=1,xmax=1001,ymin=1e-16,ymax=2]
\addplot[magenta,thick] coordinates {(1,1.017186) (2,0.8471575) (3,0.7924552) (4,0.6761145) (5,0.5299613) (6,0.3333593) (7,0.1110927) (8,0.01419528) (9,0.0010618) (10,6.893824e-05) (11,6.666864e-06) (12,5.335304e-07) (13,5.46005e-08) (14,4.507934e-09) (15,4.680734e-10) (16,3.911586e-11) (17,4.087872e-12) (18,3.43713e-13) (19,3.602704e-14) (20,3.073909e-15)};
\addlegendentry{GN}
\addplot[black,thick] coordinates {(1,1.017186) (2,1.002967) (3,0.9790011) (4,0.9522108) (5,0.9273851) (6,0.9063402) (7,0.8891639) (8,0.8752776) (9,0.8639615) (10,0.85456) (11,0.8465396) (12,0.8394889) (13,0.8330989) (14,0.8271408) (15,0.8214454) (16,0.8158877) (17,0.8103752) (18,0.8048392) (19,0.7992287) (20,0.7935067) (21,0.7876465) (22,0.7816301) (23,0.7754461) (24,0.7690883) (25,0.7625544) (26,0.7558448) (27,0.7489615) (28,0.7419076) (29,0.734686) (30,0.7272997) (31,0.7197506) (33,0.7041674) (36,0.6795594) (38,0.6622903) (42,0.6254949) (45,0.5957848) (48,0.5642169) (52,0.5194115) (57,0.4599409) (61,0.4106621) (66,0.3487452) (71,0.2895767) (77,0.2275693) (83,0.179536) (90,0.1379874) (98,0.1040015) (105,0.08233024) (114,0.06190439) (123,0.0471888) (133,0.03534541) (144,0.02603154) (155,0.0193633) (168,0.01378727) (181,0.009902725) (196,0.006817659) (212,0.004615176) (229,0.003070888) (248,0.001961218) (267,0.001259808) (289,0.0007589778) (312,0.0004491966) (338,0.0002495566) (365,0.0001361729) (394,7.134224e-05) (426,3.510082e-05) (461,1.622184e-05) (498,7.198855e-06) (538,3.000812e-06) (581,1.17498e-06) (628,4.228545e-07) (679,1.398803e-07) (734,4.253415e-08) (793,1.188852e-08) (857,2.989147e-09) (926,6.760941e-10) (1001,1.346374e-10)};
\addlegendentry{WF}
\addplot[teal,thick,dashed] coordinates {(1,1.017186) (2,0.8438542) (3,0.8359345) (4,0.8028313) (5,0.7882545) (6,0.7700875) (7,0.7553371) (8,0.7399287) (9,0.7252154) (10,0.7104726) (11,0.6955832) (12,0.6794537) (13,0.6639489) (14,0.6480225) (15,0.6323665) (16,0.6156192) (17,0.5981105) (18,0.5804134) (19,0.5612236) (20,0.5405762) (21,0.5181152) (22,0.4948652) (23,0.4709698) (24,0.447653) (25,0.4237542) (26,0.3996343) (27,0.3756502) (28,0.3518263) (29,0.328535) (30,0.3058375) (31,0.2843688) (33,0.2443345) (36,0.192216) (38,0.1631916) (41,0.1279698) (44,0.1006904) (47,0.07957008) (50,0.06315667) (54,0.04683514) (58,0.03498396) (62,0.02630372) (66,0.01990329) (71,0.01416501) (76,0.01014586) (81,0.007318576) (87,0.004976865) (93,0.003403676) (100,0.002195847) (107,0.001419023) (115,0.0008651698) (123,0.0005292668) (132,0.0003047767) (141,0.0001755921) (151,9.547603e-05) (162,4.889865e-05) (173,2.511767e-05) (186,1.146229e-05) (199,5.235298e-06) (213,2.256239e-06) (228,9.175302e-07) (245,3.308507e-07) (262,1.192988e-07) (281,3.811623e-08) (301,1.147935e-08) (322,3.255601e-09) (345,8.192932e-10) (369,1.945692e-10) (396,3.859307e-11) (424,7.24547e-12) (454,1.212527e-12)};
\addlegendentry{TWF}
\addplot[blue,thick,dashdotted] coordinates {(1,1.017186) (2,0.7733778) (3,0.7563824) (4,0.7347695) (5,0.7117534) (6,0.6895717) (7,0.667884) (8,0.6467054) (9,0.6262677) (10,0.6060158) (11,0.58547) (12,0.5648444) (13,0.5433802) (14,0.5214253) (15,0.4974294) (16,0.4720439) (17,0.4461711) (18,0.4200514) (19,0.3923567) (20,0.3646635) (21,0.3371446) (22,0.3088812) (23,0.2804773) (24,0.2526616) (25,0.2270218) (26,0.202338) (27,0.179665) (28,0.1592213) (29,0.1411823) (30,0.1249044) (31,0.1104221) (32,0.09768827) (34,0.07652002) (36,0.0603418) (39,0.04259871) (41,0.03401485) (44,0.02451467) (47,0.01784912) (50,0.01311134) (54,0.008793887) (57,0.00656561) (61,0.004482474) (65,0.003083916) (70,0.001949456) (74,0.001358611) (79,0.0008703456) (85,0.0005138041) (90,0.0003328615) (97,0.0001824352) (103,0.0001095058) (110,6.066463e-05) (118,3.105445e-05) (125,1.735292e-05) (134,8.248138e-06) (143,3.936992e-06) (153,1.737964e-06) (163,7.698982e-07) (174,3.154223e-07) (186,1.195372e-07) (198,4.542325e-08) (211,1.596214e-08) (226,4.787638e-09) (241,1.439075e-09) (257,4.00001e-10) (275,9.491717e-11) (293,2.255908e-11) (313,4.577492e-12) (334,8.58867e-13)};
\addlegendentry{TAF}
\end{loglogaxis}
\end{tikzpicture}
\hfill
\begin{tikzpicture}
\begin{axis}[width=0.46\textwidth,height=0.34\textwidth,tick label style={font=\scriptsize},label style={font=\small},title style={font=\small},grid=major,grid style={black!10},legend style={font=\scriptsize,cells={anchor=west}},legend pos=south west,xlabel={Number of masks $L$},ylabel={Success rate},title={(b) 100 trials per $L$},xmin=2,xmax=6,ymin=0,ymax=1.02,xtick={2,3,4,5,6},legend pos=north west]
\addplot[magenta,thick,mark=*] coordinates {(2,0) (3,0) (4,0.38) (5,0.86) (6,1)};
\addlegendentry{GN}
\addplot[black,thick,mark=*] coordinates {(2,0) (3,0) (4,0.63) (5,0.94) (6,1)};
\addlegendentry{WF}
\addplot[teal,thick,dashed,mark=*] coordinates {(2,0) (3,0) (4,0.75) (5,0.95) (6,1)};
\addlegendentry{TWF}
\addplot[blue,thick,dashdotted,mark=*] coordinates {(2,0) (3,0.07) (4,0.85) (5,0.99) (6,1)};
\addlegendentry{TAF}
\end{axis}
\end{tikzpicture}
\caption{Noiseless CDP experiments with $n=1000$. Panel (a) uses $L=6$ and a fixed seed; curves end when the stopping rule is met. Panel (b) counts recovery to relative error $10^{-5}$ in 100 paired trials for each $L$.}
\label{fig:cdp-noiseless}
\end{figure}

\begin{example}
In this example, we compare the convergence {behavior} of the four methods.
We draw \(\bm x_\star\sim\mathcal{CN}(\bm0,I_n)\), with \(n=1000\),
and use \(L=6\) masks. Figure~\ref{fig:cdp-noiseless}(a) shows that  GN reaches relative errors \(10^{-5}\) and \(10^{-10}\) in
10 and 15 outer iterations, respectively, and terminates after
19 iterations with error \(3.07\times10^{-15}\).
WF, TWF, and TAF first reach \(10^{-5}\) in 483, 188, and 131
iterations, respectively. 
\end{example}

\begin{example}
We next examine how the empirical recovery success rate depends on the number of masks. For each \(L\in\{2,3,4,5,6\}\), we draw 100 independent instances
with \(n=1000\), and use the same instance and initial point for all four methods. Recovery is counted as successful when the relative
error first falls below \(10^{-5}\). Figure~\ref{fig:cdp-noiseless}(b) reports the resulting rates.
All four methods succeed in all 100 trials when \(L=6\).
\end{example}

\subsection{Robustness to Gaussian and Poisson noise}
We now investigate the performance of the Gauss--Newton method with noisy measurements, considering Gaussian and Poisson noise.  In the Gaussian experiment,
\[
 y_{\ell,k}=|\bm f_k^*{\bm D}_\ell^*\bm x_\star|^2+\xi_{\ell,k},
 \qquad \xi_{\ell,k}\overset{\rm iid}{\sim}N(0,0.1^2).
\]
In the Poisson experiment, conditional on the masks and target,
the observations are independent with
\[
 y_{\ell,k}\sim\operatorname{Poisson}
                  (|\bm f_k^*{\bm D}_\ell^*\bm x_\star|^2).
\]
Treated as an additive perturbation, the latter noise is
\(\xi_{\ell,k}=y_{\ell,k}-|\bm f_k^*{\bm D}_\ell^*\bm x_\star|^2\).
{The realized SNR and squared relative error are expressed in decibels as}
\[
 \operatorname{SNR}
 =10\log_{10}
   \frac{\sum_{\ell,k}|\bm f_k^*{\bm D}_\ell^*\bm x_\star|^4}
        {\|\bm\xi\|_2^2},
 \qquad
 \operatorname{MSE}
 =10\log_{10}
   \frac{\dist^2(\bm z_T,\bm x_\star)}{\|\bm x_\star\|_2^2}.
\]

\begin{example}
We set \(n=1000\) and \(L=6\). 
Figure~\ref{fig:cdp-noisy-curves} displays the convergence curves.
The realized SNRs are \(83.13\)~dB and \(33.12\)~dB for the Gaussian
and Poisson data, respectively. GN reaches its error plateau in substantially fewer outer iterations.
The final errors of GN and WF are approximately \(5.59\times10^{-5}\)
for the Gaussian data and \(2.00\times10^{-2}\) for the Poisson data.
\end{example}

\begin{figure}[htbp]
\centering
\begin{tikzpicture}
\begin{loglogaxis}[width=0.46\textwidth,height=0.34\textwidth,tick label style={font=\scriptsize},label style={font=\small},title style={font=\small},grid=major,grid style={black!10},legend style={font=\scriptsize,cells={anchor=west}},legend pos=north east,xlabel={Iteration $t+1$},ylabel={Relative error},title={Gaussian intensity noise},xmin=1,xmax=1001,ymin=1e-5,ymax=2]
\addplot[magenta,thick] coordinates {(1,1.017189) (2,0.8471654) (3,0.7924645) (4,0.6761227) (5,0.5299689) (6,0.3333656) (7,0.1111039) (8,0.01419905) (9,0.001067001) (10,8.93806e-05) (11,5.675238e-05) (12,5.589032e-05) (13,5.588564e-05) (14,5.588169e-05) (15,5.588172e-05) (16,5.588173e-05) (17,5.588173e-05) (18,5.588173e-05) (19,5.588173e-05) (20,5.588173e-05) (21,5.588173e-05)};
\addlegendentry{GN}
\addplot[black,thick] coordinates {(1,1.017189) (2,1.00297) (3,0.9790035) (4,0.9522129) (5,0.927387) (6,0.9063422) (7,0.889166) (8,0.8752798) (9,0.8639638) (10,0.8545624) (11,0.846542) (12,0.8394913) (13,0.8331015) (14,0.8271433) (15,0.8214479) (16,0.8158903) (17,0.8103778) (18,0.8048418) (19,0.7992314) (20,0.7935094) (21,0.7876493) (22,0.7816329) (23,0.7754489) (24,0.7690912) (25,0.7625573) (26,0.7558477) (27,0.7489644) (28,0.7419104) (29,0.7346889) (30,0.7273025) (31,0.7197535) (33,0.7041701) (36,0.6795619) (38,0.6622926) (42,0.6254964) (45,0.5957856) (48,0.5642171) (52,0.5194106) (57,0.4599389) (61,0.4106593) (66,0.3487416) (71,0.2895729) (77,0.2275662) (83,0.179534) (90,0.1379866) (98,0.1040016) (105,0.08233066) (114,0.06190496) (123,0.04718934) (133,0.03534585) (144,0.02603186) (155,0.01936354) (168,0.01378746) (181,0.009902933) (196,0.006817944) (212,0.004615605) (229,0.003071541) (248,0.001962226) (267,0.001261325) (289,0.0007613788) (312,0.000453067) (338,0.0002561964) (365,0.000147676) (394,9.106833e-05) (426,6.63131e-05) (461,5.836702e-05) (498,5.642902e-05) (538,5.599949e-05) (581,5.590913e-05) (628,5.588886e-05) (679,5.588372e-05) (734,5.588224e-05) (793,5.588179e-05) (857,5.588166e-05) (926,5.588163e-05) (1001,5.588162e-05)};
\addlegendentry{WF}
\addplot[teal,thick,dashed] coordinates {(1,1.017189) (2,0.8438578) (3,0.8359365) (4,0.8028334) (5,0.7882563) (6,0.7700891) (7,0.7553382) (8,0.7399296) (9,0.7252158) (10,0.7104725) (11,0.6955825) (12,0.6794527) (13,0.6639477) (14,0.6480213) (15,0.6323653) (16,0.6156178) (17,0.5981093) (18,0.5804119) (19,0.561222) (20,0.5405742) (21,0.5181131) (22,0.4948631) (23,0.4709676) (24,0.4476507) (25,0.4237518) (26,0.3996317) (27,0.3756585) (28,0.3518328) (29,0.3285423) (30,0.3058536) (31,0.2843845) (33,0.2442975) (36,0.1922929) (38,0.1632479) (41,0.1280063) (44,0.1007219) (47,0.07960983) (51,0.05858153) (54,0.04683484) (58,0.03498051) (62,0.0263087) (67,0.01858466) (71,0.0141649) (76,0.01014514) (82,0.006858528) (88,0.004671076) (94,0.003198752) (101,0.002064344) (108,0.001336136) (116,0.0008150683) (124,0.0005015466) (133,0.0002954689) (142,0.0001822748) (152,0.0001192655) (163,9.058507e-05) (175,8.040119e-05) (187,7.769448e-05) (200,7.696484e-05) (215,7.667945e-05) (230,7.657807e-05) (246,7.655403e-05) (264,7.655559e-05) (283,7.655402e-05) (303,7.655324e-05) (325,7.655297e-05) (348,7.65529e-05) (373,7.655289e-05) (399,7.655288e-05) (428,7.655288e-05) (458,7.655288e-05)};
\addlegendentry{TWF}
\addplot[blue,thick,dashdotted] coordinates {(1,1.017189) (2,0.7733823) (3,0.7565154) (4,0.7348877) (5,0.7116456) (6,0.689406) (7,0.6677453) (8,0.6468125) (9,0.6262612) (10,0.6058477) (11,0.5850318) (12,0.5644749) (13,0.5431637) (14,0.5213898) (15,0.4974814) (16,0.4719718) (17,0.4452355) (18,0.4190142) (19,0.3917792) (20,0.3639885) (21,0.335672) (22,0.3072076) (23,0.2791248) (24,0.2513754) (25,0.2252088) (26,0.200739) (27,0.1782473) (28,0.15806) (29,0.1398393) (30,0.1237791) (31,0.1093817) (32,0.09670837) (34,0.0756694) (36,0.05963931) (39,0.04207871) (41,0.03357572) (44,0.02416882) (47,0.01757723) (50,0.01289669) (54,0.008637275) (57,0.006442545) (61,0.004394373) (65,0.003021757) (70,0.001910926) (74,0.001334358) (79,0.0008607252) (85,0.0005202893) (90,0.0003539383) (97,0.0002281696) (103,0.0001792715) (110,0.000156449) (118,0.0001484854) (125,0.0001465931) (134,0.0001459985) (143,0.0001459028) (153,0.0001458993) (163,0.0001459071) (174,0.0001459125) (186,0.0001459151) (198,0.000145916) (211,0.0001459164) (226,0.0001459165) (241,0.0001459166) (257,0.0001459166) (275,0.0001459166) (293,0.0001459166) (313,0.0001459166) (334,0.0001459166)};
\addlegendentry{TAF}
\end{loglogaxis}
\end{tikzpicture}
\hfill
\begin{tikzpicture}
\begin{loglogaxis}[width=0.46\textwidth,height=0.34\textwidth,tick label style={font=\scriptsize},label style={font=\small},title style={font=\small},grid=major,grid style={black!10},legend style={font=\scriptsize,cells={anchor=west}},legend pos=north east,xlabel={Iteration $t+1$},ylabel={Relative error},title={Poisson counts},xmin=1,xmax=1001,ymin=1e-2,ymax=2]
\addplot[magenta,thick] coordinates {(1,1.018209) (2,0.8471558) (3,0.7931251) (4,0.6762714) (5,0.5298802) (6,0.3348997) (7,0.1139043) (8,0.02484431) (9,0.02001194) (10,0.02001431) (11,0.02001471) (12,0.02001511) (13,0.02001512) (14,0.02001512) (15,0.02001512) (16,0.02001512) (17,0.02001512) (18,0.02001512) (19,0.02001512) (20,0.02001512)};
\addlegendentry{GN}
\addplot[black,thick] coordinates {(1,1.018209) (2,1.003683) (3,0.9793511) (4,0.9523786) (5,0.9275506) (6,0.9065833) (7,0.8894971) (8,0.8756873) (9,0.8644288) (10,0.8550679) (11,0.8470749) (12,0.8400419) (13,0.8336629) (14,0.8277108) (15,0.8220178) (16,0.81646) (17,0.8109457) (18,0.8054065) (19,0.7997921) (20,0.7940656) (21,0.788201) (22,0.7821804) (23,0.775993) (24,0.7696327) (25,0.7630978) (26,0.7563889) (27,0.7495085) (28,0.7424599) (29,0.7352466) (30,0.7278717) (31,0.7203376) (33,0.7047957) (36,0.6802799) (38,0.6630931) (42,0.6265155) (45,0.5970244) (48,0.5657316) (52,0.5213671) (57,0.4624524) (61,0.4134862) (66,0.3517131) (71,0.292478) (77,0.2301815) (83,0.1817575) (90,0.1398469) (98,0.1056993) (105,0.08411711) (114,0.06409131) (123,0.05006643) (133,0.03930301) (144,0.03150896) (155,0.02661277) (168,0.02325931) (181,0.02150417) (196,0.02054524) (212,0.02012885) (229,0.01997761) (248,0.01994108) (267,0.01994845) (289,0.01996754) (312,0.01998471) (338,0.0199977) (365,0.0200056) (394,0.02001021) (426,0.02001278) (461,0.02001409) (498,0.0200147) (538,0.02001497) (581,0.02001509) (628,0.02001513) (679,0.02001515) (734,0.02001516) (793,0.02001516) (857,0.02001516) (926,0.02001516) (1001,0.02001516)};
\addlegendentry{WF}
\addplot[teal,thick,dashed] coordinates {(1,1.018209) (2,0.84624) (3,0.8356564) (4,0.8029748) (5,0.7869551) (6,0.7702201) (7,0.7547601) (8,0.7391968) (9,0.7240949) (10,0.7088511) (11,0.6929189) (12,0.6761825) (13,0.6588451) (14,0.6417918) (15,0.6248641) (16,0.6063898) (17,0.5881241) (18,0.5689983) (19,0.5477345) (20,0.5256405) (21,0.5028282) (22,0.4801144) (23,0.4573225) (24,0.4342994) (25,0.4111201) (26,0.3876872) (27,0.3638697) (28,0.3406145) (29,0.3176935) (30,0.2954824) (31,0.2744152) (33,0.2357129) (36,0.1861496) (38,0.1586315) (42,0.1150208) (45,0.09089677) (48,0.07219255) (52,0.05389882) (57,0.03863722) (61,0.03049005) (66,0.02393065) (71,0.0202495) (77,0.01811308) (83,0.01723402) (90,0.01688875) (98,0.01678924) (105,0.01677485) (114,0.01679401) (123,0.01679903) (133,0.01680632) (144,0.01681314) (155,0.01681408) (168,0.01681502) (181,0.01682193) (196,0.01681711) (212,0.01681942) (229,0.01681971) (248,0.01682013) (267,0.01681834) (289,0.01681669) (312,0.01681925) (338,0.01681883) (365,0.01682353) (394,0.0168221) (426,0.01681853) (461,0.01682262) (498,0.01682221) (538,0.01681946) (581,0.0168179) (628,0.01681599) (679,0.01682441) (734,0.01681761) (793,0.01682261) (857,0.01681632) (926,0.01681805) (1001,0.01681453)};
\addlegendentry{TWF}
\addplot[blue,thick,dashdotted] coordinates {(1,1.018209) (2,0.7741887) (3,0.7572209) (4,0.7358111) (5,0.7120734) (6,0.689907) (7,0.6682733) (8,0.6479425) (9,0.6277294) (10,0.6077948) (11,0.5877601) (12,0.5676449) (13,0.5467039) (14,0.5245944) (15,0.5011073) (16,0.4760896) (17,0.4503326) (18,0.4235049) (19,0.3956859) (20,0.3671522) (21,0.3388128) (22,0.3109474) (23,0.2830001) (24,0.2559497) (25,0.2291868) (26,0.2046778) (27,0.1824262) (28,0.1617298) (29,0.1434473) (30,0.1269913) (31,0.1123724) (32,0.09940897) (34,0.07829112) (36,0.0621544) (39,0.04499884) (41,0.03697427) (44,0.02850171) (47,0.02309942) (50,0.01977263) (54,0.01733987) (57,0.0164045) (61,0.01577179) (65,0.01549844) (70,0.01536901) (75,0.01533106) (80,0.01532389) (85,0.01532589) (91,0.01533044) (97,0.01533424) (103,0.0153368) (110,0.0153386) (118,0.01533969) (126,0.01534022) (134,0.01534047) (143,0.0153406) (153,0.01534067) (163,0.01534069) (174,0.0153407) (186,0.0153407) (199,0.01534071) (212,0.01534071) (226,0.01534071) (242,0.01534071) (258,0.01534071) (275,0.01534071) (294,0.01534071) (314,0.01534071) (335,0.01534071)};
\addlegendentry{TAF}
\end{loglogaxis}
\end{tikzpicture}
\caption{Noisy CDP convergence with $n=1000$ and $L=6$. Gaussian noise has standard deviation $0.1$; the Poisson exposure is one. }
\label{fig:cdp-noisy-curves}
\end{figure}

\begin{example}
We vary the Gaussian and Poisson noise levels to examine the relationship between the measurement SNR and the MSE of the recovered signal.
For each nominal SNR in \(\{20,25,\ldots,60\}\)~dB, we generate
20 independent trials with \(n=1000\) and \(L=10\).
Figure~\ref{fig:cdp-snr} reports the mean and median of the squared
relative errors for each method.  Both the mean and median errors decrease at every SNR step,
and their difference is below \(0.08\)~dB for each method and setting. GN and WF have nearly identical mean-error curves. In the tested settings, GN and WF have smaller final errors than
TWF and TAF under Gaussian noise, whereas TAF has the smallest
errors under Poisson noise. 

\end{example}

\begin{figure}[htbp]
\centering
\begin{tikzpicture}
\begin{axis}[width=0.46\textwidth,height=0.34\textwidth,tick label style={font=\scriptsize},label style={font=\small},title style={font=\small},grid=major,grid style={black!10},legend style={font=\scriptsize,cells={anchor=west}},legend pos=south west,xlabel={Realized SNR (dB)},ylabel={Mean relative MSE (dB)},title={Gaussian: mean},legend pos=north east]
\addlegendimage{magenta,thick,dashed,mark=o,mark size=3pt,mark options={solid}}
\addlegendentry{GN}
\addlegendimage{black,thick,mark=x,mark size=2pt,mark options={solid}}
\addlegendentry{WF}
\addlegendimage{teal,thick,dashed,mark=square*,mark size=1.8pt,mark options={solid}}
\addlegendentry{TWF}
\addlegendimage{blue,thick,dashdotted,mark=triangle*,mark size=2.2pt,mark options={solid}}
\addlegendentry{TAF}
\addplot[black,thick,mark=x,mark size=2pt,mark options={solid},forget plot] coordinates {(20.00221,-25.18335) (25.0006,-30.21775) (29.99427,-35.21853) (35.0133,-40.18791) (39.97627,-45.26137) (45.00652,-50.19369) (50.00355,-55.18107) (55.02044,-60.23748) (60.00636,-65.17559)};
\addplot[teal,thick,dashed,mark=square*,mark size=1.8pt,mark options={solid},forget plot] coordinates {(20.00221,-22.59619) (25.0006,-27.55448) (29.99427,-32.57405) (35.0133,-37.55027) (39.97627,-42.56085) (45.00652,-47.50177) (50.00355,-52.53174) (55.02044,-57.61716) (60.00636,-62.61463)};
\addplot[blue,thick,dashdotted,mark=triangle*,mark size=2.2pt,mark options={solid},forget plot] coordinates {(20.00221,-21.65857) (25.0006,-25.94615) (29.99427,-30.31648) (35.0133,-34.65109) (39.97627,-39.03824) (45.00652,-43.65143) (50.00355,-48.26082) (55.02044,-52.80434) (60.00636,-57.36631)};
\addplot[magenta,thick,dashed,mark=o,mark size=3pt,mark options={solid},forget plot] coordinates {(20.00221,-25.18336) (25.0006,-30.21776) (29.99427,-35.21854) (35.0133,-40.18791) (39.97627,-45.26137) (45.00652,-50.19369) (50.00355,-55.18107) (55.02044,-60.23749) (60.00636,-65.17559)};
\end{axis}
\end{tikzpicture}
\hfill
\begin{tikzpicture}
\begin{axis}[width=0.46\textwidth,height=0.34\textwidth,tick label style={font=\scriptsize},label style={font=\small},title style={font=\small},grid=major,grid style={black!10},legend style={font=\scriptsize,cells={anchor=west}},legend pos=south west,xlabel={Realized SNR (dB)},ylabel={Mean relative MSE (dB)},title={Poisson: mean},legend pos=north east]
\addlegendimage{magenta,thick,dashed,mark=o,mark size=3pt,mark options={solid}}
\addlegendentry{GN}
\addlegendimage{black,thick,mark=x,mark size=2pt,mark options={solid}}
\addlegendentry{WF}
\addlegendimage{teal,thick,dashed,mark=square*,mark size=1.8pt,mark options={solid}}
\addlegendentry{TWF}
\addlegendimage{blue,thick,dashdotted,mark=triangle*,mark size=2.2pt,mark options={solid}}
\addlegendentry{TAF}
\addplot[black,thick,mark=x,mark size=2pt,mark options={solid},forget plot] coordinates {(20.03991,-23.6373) (24.99793,-28.54661) (30.00122,-33.53468) (35.00935,-38.58756) (39.98322,-43.48994) (44.9964,-48.60461) (50.01348,-53.59594) (54.97938,-58.57975) (60.03725,-63.63209)};
\addplot[teal,thick,dashed,mark=square*,mark size=1.8pt,mark options={solid},forget plot] coordinates {(20.03991,-25.47695) (24.99793,-30.45842) (30.00122,-35.44779) (35.00935,-40.45607) (39.98322,-45.42554) (44.9964,-50.59093) (50.01348,-55.47192) (54.97938,-60.49129) (60.03725,-65.51419)};
\addplot[blue,thick,dashdotted,mark=triangle*,mark size=2.2pt,mark options={solid},forget plot] coordinates {(20.03991,-25.79168) (24.99793,-30.94631) (30.00122,-35.97144) (35.00935,-41.01441) (39.98322,-45.96444) (44.9964,-51.12282) (50.01348,-55.99472) (54.97938,-61.025) (60.03725,-66.07251)};
\addplot[magenta,thick,dashed,mark=o,mark size=3pt,mark options={solid},forget plot] coordinates {(20.03991,-23.63729) (24.99793,-28.54661) (30.00122,-33.53467) (35.00935,-38.58755) (39.98322,-43.48994) (44.9964,-48.60461) (50.01348,-53.59594) (54.97938,-58.57975) (60.03725,-63.63209)};
\end{axis}
\end{tikzpicture}
\par\medskip
\begin{tikzpicture}
\begin{axis}[width=0.46\textwidth,height=0.34\textwidth,tick label style={font=\scriptsize},label style={font=\small},title style={font=\small},grid=major,grid style={black!10},legend style={font=\scriptsize,cells={anchor=west}},legend pos=south west,xlabel={Realized SNR (dB)},ylabel={Median relative MSE (dB)},title={Gaussian: median},legend pos=north east]
\addlegendimage{magenta,thick,dashed,mark=o,mark size=3pt,mark options={solid}}
\addlegendentry{GN}
\addlegendimage{black,thick,mark=x,mark size=2pt,mark options={solid}}
\addlegendentry{WF}
\addlegendimage{teal,thick,dashed,mark=square*,mark size=1.8pt,mark options={solid}}
\addlegendentry{TWF}
\addlegendimage{blue,thick,dashdotted,mark=triangle*,mark size=2.2pt,mark options={solid}}
\addlegendentry{TAF}
\addplot[black,thick,mark=x,mark size=2pt,mark options={solid},forget plot] coordinates {(20.00221,-25.16425) (25.0006,-30.19709) (29.99427,-35.24831) (35.0133,-40.17809) (39.97627,-45.25487) (45.00652,-50.17578) (50.00355,-55.23074) (55.02044,-60.23026) (60.00636,-65.15105)};
\addplot[teal,thick,dashed,mark=square*,mark size=1.8pt,mark options={solid},forget plot] coordinates {(20.00221,-22.55356) (25.0006,-27.53331) (29.99427,-32.53904) (35.0133,-37.50522) (39.97627,-42.55004) (45.00652,-47.48737) (50.00355,-52.55646) (55.02044,-57.59629) (60.00636,-62.62584)};
\addplot[blue,thick,dashdotted,mark=triangle*,mark size=2.2pt,mark options={solid},forget plot] coordinates {(20.00221,-21.66165) (25.0006,-25.87614) (29.99427,-30.35502) (35.0133,-34.68403) (39.97627,-39.07504) (45.00652,-43.65044) (50.00355,-48.27567) (55.02044,-52.76798) (60.00636,-57.37361)};
\addplot[magenta,thick,dashed,mark=o,mark size=3pt,mark options={solid},forget plot] coordinates {(20.00221,-25.16424) (25.0006,-30.1971) (29.99427,-35.24831) (35.0133,-40.17809) (39.97627,-45.25486) (45.00652,-50.17579) (50.00355,-55.23073) (55.02044,-60.23027) (60.00636,-65.15104)};
\end{axis}
\end{tikzpicture}
\hfill
\begin{tikzpicture}
\begin{axis}[width=0.46\textwidth,height=0.34\textwidth,tick label style={font=\scriptsize},label style={font=\small},title style={font=\small},grid=major,grid style={black!10},legend style={font=\scriptsize,cells={anchor=west}},legend pos=south west,xlabel={Realized SNR (dB)},ylabel={Median relative MSE (dB)},title={Poisson: median},legend pos=north east]
\addlegendimage{magenta,thick,dashed,mark=o,mark size=3pt,mark options={solid}}
\addlegendentry{GN}
\addlegendimage{black,thick,mark=x,mark size=2pt,mark options={solid}}
\addlegendentry{WF}
\addlegendimage{teal,thick,dashed,mark=square*,mark size=1.8pt,mark options={solid}}
\addlegendentry{TWF}
\addlegendimage{blue,thick,dashdotted,mark=triangle*,mark size=2.2pt,mark options={solid}}
\addlegendentry{TAF}
\addplot[black,thick,mark=x,mark size=2pt,mark options={solid},forget plot] coordinates {(20.03991,-23.62329) (24.99793,-28.52415) (30.00122,-33.56354) (35.00935,-38.55795) (39.98322,-43.45744) (44.9964,-48.60542) (50.01348,-53.62495) (54.97938,-58.61155) (60.03725,-63.58536)};
\addplot[teal,thick,dashed,mark=square*,mark size=1.8pt,mark options={solid},forget plot] coordinates {(20.03991,-25.4317) (24.99793,-30.46054) (30.00122,-35.38621) (35.00935,-40.4925) (39.98322,-45.44771) (44.9964,-50.5954) (50.01348,-55.45272) (54.97938,-60.51262) (60.03725,-65.54883)};
\addplot[blue,thick,dashdotted,mark=triangle*,mark size=2.2pt,mark options={solid},forget plot] coordinates {(20.03991,-25.764) (24.99793,-30.95685) (30.00122,-35.93248) (35.00935,-41.00459) (39.98322,-45.94161) (44.9964,-51.16387) (50.01348,-55.99357) (54.97938,-61.00865) (60.03725,-66.11661)};
\addplot[magenta,thick,dashed,mark=o,mark size=3pt,mark options={solid},forget plot] coordinates {(20.03991,-23.62328) (24.99793,-28.52414) (30.00122,-33.56356) (35.00935,-38.55794) (39.98322,-43.45746) (44.9964,-48.60544) (50.01348,-53.62496) (54.97938,-58.61155) (60.03725,-63.58536)};
\end{axis}
\end{tikzpicture}
\caption{Noise sensitivity for $n=1000$ and $L=10$, with 20 independent trials at each nominal SNR in $\{20,25,\ldots,60\}$ dB. }
\label{fig:cdp-snr}
\end{figure}

\subsection{Recovery of a natural image}
We compare the four methods on the recovery of a natural image from masked Fourier intensity measurements. The image used is the Milky Way
{Galaxy, represented as a \(1080\times1920\) RGB image.}  Each color channel is recovered
separately, with \(L=20\) octanary masks.  Each GN subproblem is solved using at most five LSQR iterations.
We record the first iterations attaining relative errors \(10^{-5}\) and \(10^{-10}\).
Elapsed times include the measured cost of spectral initialization
and exclude mask generation and file I/O.
Table~\ref{tab:cdp-image} reports the arithmetic mean of the
three channel-wise iteration counts and elapsed times at each
accuracy threshold.

All four methods recover all three channels to the stopping tolerance
in this experiment. The recovered GN channels have final relative
errors \(1.93\times10^{-12}\), \(3.23\times10^{-12}\), and
\(1.28\times10^{-12}\), respectively.
Figure~\ref{fig:cdp-image-curves} shows the largest channel error
as a function of the outer iteration. In this experiment, GN requires fewer outer iterations and less
measured time than WF, TWF, and TAF at both accuracy thresholds.

\begin{table}[htbp]
\centering
\caption{Full-resolution image recovery from 20 masks. Iterations and seconds are averaged over the RGB channels. Elapsed times include spectral initialization.}
\label{tab:cdp-image}
\begin{tabular}{lrrrr}
\hline
& \multicolumn{2}{c}{Relative error \(10^{-5}\)}
& \multicolumn{2}{c}{Relative error \(10^{-10}\)}\\
\cline{2-3}\cline{4-5}
Method & Iterations & Seconds & Iterations & Seconds\\
\hline
GN & 8.7 & 89.98 & 13.3 & 119.16\\
WF & 144.3 & 156.73 & 222.7 & 227.26\\
TWF & 79.3 & 220.99 & 149.3 & 388.05\\
TAF & 64.0 & 127.86 & 125.0 & 217.89\\
\hline
\end{tabular}
\end{table}

\begin{figure}[htbp]
\centering
\begin{tikzpicture}
\begin{loglogaxis}[width=0.46\textwidth,height=0.34\textwidth,tick label style={font=\scriptsize},label style={font=\small},title style={font=\small},grid=major,grid style={black!10},legend style={font=\scriptsize,cells={anchor=west}},legend pos=south west,xlabel={Iteration $t+1$},ylabel={Largest channel error},title={Full-resolution image recovery},xmin=1,xmax=1001,ymin=1e-12,ymax=2,width=0.75\textwidth,height=0.43\textwidth]
\addplot[magenta,thick] coordinates {(1,1.281186) (2,1.073581) (3,0.9965477) (4,0.9071508) (5,0.7591501) (6,0.4953744) (7,0.1387099) (8,0.0128959) (9,0.0006059252) (10,4.290927e-05) (11,3.415698e-06) (12,2.748849e-07) (13,2.337979e-08) (14,1.955146e-09) (15,1.703577e-10) (16,1.450905e-11) (17,3.230718e-12)};
\addlegendentry{GN}
\addplot[black,thick] coordinates {(1,1.281186) (2,1.270697) (3,1.2513) (4,1.226163) (5,1.198816) (6,1.172072) (7,1.147625) (8,1.126193) (9,1.107859) (10,1.09237) (11,1.079346) (12,1.068384) (13,1.059109) (14,1.051192) (15,1.044356) (16,1.038364) (17,1.033017) (18,1.02815) (19,1.023618) (20,1.019299) (21,1.015088) (22,1.010893) (23,1.006633) (24,1.002238) (25,0.9976458) (26,0.992801) (27,0.9876543) (28,0.9821607) (29,0.9762785) (30,0.9699675) (31,0.9631883) (32,0.9559009) (34,0.9396331) (36,0.9207943) (38,0.8989433) (41,0.8593057) (43,0.8272939) (46,0.7687499) (49,0.6947074) (52,0.6028824) (55,0.4954745) (59,0.3482291) (63,0.2291145) (66,0.1652417) (71,0.09592173) (75,0.06237279) (80,0.0365592) (85,0.0214336) (91,0.01123685) (96,0.00651128) (102,0.003342043) (109,0.001504534) (116,0.0006610942) (123,0.0002829646) (131,0.0001037135) (139,3.661372e-05) (148,1.083753e-05) (158,2.643637e-06) (168,6.062115e-07) (178,1.345218e-07) (190,2.235563e-08) (202,3.753835e-09) (215,5.481471e-10) (228,8.063817e-11) (243,8.899083e-12)};
\addlegendentry{WF}
\addplot[teal,thick,dashed] coordinates {(1,1.281186) (2,1.019456) (3,1.111698) (4,1.037193) (5,1.069191) (6,1.041243) (7,1.050109) (8,1.038158) (9,1.037851) (10,1.031107) (11,1.027438) (12,1.021901) (13,1.016917) (14,1.011255) (15,1.005427) (16,0.9991663) (17,0.9925105) (18,0.9853789) (19,0.9777223) (20,0.9694597) (21,0.9604869) (22,0.9506817) (23,0.9398871) (24,0.9279376) (25,0.9146151) (26,0.8996417) (27,0.8826643) (28,0.8632511) (29,0.8408196) (30,0.8146614) (31,0.7839161) (33,0.7045012) (35,0.5949062) (37,0.457628) (39,0.316894) (42,0.1605131) (44,0.0998664) (47,0.05008281) (49,0.0322405) (52,0.01709651) (56,0.007630037) (59,0.004260467) (62,0.002414558) (66,0.001153418) (70,0.0005601957) (74,0.0002756382) (79,0.0001152231) (84,4.876318e-05) (89,2.083388e-05) (94,8.967224e-06) (99,3.882304e-06) (105,1.430382e-06) (112,4.491287e-07) (118,1.67132e-07) (126,4.495634e-08) (133,1.430331e-08) (141,3.877524e-09) (149,1.054459e-09) (158,2.444424e-10) (168,4.833272e-11) (178,9.586003e-12)};
\addlegendentry{TWF}
\addplot[blue,thick,dashdotted] coordinates {(1,1.281186) (2,0.9202842) (3,0.9195156) (4,0.9183592) (5,0.9163125) (6,0.9126187) (7,0.9063627) (8,0.8968419) (9,0.8840267) (10,0.8686027) (11,0.8514108) (12,0.8329554) (13,0.8133811) (14,0.7926554) (15,0.770653) (16,0.7471903) (17,0.7220395) (18,0.6949057) (19,0.6654472) (20,0.6331783) (21,0.5975453) (22,0.5578005) (23,0.5130702) (24,0.462382) (25,0.405002) (26,0.3414647) (27,0.2752513) (28,0.2129672) (29,0.1602119) (30,0.1188837) (31,0.08810829) (32,0.06569105) (34,0.03748319) (36,0.02210084) (38,0.01337893) (40,0.00826784) (43,0.004137027) (45,0.002647886) (48,0.001380228) (50,0.000902612) (53,0.0004828771) (56,0.0002612379) (60,0.0001167277) (63,6.430199e-05) (67,2.927559e-05) (71,1.342623e-05) (75,6.192231e-06) (79,2.868487e-06) (83,1.333451e-06) (88,5.138381e-07) (93,1.98689e-07) (99,6.376107e-08) (104,2.478947e-08) (110,7.997845e-09) (117,2.142986e-09) (123,6.94501e-10) (131,1.550003e-10) (138,4.181609e-11) (146,9.388669e-12)};
\addlegendentry{TAF}
\end{loglogaxis}
\end{tikzpicture}
\caption{Recovery of the $1080\times1920$ RGB image from $L=20$ masks.  Each curve is the largest relative error among the three channels at the indicated outer iteration. }
\label{fig:cdp-image-curves}
\end{figure}

\section{{Proofs of the resampled results in Section~\ref{sec:resampled-results}}}
\label{sec:resampled-proof}

{This section proves the results in Section~\ref{sec:resampled-results}. Throughout the proofs,} \(\mathbb C^n\) is regarded as the real Hilbert
space defined by \eqref{eq:real-inner-product}.
The horizontal space at a nonzero point is defined as
\begin{equation}
 \mathcal H_{\bm z_t}
 =\{\bm h\in\mathbb C^n:\Ima(\bm z_t^*\bm h)=0\}
 =\operatorname{span}_{\mathbb R}\{\mathrm i\bm z_t\}^{\perp_{\mathbb R}}.
 \label{eq:horizontal-space}
\end{equation}
The horizontal space \(\mathcal H_{\bm z_t}\) has real dimension \(2n-1\).
Let \(U_{\bm z_t}\in\mathbb C^{n\times(2n-1)}\) have real-orthonormal
columns spanning that space, and define
\begin{equation}
 \mathsf R_{\bm z_t}
 =J_{\bm z_t}
   \begin{bmatrix}U_{\bm z_t}\\\overline{U_{\bm z_t}}\end{bmatrix}
 \in\mathbb R^{m\times(2n-1)},
 \label{rs:eq:restricted-block-jacobian}
\end{equation}
where $J_{\bm z_t}$ is given in \eqref{eq:jacobian}. The following lemma gives an equivalent form of the update that is convenient for the convergence analysis.

\begin{lemma}
\label{rs:lem:equivalent-update}
Let \(\bm z_t\ne\bm0\). The update rule \eqref{eq:gn1} is equivalent to 
\begin{equation}
\vz_{t+1}=\vz_t -U_{\bm z_t}\mathsf R_{\bm z_t}^{\dagger}\bm F(\bm z_t).
\label{rs:eq:pseudoinverse-step}
\end{equation}
\end{lemma}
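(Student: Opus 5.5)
Throughout I work with the real-linear operator \(\mathcal J_{\bm z_t}:\mathbb C^n\to\mathbb R^m\) and the real inner product \eqref{eq:real-inner-product}. The plan is to show that \(-\mathcal J_{\bm z_t}^\dagger\bm F(\bm z_t)\), the minimum-norm minimizer of \eqref{eq:linearized}, equals \(-U_{\bm z_t}\mathsf R_{\bm z_t}^\dagger\bm F(\bm z_t)\). The equivalence of \eqref{eq:gn} and \eqref{eq:gn1} is already noted in the paper, so I compare with \eqref{eq:gn}.

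\textbf{Identify the real-coordinate map.} Let \(\iota:\mathbb R^{2n-1}\to\mathbb C^n\) be \(\bm a\mapsto U_{\bm z_t}\bm a\). Since the columns of \(U_{\bm z_t}\) are real-orthonormal, \(\iota\) is an isometry onto \(\mathcal H_{\bm z_t}\). From \(\mathcal J_{\bm z_t}\bm h=J_{\bm z_t}[\bm h;\overline{\bm h}]\), we get \(\mathsf R_{\bm z_t}\bm a=\mathcal J_{\bm z_t}U_{\bm z_t}\bm a\), so \(\mathsf R_{\bm z_t}=\mathcal J_{\bm z_t}\circ\iota\). This also explains why \(\mathsf R_{\bm z_t}\) is a real matrix.

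\textbf{The range of the pseudoinverse lies in \(\mathcal H_{\bm z_t}\).} From \eqref{eq:jac}, \(\mathcal J_{\bm z_t}(\mathrm i\bm z_t)=\bm0\), because \(\Rea(\mathrm i|\bm f_k^*\bm D_\ell^*\bm z_t|^2)=0\). Hence \(\operatorname{span}_{\mathbb R}\{\mathrm i\bm z_t\}\subseteq\ker\mathcal J_{\bm z_t}\). It follows that
\[
\operatorname{ran}\mathcal J_{\bm z_t}^\dagger=(\ker\mathcal J_{\bm z_t})^{\perp_{\mathbb R}}\subseteq\operatorname{span}_{\mathbb R}\{\mathrm i\bm z_t\}^{\perp_{\mathbb R}}=\mathcal H_{\bm z_t}.
\]
So \(\mathcal J_{\bm z_t}^\dagger=P\,\mathcal J_{\bm z_t}^\dagger\), where \(P=U_{\bm z_t}U_{\bm z_t}^{\top}\) is the real-orthogonal projector onto \(\mathcal H_{\bm z_t}\). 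Here \(U^{\top}\) means the real adjoint of \(\iota\), namely \(\bm h\mapsto(\Rea(\bm u_j^*\bm h))_j\).

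\textbf{Key identity.} I use the fact that for an isometry \(\iota\) whose range contains \(\operatorname{ran}\mathcal J^\dagger=(\ker\mathcal J)^{\perp}\),
\[
(\mathcal J\iota)^\dagger=\iota^{*}\mathcal J^\dagger .
\]
I verify this by checking the Penrose conditions with \(A=\mathcal J\iota\) and \(B=\iota^*\mathcal J^\dagger\).
\begin{itemize}
\item \(AB=\mathcal J\iota\iota^*\mathcal J^\dagger=\mathcal J P\mathcal J^\dagger=\mathcal J\mathcal J^\dagger\), which is self-adjoint.
\item \(BA=\iota^*\mathcal J^\dagger\mathcal J\iota\) is self-adjoint, because \(\mathcal J^\dagger\mathcal J\) is an orthogonal projector.
\item \(ABA=\mathcal J\mathcal J^\dagger\mathcal J\iota=A\).
\item \(BAB=\iota^*\mathcal J^\dagger\mathcal J\mathcal J^\dagger=B\).
\end{itemize}
Then
\[
U_{\bm z_t}\mathsf R_{\bm z_t}^\dagger=\iota\iota^*\mathcal J_{\bm z_t}^\dagger=P\mathcal J_{\bm z_t}^\dagger=\mathcal J_{\bm z_t}^\dagger.
\]
Applying both sides to \(\bm F(\bm z_t)\) gives \eqref{rs:eq:pseudoinverse-step}.

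\textbf{Main obstacle.} The only delicate point is bookkeeping. The pseudoinverse of \(\mathcal J_{\bm z_t}\) must be taken over \(\mathbb R\) with respect to \(\Rea(\bm u^*\bm v)\), and \(\iota^*\) must be the real adjoint, not the complex conjugate transpose. With that convention the argument is purely linear-algebraic; no randomness is involved. The assumption \(\bm z_t\ne\bm0\) is used only so that \(\mathcal H_{\bm z_t}\) has dimension \(2n-1\) and \(U_{\bm z_t}\) is well defined.
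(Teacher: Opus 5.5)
\textbf{Gap: the link from \eqref{eq:gn1} to \eqref{eq:gn} is assumed, not proved.}
Your Penrose-condition argument is correct as far as it goes. With \(\iota=U_{\bm z_t}\) and its real adjoint, you do get \(\mathsf R_{\bm z_t}=\mathcal J_{\bm z_t}\iota\) and \(\mathsf R_{\bm z_t}^\dagger=\iota^*\mathcal J_{\bm z_t}^\dagger\), hence \(U_{\bm z_t}\mathsf R_{\bm z_t}^\dagger=\mathcal J_{\bm z_t}^\dagger\), with no rank assumption. This is cleaner than the paper's variational route, which restricts the least-squares problem to \(\mathcal H_{\bm z_t}\) and reparametrizes by \(\bm q\in\mathbb R^{2n-1}\). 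It is also exactly the form used later in Proposition~\ref{prop:linear}.

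The problem is that the lemma concerns \eqref{eq:gn1}, where \(J_{\bm z_t}^\dagger\) is the complex Moore--Penrose inverse of \(J_{\bm z_t}\in\mathbb C^{m\times 2n}\). You replace it by \eqref{eq:gn} by citing the sentence in Section~\ref{sec:algorithm}, but that sentence is an assertion, not a proof. The identity \((J_{\bm z_t}^\dagger\bm F(\bm z_t))_{1:n}=\mathcal J_{\bm z_t}^\dagger\bm F(\bm z_t)\) does not follow from \(\mathcal J_{\bm z_t}\bm h=J_{\bm z_t}[\bm h;\overline{\bm h}]\) alone, for two reasons:
\begin{itemize}
\item the complex problem minimizes over all \(\bm w\in\mathbb C^{2n}\), including unpaired vectors for which \(J_{\bm z_t}\bm w\) is not real;
\item on paired vectors the norm is \(\|[\bm h;\overline{\bm h}]\|_2=\sqrt2\|\bm h\|_2\), not \(\|\bm h\|_2\).
\end{itemize}
Establishing this link is precisely the first half of the paper's proof. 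As written, you have assumed the substantive part of the statement.

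\textbf{How to close it.} The missing step is short. Define the antilinear map \(\sigma[\bm w_1;\bm w_2]=[\overline{\bm w_2};\overline{\bm w_1}]\).
\begin{enumerate}
\item Because the two block columns of \(J_{\bm z_t}\) are conjugate partners and \(\bm F(\bm z_t)\) is real, \(J_{\bm z_t}\sigma\bm w+\bm F(\bm z_t)=\overline{J_{\bm z_t}\bm w+\bm F(\bm z_t)}\). Also \(\|\sigma\bm w\|_2=\|\bm w\|_2\).
\item So \(\sigma\) maps least-squares solutions to least-squares solutions of the same norm. Uniqueness of the minimum-norm solution forces \(-J_{\bm z_t}^\dagger\bm F(\bm z_t)=[\bm h_\star;\overline{\bm h_\star}]\).
\item Paired vectors are feasible for the complex problem, so the complex optimum is at most the real one. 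The paired vector \([\bm h_\star;\overline{\bm h_\star}]\) attains the complex optimum, so the real and complex optima coincide and \(\bm h_\star\) is a real least-squares solution.
\item For any real least-squares solution \(\bm h\), the vector \([\bm h;\overline{\bm h}]\) is a complex one. Hence \(2\|\bm h\|_2^2\ge2\|\bm h_\star\|_2^2\).
\end{enumerate}
Thus \(\bm h_\star=-\mathcal J_{\bm z_t}^\dagger\bm F(\bm z_t)\), which gives \eqref{eq:gn}\(\Leftrightarrow\)\eqref{eq:gn1}. Combined with your identity, this proves the lemma.
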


\begin{proof}
By the defining property of the Moore--Penrose pseudoinverse,
\(-J_{\bm z_t}^{\dagger}\bm F(\bm z_t)\) is the unique minimum-norm
least-squares solution of
\begin{equation} \label{rs:eq:uncmini}
\min_{\bm w\in\mathbb C^{2n}}
\left\|J_{\bm z_t}\bm w+\bm F(\bm z_t)\right\|_2^2.
\end{equation}
We next show that
\begin{equation} \label{rs:eq:mininorm}
\left(J_{\bm z_t}^{\dagger}\bm F(\bm z_t)\right)_{1:n}
=
\overline{
\left(J_{\bm z_t}^{\dagger}\bm F(\bm z_t)\right)_{n+1:2n}}.
\end{equation}
Indeed, note that the two block columns of
\(J_{\bm z_t}\) in \eqref{eq:jacobian} are conjugate partners and
\(\bm F(\bm z_t)\in\mathbb R^m\) by \eqref{eq:residual}. Therefore, for any vector \(\bm w=[\bm w_1;\bm w_2] \in \C^{2n}\), we have
\[
J_{\bm z_t}
\begin{bmatrix}
\overline{\bm w_2}\\
\overline{\bm w_1}
\end{bmatrix}
+\bm F(\bm z_t)
= \overline{
J_{\bm z_t}
\begin{bmatrix}
\bm w_1\\
\bm w_2
\end{bmatrix}
+\bm F(\bm z_t)}.
\]
This implies that if \(\bm w=[\bm w_1;\bm w_2] \in \C^{2n}\) is a least-squares solution of
\eqref{rs:eq:uncmini}, then
\(\widetilde{\bm w}:=[\overline{\bm w_2}; \overline{\bm w_1}]\) is also a least-squares solution and
\(\|\widetilde{\bm w}\|_2=\|\bm w\|_2\). In particular,  \(-J_{\bm z_t}^{\dagger}\bm F(\bm z_t)\) is a
minimum-norm least-squares solution of \eqref{rs:eq:uncmini}. {Uniqueness of the minimum-norm least-squares solution therefore implies \eqref{rs:eq:mininorm}.}

Next, we prove \(\left(-J_{\bm z_t}^{\dagger}\bm F(\bm z_t)\right)_{1:n} \in\mathcal H_{\bm z_t}\), where
\(\mathcal H_{\bm z_t}\) is defined in \eqref{eq:horizontal-space}. To see this, observe that 
\[
2\operatorname{Im}\!\left(
\bm z_t^* \left(J_{\bm z_t}^{\dagger}\bm F(\bm z_t)\right)_{1:n}
\right) = \begin{bmatrix}
\mathrm{i}\bm z_t\\
\overline{\mathrm{i}\bm z_t}
\end{bmatrix}^{\!*}
\left(J_{\bm z_t}^{\dagger}\bm F(\bm z_t)\right) =0,
\]
where the first equality follows from \eqref{rs:eq:mininorm}, while the
last equality follows because the Moore--Penrose solution \(J_{\bm z_t}^{\dagger}\bm F(\bm z_t)\) lies in
\(\operatorname{range}(J_{\bm z_t}^*)\), which is orthogonal to \(\ker(J_{\bm z_t})\), and $ \begin{bmatrix}
\mathrm{i}\bm z_t\\
\overline{\mathrm{i}\bm z_t}
\end{bmatrix} \in \ker(J_{\bm z_t})$ due to the identity
\[
J_{\bm z_t}
\begin{bmatrix}
\mathrm{i}\bm z_t\\
\overline{\mathrm{i}\bm z_t}
\end{bmatrix}
=\bm0.
\]
 Consequently, \(\left(-J_{\bm z_t}^{\dagger}\bm F(\bm z_t)\right)_{1:n}\) is also
the minimum-norm solution of
\begin{equation} \label{rs:eq:equi1}
\min_{\bm h\in\mathbb C^n}\quad
\frac12
\left\|
J_{\bm z_t}
\begin{bmatrix}
\bm h\\
\overline{\bm h}
\end{bmatrix}
+\bm F(\bm z_t)
\right\|_2^2 \qquad 
\text{s.t. }\quad \operatorname{Im}(\bm z_t^*\bm h)=0.
\end{equation}
Since the columns of \(U_{\bm z_t}\) form a real-orthonormal basis of
\(\mathcal H_{\bm z_t}\), every \(\bm h\in\mathcal H_{\bm z_t}\)
has a unique representation \(\bm h=U_{\bm z_t}\bm q\) with
\(\bm q\in\mathbb R^{2n-1}\).  By the definition of  \(\mathsf R_{\bm z_t}\) given in
\eqref{rs:eq:restricted-block-jacobian}, the program \eqref{rs:eq:equi1} is equivalent to
\begin{equation} \label{rs:eq:equi2}
\min_{\bm q\in\mathbb R^{2n-1}}
\left\|\mathsf R_{\bm z_t}\bm q+\bm F(\bm z_t)\right\|_2^2.
\end{equation}
{Since \(U_{\bm z_t}\) is a real isometry, \(\|U_{\bm z_t}\bm q\|_2=\|\bm q\|_2\), minimizing the norm of \(\bm h\) is equivalent to minimizing the norm of \(\bm q\). The minimum-norm solution of \eqref{rs:eq:equi2} is}
\[
\bm q=-\mathsf R_{\bm z_t}^{\dagger}\bm F(\bm z_t),
\]
and hence
\[
\left(J_{\bm z_t}^{\dagger}\bm F(\bm z_t)\right)_{1:n}
= U_{\bm z_t}\mathsf R_{\bm z_t}^{\dagger}\bm F(\bm z_t).
\]
This completes the proof.
\end{proof}

The next lemma establishes concentration estimates for CDP sampling.

\begin{lemma} \label{rs:prop:cdp-regularity}
Assume that the entries of the masks $\dkh{\vd_l}_{l=1}^L$ are independent copies of a complex random variable $d$ satisfying \eqref{eq:mask}. {Fix \(\bm u\in\mathbb C^n\) and \(\bm v\in\mathbb C^n\setminus\{\bm0\}\), independently of the masks. Then the following statements hold.}
\begin{itemize}
\item[(1)] For any fixed  \(\delta\in(0,1)\), there is a constant \(C_{M,\delta}>0\), depending only on \(\delta\) and the mask bound \(M\), such that, if \(L\ge C_{M,\delta}\log^2 n\), then with probability at least $1-n^{-3}$,
\begin{equation}
\|\mathcal J_{\bm v}\bm h\|_2^2
\ge 2(1-\delta) \|\bm v\|_2^2\|\bm h\|_2^2
\quad
\text{for all }\bm h\in\mathcal H_{\bm v}.
\label{rs:eq:jacobian-lower}
\end{equation}
\item[(2)] For each fixed \(\beta>0\), with probability at least \(1-\tfrac18(nL)^{-\beta}\),
\begin{equation}
 \begin{aligned}
 &\left\|\frac1{nL}\sum_{\ell=1}^L\sum_{k=0}^{n-1}
       |\bm f_k^*{\bm D}_\ell^*\bm u|^2{\bm D}_\ell\bm f_k\bm f_k^*{\bm D}_\ell^*
       -(\|\bm u\|_2^2I_n+\bm u\bm u^*)\right\|_2\\
 &\qquad\le C_{M,\beta} \left(\sqrt{\frac{\log(nL)}L}+\frac{\log^2(nL)}L\right) \|\bm u\|_2^2.
 \end{aligned}
 \label{eq:common-spectral-rate}
\end{equation}
\end{itemize}
%\begin{equation}
%\Bigg\| \frac1{nL}\sum_{\ell=1}^L\sum_{k=0}^{n-1}
%|\bm f_k^*D_\ell^*\bm u|^2
%D_\ell\bm f_k\bm f_k^*D_\ell^* -\bigl(\|\bm u\|_2^2I_n+\bm u\bm u^*\bigr)
%\Bigg\|_2
%\le
%\delta\|\bm u\|_2^2.
%\label{rs:eq:fixed-vector-operator}
%\end{equation}
Here, $\mathcal J_{\bm v}$ and $\mathcal H_{\bm v}$ are defined in \eqref{eq:jac} and  \eqref{eq:horizontal-space}, respectively. 
\end{lemma}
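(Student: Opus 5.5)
Both parts reduce to concentration of sums of independent per-mask matrices; the real work is bounding each summand's norm and controlling the failure probability.

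\emph{Part (1).} Write $\bm a_{\ell,k}={\bm D}_\ell\bm f_k$. For $\bm h\in\mathcal H_{\bm v}$,
\[
\|\mathcal J_{\bm v}\bm h\|_2^2=\frac4m\sum_{\ell,k}\bigl(\Rea(\bm v^*\bm a_{\ell,k}\bm a_{\ell,k}^*\bm h)\bigr)^2 .
\]
Expanding $(\Rea w)^2=\tfrac12|w|^2+\tfrac12\Rea(w^2)$ splits this into a Hermitian quadratic form and a bilinear form:
\[
\frac2m\sum_{\ell,k}|\bm a_{\ell,k}^*\bm v|^2\,|\bm a_{\ell,k}^*\bm h|^2
\quad\text{and}\quad
\frac2m\Rea\sum_{\ell,k}(\bm v^*\bm a_{\ell,k})^2(\bm a_{\ell,k}^*\bm h)^2 .
\]
Using Fourier orthogonality and the moment conditions \eqref{eq:mask} ($\E d^2=0$, $\E|d|^4=2$), I would compute the expectations:
\[
\E\frac1m\sum_{\ell,k}|\bm a^*_{\ell,k}\bm v|^2\,\bm a_{\ell,k}\bm a_{\ell,k}^*=\|\bm v\|_2^2 I+\bm v\bm v^*,
\qquad
\E\frac1m\sum_{\ell,k}(\bm a_{\ell,k}^*\bm v)^2\,\overline{\bm a_{\ell,k}}\bm a_{\ell,k}^*=\bar{\bm v}\bm v^*,
\]
up to diagonal correction terms coming from $\E|d|^4-2=0$. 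The population value is then
\[
2\|\bm v\|_2^2\|\bm h\|_2^2+2|\bm v^*\bm h|^2+2\Rea\bigl((\bm v^*\bm h)^2\bigr)=2\|\bm v\|_2^2\|\bm h\|_2^2+4\bigl(\Rea(\bm v^*\bm h)\bigr)^2,
\]
which is at least $2\|\bm v\|_2^2\|\bm h\|_2^2$ on $\mathcal H_{\bm v}$.

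It then suffices to show that both empirical matrices deviate from their means by at most $\tfrac{\delta}{2}\|\bm v\|_2^2$ in operator norm, uniformly over $\bm h$. Each matrix is $\frac1L\sum_\ell X_\ell$ with the $X_\ell$ independent across masks, where
\[
X_\ell=\frac1n\sum_k|\bm a_{\ell,k}^*\bm v|^2\,\bm a_{\ell,k}\bm a_{\ell,k}^*={\bm D}_\ell\, \mathcal F^*\diag\bigl(|\mathcal F{\bm D}_\ell^*\bm v|^2\bigr)\mathcal F\,{\bm D}_\ell^*/n ,
\]
and similarly for the second (complex-symmetric) matrix, handled through a Hermitian dilation.

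The main obstacle is that $\|X_\ell\|$ is not bounded. It is at most $M^2\max_k|\bm f_k^*{\bm D}_\ell^*\bm v|^2$. Since $\bm f_k^*{\bm D}_\ell^*\bm v=\sum_t \bar d_{\ell,t}v_t e^{-2\pi i kt/n}$ is a sum of independent bounded terms, Hoeffding and a union bound over $k,\ell$ give
\[
\max_{k,\ell}|\bm f_k^*{\bm D}_\ell^*\bm v|^2\le C M^2\log(nL)\,\|\bm v\|_2^2
\]
with probability at least $1-n^{-4}$. I would truncate on this event, apply matrix Bernstein to the truncated summands, and account for the small bias that truncation introduces. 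For the variance, $\|\E X_\ell^2\|\lesssim \|\bm v\|_2^4\log n$ after the same truncation. Matrix Bernstein then yields deviation
\[
\lesssim\Bigl(\sqrt{\tfrac{\log^2 n}{L}}+\tfrac{\log^2 n}{L}\Bigr)\|\bm v\|_2^2,
\]
which is below $\delta\|\bm v\|_2^2/2$ once $L\ge C_{M,\delta}\log^2 n$, with failure probability at most $n^{-3}$. If the variance bound only gives an extra log factor, I would instead use the symmetry $d\overset{\rm law}{=}-d$ and a Rademacher/Khintchine argument on the per-mask sums.

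\emph{Part (2).} This is the same matrix-Bernstein computation for $\bm u$ in place of $\bm v$, tracked with explicit rates. Truncate at level $C_\beta M^2\log(nL)\|\bm u\|_2^2$, which holds with probability at least $1-\tfrac1{16}(nL)^{-\beta}$. The per-mask summand then has norm $\lesssim\log(nL)\|\bm u\|_2^2$. The variance proxy is computed as $\|\sum_\ell\E X_\ell^2\|\lesssim L\log(nL)\|\bm u\|_2^4$ by expanding the fourth moments via Fourier orthogonality. Bernstein with deviation parameter $t\asymp\beta\log(nL)$ gives
\[
C\Bigl(\sqrt{\tfrac{\log(nL)}{L}}+\tfrac{\log^2(nL)}{L}\Bigr)\|\bm u\|_2^2 ,
\]
where the extra logarithm in the second term comes from the truncation level times the deviation parameter. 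The truncation bias is $\lesssim (nL)^{-\beta}$, which is absorbed into this bound.
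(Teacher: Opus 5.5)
\textbf{Part (2): the variance step does not give the stated rate.} Your outline matches the paper: per-mask blocks, truncation at level $O(\log(nL))$, matrix Bernstein, and absorption of the truncation bias. The problem is your variance proxy $\|\sum_\ell\E X_\ell^2\|_2\lesssim L\log(nL)\|\bm u\|_2^4$. With deviation parameter $\asymp\log(nL)$, Bernstein's Gaussian term is then $\frac1L\sqrt{L\log(nL)\cdot\log(nL)}=\log(nL)/\sqrt L$, not $\sqrt{\log(nL)/L}$. Once $L\ge\log^3(nL)$, this is worse than $\sqrt{\log(nL)/L}+\log^2(nL)/L$ by a factor of order $\sqrt{\log(nL)}$, and that is exactly the regime $L\gtrsim\log^6(nL)$ where the rate is used later.

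What you need is a log-free bound $\|\E X_\ell^2\|_2\le C_M\|\bm u\|_2^4$ for the \emph{untruncated} block. Truncation should not enter the variance at all: since $X_\ell\succeq0$, one has $\E[(X_\ell\mathbf 1_{\mathcal E_\ell}-\E[X_\ell\mathbf 1_{\mathcal E_\ell}])^2]\preceq\E X_\ell^2$. If your $\log(nL)$ came from $X_\ell^2\preceq\|X_\ell\|_2X_\ell$ on the truncation event, that is precisely the loss to avoid. The paper gets the log-free bound in two steps. First, ${\bm D}_\ell^*{\bm D}_\ell\preceq M^2I_n$ and Fourier orthogonality give $X_\ell^2\preceq\frac{M^2}{n}\sum_k|\bm f_k^*{\bm D}_\ell^*\bm u|^4{\bm D}_\ell\bm f_k\bm f_k^*{\bm D}_\ell^*$. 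Second, Cauchy--Schwarz with the subgaussian eighth and fourth moments of $\bm f_k^*{\bm D}_\ell^*\bm u$ and $\bm f_k^*{\bm D}_\ell^*\bm w$ bounds this by a constant.

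A smaller point: the bias is at most $M^4n\,\Prob(\mathcal E_\ell^c)$. The truncation failure probability must therefore be polynomially smaller than $(nL)^{-\beta}$ to absorb the factor $n$; the paper uses $4m^{-2\beta-21}$ per mask. It is not enough that the bias be ``$\lesssim(nL)^{-\beta}$''.

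\textbf{Part (1): correct, by a genuinely different route.} The paper lifts to $Z=\bm v\bm h^*+\bm h\bm v^*$, uses $\|Z\|_F^2\ge2\|\bm v\|_2^2\|\bm h\|_2^2$ on $\mathcal H_{\bm v}$, and cites the Gross--Krahmer--Kueng tangent-space injectivity. That needs only $L\gtrsim\log n$, but it relies on an external result stated for real masks. Your direct operator-norm concentration of the Hermitian part, and of the complex-symmetric part via dilation, is self-contained. It pays for the truncation: the second Bernstein term is of order $\log(nL)\log n/L$ rather than your $\log^2n/L$. This is still small under $L\ge C_{M,\delta}\log^2 n$, and an extra logarithm in the variance is harmless here because only constant accuracy is needed.

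Fix two slips, neither of which changes the conclusion.
\begin{enumerate}
\item The bilinear matrix should be $\frac1m\sum_{\ell,k}(\bm v^*\bm a_{\ell,k})^2\,\overline{\bm a_{\ell,k}}\,\overline{\bm a_{\ell,k}}^{\top}$. With $(\bm a_{\ell,k}^*\bm v)^2$, as you wrote, the mean is diagonal and involves $\E d^4$, which the mask model does not control.
\item The mean of the corrected matrix is $2\overline{\bm v}\,\overline{\bm v}^{\top}$, so the population value is $2\|\bm v\|_2^2\|\bm h\|_2^2+2|\bm v^*\bm h|^2+4\Rea((\bm v^*\bm h)^2)$. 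This vanishes at $\bm h=\mathrm i\bm v$, as it must; your formula does not. On $\mathcal H_{\bm v}$ it equals $2\|\bm v\|_2^2\|\bm h\|_2^2+6(\bm v^*\bm h)^2$, so the lower bound survives.
\end{enumerate}
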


\begin{proof}
We first prove \eqref{rs:eq:jacobian-lower}. For any $\bm h\in\mathcal H_{\bm v}$, let \(Z=\bm v\bm h^*+\bm h\bm v^*\). Then
\begin{equation}
\|Z\|_F^2 =
2\|\bm v\|_2^2\|\bm h\|_2^2
+2\operatorname{Re}\!\left[(\bm v^*\bm h)^2\right] \ge 2\|\bm v\|_2^2\|\bm h\|_2^2.
\label{rs:eq:lifted-horizontal-norm}
\end{equation}
Define the lifted measurement map $\mathcal A: \mathcal H^{n\times n} \to \R^{nL}$ on Hermitian matrices as 
\[
\mathcal A(X) = \bigl(\bm f_k^*{\bm D}_\ell^* X{\bm D}_\ell\bm f_k\bigr)_{1\le \ell \le L,\,0\le k<n},
\]
where $\mathcal H^{n\times n}$ denotes the set of all Hermitian matrices. The definition of $\mathcal J_{\bm v}\bm h$ in \eqref{eq:jac} gives 
\[
\mathcal J_{\bm v}\bm h=\frac1{\sqrt{nL}} \mathcal A(Z).
\]
Note that $Z\in\mathcal T_{\bm v}$,  where $\mathcal T_{\bm v} = \{\bm v\bm h^*+\bm h\bm v^*: \bm h\in\mathbb C^n\}$ is the tangent space to the manifold of all rank-$1$ Hermitian matrices at $\bm v \bm v^*$.  The parameterized robust-injectivity estimate in the proof of
\cite[Proposition~8, equation~(36)]{gross2017improved} gives, for every fixed \(\delta\in(0,1)\),
with probability at least \(1- n^2\exp(-c\delta^2L/M^8)\), \footnote{{The original proposition concerns real masks; the corresponding complex-mask robust-injectivity estimate is used in \cite[Section~2.1]{huang2026cdp}. Under \eqref{eq:mask}, symmetry supplies the cancellation \(\E(|d|^2d)=0\) needed in the mask-block moment argument.}}
\begin{equation}
\frac1{nL}\|\mathcal A(Z)\|_2^2
\ge
(1-\delta) \|Z\|_F^2 \qquad \mbox{for all} \quad Z\in\mathcal T_{\bm v}.
\label{rs:eq:tangent-injectivity}
\end{equation}
 {Substituting \eqref{rs:eq:lifted-horizontal-norm} into \eqref{rs:eq:tangent-injectivity} and taking} \(L\ge C_{M,\delta}\log n\) for some constant \(C_{M,\delta}\) depending only on \(\delta\) and the mask bound \(M\), we have that with probability at least $1-n^{-3}$,
 \[
 \|\mathcal J_{\bm v}\bm h\|_2^2
\ge 2(1-\delta) \|\bm v\|_2^2\|\bm h\|_2^2
\quad
\text{for all} \quad \bm h\in\mathcal H_{\bm v}.
\]
 This proves \eqref{rs:eq:jacobian-lower}.

We next prove \eqref{eq:common-spectral-rate}. First, {the bound is immediate if}
\(\bm u=\bm0\). Otherwise, by homogeneity we assume
\(\|\bm u\|_2=1\). For each complete mask, set
\begin{equation}
B_\ell
=
\frac1n\sum_{k=0}^{n-1}
|\bm f_k^*{\bm D}_\ell^*\bm u|^2
{\bm D}_\ell\bm f_k\bm f_k^*{\bm D}_\ell^*.
\label{rs:eq:mask-block-matrix}
\end{equation}
{The CDP moment identity \cite[Lemma~3.1]{candes2015cdp} gives the expectation below. The second-moment bound is dimension independent:}
\begin{equation}
\mathbb E B_\ell=I_n+\bm u\bm u^*,
\qquad
{\|\mathbb E B_\ell^2\|_2\le C_M}.
\label{eq:mask-block-moments}
\end{equation}
{
Indeed, the scalar Hoeffding inequality gives \(\Prob\{|\bm f_k^*{\bm D}_\ell^*\bm v|>a\}\le4e^{-a^2/(4M^2)}\) for every unit vector \(\bm v\) and \(a>0\); integrating this tail bounds the fourth and eighth moments by constants depending only on \(M\). Fourier orthogonality and \({\bm D}_\ell^*{\bm D}_\ell\preceq M^2I_n\) imply
\[
 B_\ell^2\preceq\frac{M^2}{n}\sum_k|\bm f_k^*{\bm D}_\ell^*\bm u|^4{\bm D}_\ell\bm f_k\bm f_k^*{\bm D}_\ell^*.
\]
For every unit vector \(\bm w\), Cauchy--Schwarz therefore gives
\[
 \bm w^*\E B_\ell^2\bm w\le\frac{M^2}{n}\sum_k\bigl(\E|\bm f_k^*{\bm D}_\ell^*\bm u|^8\bigr)^{1/2}\bigl(\E|\bm f_k^*{\bm D}_\ell^*\bm w|^4\bigr)^{1/2}\le C_M.
\]
}
Fix \(\beta>0\) and write \(m=nL\).  Introduce the event
\[
 \mathcal E_\ell
 =\left\{\max_k|\bm f_k^*{\bm D}_\ell^*\bm u|^2 \le 8M^2 \xkh{\beta+11}\log m\right\}.
\]
For any fixed \(\bm v\in\mathbb C^n\), Hoeffding's inequality applied to the real and imaginary parts gives
\begin{equation}
 \Prob\{|\bm f_k^*{\bm D}_\ell^*\bm v|> \sqrt{8M^2 \xkh{\beta+11}\log m} \|\bm v\|_2\} \le4 m^{-(2\beta+22)}.
 \label{eq:scalar-mask-tail}
\end{equation}
A union bound gives
\begin{equation}
 \Prob(\mathcal E_\ell^c)
 \le4n\cdot  m^{-(2\beta+22)}
 \le {4m^{-2\beta-21}}.
 \label{rs:eq:block-truncation-probability}
\end{equation}
On \(\bigcap_{\ell=1}^L\mathcal E_\ell\), the identity
\(\E B_\ell=I_n+\bm u\bm u^*\) gives
\begin{align}
 &\left\|\frac1L\sum_{\ell=1}^L B_\ell
                         -(I_n+\bm u\bm u^*)\right\|_2\notag\\
 &=\left\|\frac1L\sum_{\ell=1}^L
       (B_\ell\mathbf1_{\mathcal E_\ell}
                    -\E[B_\ell\mathbf1_{\mathcal E_\ell}])
       -\frac1L\sum_{\ell=1}^L
                    \E[B_\ell\mathbf1_{\mathcal E_\ell^c}]\right\|_2
       \notag\\
 &\le\left\|\frac1L\sum_{\ell=1}^L
       (B_\ell\mathbf1_{\mathcal E_\ell}
                    -\E[B_\ell\mathbf1_{\mathcal E_\ell}])\right\|_2
       +\frac1L\sum_{\ell=1}^L
             \|\E[B_\ell\mathbf1_{\mathcal E_\ell^c}]\|_2
 =:I+II.
 \label{rs:eq:IpulsII}
\end{align}

\emph{Bound for \(II\).}
Fourier orthogonality and the mask bound give
\[
 \|B_\ell\|_2
 \le\max_k|\bm f_k^*{\bm D}_\ell^*\bm u|^2
       \left\|\frac1n\sum_k{\bm D}_\ell\bm f_k\bm f_k^*{\bm D}_\ell^*\right\|_2
 \le M^4n,
\]
where \(|\bm f_k^*{\bm D}_\ell^*\bm u|\le M\sqrt n\).
Consequently,
\begin{equation}
 II\le\frac1L\sum_{\ell=1}^L
       \E\|B_\ell\mathbf1_{\mathcal E_\ell^c}\|_2
 \le M^4 n \cdot \mathbb P(\mathcal E_\ell^c)
 \le {4M^4m^{-2\beta-20}}.
 \label{rs:eq:truncation-bias}
\end{equation}

\emph{Bound for \(I\).} We next use the {matrix Bernstein inequality} to bound the term $I$. To this end, note that each matrix $B_\ell$ is {positive semidefinite}. Thus, 
\[
\left\|\mathbb E\!\left[ B_\ell\mathbf{1}_{\mathcal E_\ell}\right]\right\|_2\le \left\|\mathbb E\!  B_\ell \right\|_2 =\norm{I_n+\bm u\bm u^*}  \le2. 
\]
This gives
\begin{eqnarray*}
\norm{B_\ell\mathbf{1}_{\mathcal E_\ell} -\mathbb E[B_\ell\mathbf{1}_{\mathcal E_\ell}]} &\le & \norm{B_\ell\mathbf{1}_{\mathcal E_\ell}} +\norm{\mathbb E[B_\ell\mathbf{1}_{\mathcal E_\ell}]} \\
& \le & \max_k|\bm f_k^*{\bm D}_\ell^*\bm u|^2 \mathbf{1}_{\mathcal E_\ell} \cdot 
       \left\|\frac1n\sum_k{\bm D}_\ell\bm f_k\bm f_k^*{\bm D}_\ell^*\right\|_2 +2 \\
&= & \max_k |{\bm f_k^*{\bm D}_\ell^*}\bm u|^2  \mathbf{1}_{\mathcal E_\ell} \cdot  \|{\bm D}_\ell {\bm D}_\ell^*\|_2+2 \\
&\le & 8M^4 \xkh{\beta+11}\log(nL)+2.
\end{eqnarray*}
Moreover, we have
\[
0 \preceq \mathbb E\!\left[ \bigl(B_\ell\mathbf{1}_{\mathcal E_\ell} -\mathbb E[B_\ell\mathbf{1}_{\mathcal E_\ell}]\bigr)^2 \right] = \mathbb E\!\left[
\bigl(B_\ell\mathbf{1}_{\mathcal E_\ell} \bigr)^2 \right] - \xkh{\mathbb E[B_\ell\mathbf{1}_{\mathcal E_\ell}]}^2 
\preceq
\mathbb E B_\ell^2,
\]
and hence 
\[
\norm{\sum_{\ell=1}^L \mathbb E\!\left[ \bigl(B_\ell\mathbf{1}_{\mathcal E_\ell} -\mathbb E[B_\ell\mathbf{1}_{\mathcal E_\ell}]\bigr)^2 \right] } \le \sum_{\ell=1}^L \|\mathbb E B_\ell^2\|_2\le {C_M L}.
\]
Matrix Bernstein \cite[Theorem~1.4]{tropp2012} yields, for every \(t>0\),
\begin{equation}
 \Prob\{I>t\}
 \le2n\exp\left(-\frac{Lt^2}{{2\bigl[C_M+(8M^4(\beta+11)\log(nL)+2)t/3\bigr]}}\right).
 \label{rs:eq:truncated-matrix-bernstein}
\end{equation}
{Choose \(t=\frac12 C_{M,\beta}(\sqrt{\log(nL)/L}+\log^2(nL)/L)\), with \(C_{M,\beta}\) large enough that the exponent is at least \((\beta+1)\log m+\log32\). Since \(n\le m\), this gives}
\begin{equation} \label{rs:eq:truncation-bias1}
 \Prob\dkh{I> \frac{C_{M,\beta}}2 \left(\sqrt{\frac{\log(nL)}L}+\frac{\log^2(nL)}L\right)} \le \frac1{16}m^{-\beta}
\end{equation}
Substituting \eqref{rs:eq:truncation-bias} and \eqref{rs:eq:truncation-bias1} into \eqref{rs:eq:IpulsII} and noting that
\({m^{-2\beta-20}}\le m^{-1}\le\sqrt{\log(nL)/L}\),  we obtain \eqref{eq:common-spectral-rate} for unit \(\bm u\), with total failure probability at most
\[
 \frac1{16}m^{-\beta}
       +\sum_{\ell=1}^L\Prob(\mathcal E_\ell^c)
 \le\frac1{16}m^{-\beta}+{4Lm^{-2\beta-21}}
 \le\frac18m^{-\beta}.
\]
The bias is absorbed by increasing \(C_{M,\beta}\).  This completes the proof.
\end{proof}

{We now prove the main results in Section~\ref{sec:resampled-results}.}

\subsection{Proof of Theorem \ref{rs:cor:spectral-initialization}} \label{pr:spec}
\begin{proof}
Let \(m=nL\). By the definition of \(Y\) in \eqref{eq:init}, the observation model $y_{\ell,k}=|\bm f_k^*{\bm D}_\ell^*\bm x_\star|^2 +\xi_{\ell,k}$, and the triangle inequality, we obtain, with probability at least $1-n^{-3}$,
\begin{align}
 &\left\|Y-(\bm x_\star\bm x_\star^*
                    +\|\bm x_\star\|_2^2I_n)\right\|_2\notag\\
 &\le\left\|\frac1m\sum_{\ell,k}
       |\bm f_k^*{\bm D}_\ell^*\bm x_\star|^2
       {\bm D}_\ell\bm f_k\bm f_k^*{\bm D}_\ell^*
       -(\bm x_\star\bm x_\star^*+\|\bm x_\star\|_2^2I_n)
       \right\|_2+\left\|\frac1m\sum_{\ell,k}\xi_{\ell,k}
                      {\bm D}_\ell\bm f_k\bm f_k^*{\bm D}_\ell^*\right\|_2\notag\\
 &\le\frac{\|\bm x_\star\|_2^2}{128}
       +\|\bm\xi\|_\infty
           \left\|\frac1m\sum_{\ell,k}
                     {\bm D}_\ell\bm f_k\bm f_k^*{\bm D}_\ell^*\right\|_2\notag\\
 &=\frac{\|\bm x_\star\|_2^2}{128}
       +\|\bm\xi\|_\infty
           \left\|\frac1L\sum_{\ell=1}^L {\bm D}_\ell {\bm D}_\ell^*\right\|_2\notag\\
 &\le\frac{\|\bm x_\star\|_2^2}{128}
        +M^2\|\bm\xi\|_\infty
 \le\frac{\|\bm x_\star\|_2^2}{64},
 \label{rs:eq:noisy-initializer-matrix-event}
\end{align}
The second inequality uses part~(2) of Lemma~\ref{rs:prop:cdp-regularity} with \(\bm u=\bm x_\star\), \(\beta=3\), and \(C_M\) large enough to make the relative bound at most \(1/128\), together with the positive semidefiniteness of each \({\bm D}_\ell\bm f_k\bm f_k^*{\bm D}_\ell^*\). The last inequality uses the assumption $M^2\|\bm\xi\|_\infty \le\frac{\|\bm x_\star\|_2^2}{128}$.
Let \(\bm u_\star=\bm x_\star/\|\bm x_\star\|_2\), and let \(\bm v_0\) be a unit
leading eigenvector of \(Y\), with eigenvalue \(\lambda_1(Y)\).  {Weyl's inequality}
and \eqref{rs:eq:noisy-initializer-matrix-event} give
\begin{align*}
 \lambda_1(Y) \ge2\|\bm x_\star\|_2^2
   -\|Y-(\bm x_\star\bm x_\star^*
                   +\|\bm x_\star\|_2^2I_n)\|_2 \ge(2-1/64)\|\bm x_\star\|_2^2.
\end{align*}
The Davis--Kahan \(\sin\Theta\) theorem \cite{davis1970rotation} gives
\begin{align*}
 \dist(\bm v_0,\bm u_\star)
 &\le
 \frac{\sqrt2\|Y-(\bm x_\star\bm x_\star^*
                     +\|\bm x_\star\|_2^2I_n)\|_2}
      {\lambda_1(Y)-\|\bm x_\star\|_2^2}\\
 &\le\frac{\sqrt2\,\|\bm x_\star\|_2^2/64}
              {[(2-1/64)-1]\|\bm x_\star\|_2^2}
 =\frac{\sqrt2}{63}.
\end{align*}

It remains to control the scale \(\lambda\). {Observe that} 
\[
\frac1m\sum_{\ell=1}^L\sum_{k=0}^{n-1}|\bm f_k^*{\bm D}_\ell^*\bm x_\star|^2
= \frac1L\sum_{\ell=1}^{L} \bm x_\star^*{\bm D}_{\ell}{\bm D}_{\ell}^*\bm x_\star .
\]
The random variables $\bm x_\star^*{\bm D}_{\ell}{\bm D}_{\ell}^*\bm x_\star$ are independent, have mean \(\|\bm x_\star\|_2^2\), and belong to
\([0,M^2\|\bm x_\star\|_2^2]\). Hoeffding's inequality gives, when \(L\ge C_M\log n\), with probability at least \(1-n^{-3}\),
\[
\left|
\frac1m\sum_{\ell=1}^L\sum_{k=0}^{n-1} |\bm f_k^*{\bm D}_\ell^*\bm x_\star|^2-\|\bm x_\star\|_2^2
\right|
\le\frac{\|\bm x_\star\|_2^2}{128}.
\]
This gives
\begin{align*}
 |\lambda-\|\bm x_\star\|_2^2|
 &=\left|\frac1m\sum_{\ell,k}|\bm f_k^*{\bm D}_\ell^*\bm x_\star|^2
       -\|\bm x_\star\|_2^2+\frac1m\sum_{\ell,k}\xi_{\ell,k}\right|\\
 &\le\frac{\|\bm x_\star\|_2^2}{128}
       +\left|\frac1m\sum_{\ell,k}\xi_{\ell,k}\right|
 \le\frac{\|\bm x_\star\|_2^2}{64},
\end{align*}
where the last inequality follows from 
\[
\left| \frac1m\sum_{\ell=1}^L\sum_{k=0}^{n-1}\xi_{\ell,k}\right|
\le\frac{\|\bm\xi\|_1}{m}
\le\frac{\|\bm x_\star\|_2^2}{128}.
\]
In particular, \(\lambda\ge(63/64)\|\bm x_\star\|_2^2>0\).
This gives 
\[
\left|\sqrt\lambda-\|\bm x_\star\|_2\right| =\frac{\Abs{\lambda - \|\bm x_\star\|_2^2}}{\sqrt\lambda+\|\bm x_\star\|_2}
\le\frac{\|\bm x_\star\|_2}{64}.
\]
  Since
\(\bm z_0=\sqrt{\lambda}\,\bm v_0\) and
\(\bm x_\star=\|\bm x_\star\|_2\bm u_\star\), the triangle inequality gives
\begin{align*}
\dist(\bm z_0,\bm x_\star)
&= \min_{\phi \in \R} \left\|\sqrt{\lambda}\,\bm v_0
       -\|\bm x_\star\|_2\bm u_\star e^{i\phi}\right\|_2 \\
&=  \min_{\phi \in \R} \left\|\|\bm x_\star\|_2(\bm v_0-\bm u_\star e^{i\phi} )
       +(\sqrt{\lambda}-\|\bm x_\star\|_2)\bm v_0\right\|_2 \\
&\le \|\bm x_\star\|_2\,\dist(\bm v_0,\bm u_\star)
       +\left|\sqrt{\lambda}-\|\bm x_\star\|_2\right| \\
       & \le \xkh{\frac{\sqrt{2}}{63}+\frac1{64}}\|\bm x_\star\|_2
       <\frac18\|\bm x_\star\|_2.
\end{align*}
{Both required events hold with probability at least \(1-2n^{-3}\), provided \(L\ge C_M\log^2 n\). This completes the proof.}
\end{proof}

\subsection{Proof of Theorem~\ref{rs:thm:complete}}
\label{sec:resampled-onestep-proof}

\begin{proof}
{Let \(m=nL\), and condition on the current iterate \(\bm z_t\), which is independent of the current mask block. The basin assumption gives \(\|\bm z_t\|_2\ge7\|\bm x_\star\|_2/8>0\). By \eqref{eq:alignment},}
\[
\bm z_t-\widetilde{\bm x}_\star(\bm z_t)\in\mathcal H_{\bm z_t},
\qquad
\|\bm z_t-\widetilde{\bm x}_\star(\bm z_t)\|_2
=
\dist(\bm z_t,\bm x_\star).
\]
For any measurement vector \({\bm D}_\ell\bm f_k\), the definition
\eqref{eq:jac} gives
\begin{align}
 &\sqrt m\,
 [\mathcal J_{\bm z_t}
       (\bm z_t-\widetilde{\bm x}_\star(\bm z_t))]_{\ell,k}\notag\\
 &=2\Rea\left[
 \overline{\bm f_k^*{\bm D}_\ell^*\bm z_t}\,
 \bm f_k^*{\bm D}_\ell^*(\bm z_t-\widetilde{\bm x}_\star(\bm z_t))
 \right]\notag\\
 &=2\Rea\left[
 \overline{\bm f_k^*{\bm D}_\ell^*\widetilde{\bm x}_\star(\bm z_t)}\,
 \bm f_k^*{\bm D}_\ell^*(\bm z_t-\widetilde{\bm x}_\star(\bm z_t))
 \right]+2|\bm f_k^*{\bm D}_\ell^*
             (\bm z_t-\widetilde{\bm x}_\star(\bm z_t))|^2\notag\\
 &=|\bm f_k^*{\bm D}_\ell^*\bm z_t|^2
   -|\bm f_k^*{\bm D}_\ell^*\bm x_\star|^2
   +|\bm f_k^*{\bm D}_\ell^*
             (\bm z_t-\widetilde{\bm x}_\star(\bm z_t))|^2,
 \label{rs:eq:jacobian-expansion}
\end{align}
where the last equality follows by expanding the squared magnitude:
\[
|\bm f_k^*{\bm D}_\ell^*\bm z_t|^2
       -|\bm f_k^*{\bm D}_\ell^*\bm x_\star|^2=2\Rea\left[
 \overline{\bm f_k^*{\bm D}_\ell^*\widetilde{\bm x}_\star(\bm z_t)}\,
 \bm f_k^*{\bm D}_\ell^*(\bm z_t-\widetilde{\bm x}_\star(\bm z_t))
 \right]+|\bm f_k^*{\bm D}_\ell^*
             (\bm z_t-\widetilde{\bm x}_\star(\bm z_t))|^2.
\]
Combining \eqref{rs:eq:jacobian-expansion} with \eqref{eq:residual}
and the observation model gives
\begin{equation}
\mathcal J_{\bm z_t}
       (\bm z_t-\widetilde{\bm x}_\star(\bm z_t))-\bm F(\bm z_t)=\frac1{\sqrt m}
   \left(|\bm f_k^*{\bm D}_\ell^*
       (\bm z_t-\widetilde{\bm x}_\star(\bm z_t))|^2\right)_{\ell,k}
       +\frac{\bm\xi}{\sqrt m}.
 \label{rs:eq:componentwise-cancellation}
\end{equation}
By the definition of \(\widetilde{\bm x}_\star(\bm z_t)\) in \eqref{eq:alignment}, we have \(\bm z_t-\widetilde{\bm x}_\star(\bm z_t) \in \mathcal H_{\bm z_t}\). Hence, 
\[
\bm z_t-\widetilde{\bm x}_\star(\bm z_t)=U_{\bm z_t}\bm q
\]
for some \(\bm q\in\mathbb R^{2n-1}\).
For any $0<\delta<1$, apply Lemma~\ref{rs:prop:cdp-regularity} with \(\bm v=\bm z_t\) and \(\bm u=\bm z_t-\widetilde{\bm x}_\star(\bm z_t)\). If \(L\ge {C_{M,\delta}} \log^2 n\), where ${C_{M,\delta}}>0$ depends only on $\delta, M$, then, with probability at least $1-n^{-3}$,
\begin{equation}
\sigma_{\min}(\mathsf R_{\bm z_t})
\ge \sqrt{2(1-\delta)}\,\|\bm z_t\|_2.
\label{rs:eq:horizontal-singular-value-step}
\end{equation}
Therefore,  the matrix \(\mathsf R_{\bm z_t} \in \mathbb R^{m\times (2n-1)}\) has full column rank. Thus,  \(\mathsf R_{\bm z_t}^\dagger\mathsf R_{\bm z_t}=I_{2n-1}\), and hence
\begin{align}
 \bm z_t-\widetilde{\bm x}_\star(\bm z_t)
 &=U_{\bm z_t}\bm q
 =U_{\bm z_t}\mathsf R_{\bm z_t}^\dagger\mathsf R_{\bm z_t}\bm q
 \notag\\
 &=U_{\bm z_t}\mathsf R_{\bm z_t}^\dagger
       J_{\bm z_t}
       \begin{bmatrix}U_{\bm z_t}\\\overline{U_{\bm z_t}}\end{bmatrix}
       \bm q\notag\\
 &=U_{\bm z_t}\mathsf R_{\bm z_t}^\dagger
       \mathcal J_{\bm z_t}
       \bigl(\bm z_t-\widetilde{\bm x}_\star(\bm z_t)\bigr).
 \label{rs:eq:zminusx}
\end{align}
Here, the second line uses \eqref{rs:eq:restricted-block-jacobian}, and the last line uses \(\bm q\in\mathbb R^{2n-1}\), so that
\(\overline{U_{\bm z_t}\bm q}=\overline{U_{\bm z_t}}\bm q\).
Also, \(U_{\bm z_t}\) is an isometry on real vectors.
Applying Lemma ~\ref{rs:lem:equivalent-update}  and using \eqref{rs:eq:componentwise-cancellation} and \eqref{rs:eq:zminusx}, we have
\begin{align}
 &\dist(\bm z_{t+1},\bm x_\star)\notag\\
 &\le\left\|\bm z_t-\widetilde{\bm x}_\star(\bm z_t)
       -U_{\bm z_t}\mathsf R_{\bm z_t}^\dagger\bm F(\bm z_t)
       \right\|_2\notag\\
 &=\left\|U_{\bm z_t}\mathsf R_{\bm z_t}^\dagger
       \left[\mathcal J_{\bm z_t}
        (\bm z_t-\widetilde{\bm x}_\star(\bm z_t))-\bm F(\bm z_t)
       \right]\right\|_2\notag\\
 &=\left\|\mathsf R_{\bm z_t}^\dagger
       \left[\frac1{\sqrt m}
       \left(|\bm f_k^*{\bm D}_\ell^*
        (\bm z_t-\widetilde{\bm x}_\star(\bm z_t))|^2\right)_{\ell,k}
       +\frac{\bm\xi}{\sqrt m}\right]\right\|_2.
 \label{rs:eq:distzk1}
\end{align}
{As noted above, \(\bm z_t\ne\bm0\).}
By \eqref{rs:eq:horizontal-singular-value-step}, one has
\begin{equation}
\|\mathsf R_{\bm z_t}^\dagger\|_2
\le \frac{1} {\sqrt{2(1-\delta)}\,\|\bm z_t\|_2}.
\label{rs:eq:pseudoinverse-bound}
\end{equation}
Part~(2) of Lemma~\ref{rs:prop:cdp-regularity} gives, with probability at least \(1-n^{-3}\),
\begin{equation}
\left\|
\frac1{\sqrt{m}} \left( \left| \bm f_k^*{\bm D}_\ell^* \bigl(\bm z_t-\widetilde{\bm x}_\star(\bm z_t)\bigr) \right|^2
\right)_{\ell,k} \right\|_2
\le  \sqrt{2+\delta}\, \dist^2(\bm z_t,\bm x_\star),
\label{rs:eq:remainder-bound}
\end{equation}
provided  \(L\ge {C_{M,\delta}} \log^2 n\).
{On the intersection of these two events, which has conditional probability at least \(1-2n^{-3}\), substituting \eqref{rs:eq:pseudoinverse-bound} and \eqref{rs:eq:remainder-bound} into \eqref{rs:eq:distzk1} yields}
\begin{equation}
\dist(\bm z_{t+1},\bm x_\star) \le
\sqrt{\frac{2+\delta}{2(1-\delta)}} \cdot \frac{\dist(\bm z_t,\bm x_\star)^2}{\|\bm z_t\|_2}+ \frac{\|\bm\xi\|_2} {\sqrt{2(1-\delta)m}\,\|\bm z_t\|_2}.
\label{rs:eq:absolute-one-step}
\end{equation}
Finally,  the reverse triangle inequality gives
\begin{equation}
\|\bm z_t\|_2
\ge
\|\bm x_\star\|_2-\dist(\bm z_t,\bm x_\star)
\ge \frac78 \|\bm x_\star\|_2.
\label{rs:eq:iterate-norm-lower}
\end{equation}
Substituting this lower bound into \eqref{rs:eq:absolute-one-step}
and using \(8/7\le2\) gives the conclusion of Theorem~\ref{rs:thm:complete} with the stated \(\kappa_\delta\).
\end{proof}

\section{Proofs of the fixed-data results in {Section} \ref{sec:fixed-results}}
\label{sec:fixed-proof}

To prove Theorem~\ref{thm:noisy}, we first establish deterministic estimates throughout a neighborhood and then verify their hypotheses for random masks.

\subsection{A deterministic contraction argument} \label{sec:deterministic}
For a set of measurement pairs
\(S\subseteq\{1,\ldots,L\}\times\{0,\ldots,n-1\}\), \(\mathcal J_{\bm z,S}\) denotes the row restriction of \(\mathcal J_{\bm z}\) to \(S\), where $\mathcal J_{\bm z}$ is defined in \eqref{eq:jac}. Fourier orthogonality {gives} 
\begin{equation}
 \frac1m\sum_{\ell,k}|\bm f_k^*{\bm D}_\ell^*\bm v|^2
 \le M^2\|\bm v\|_2^2,
 \qquad \max_{\ell,k}\|{\bm D}_\ell\bm f_k\|_2\le M\sqrt n.
 \label{eq:frame}
\end{equation}
For any vector  $\vz \in \C^n$, define
\begin{equation}
 \mathcal H_{\bm z}
 =\{\bm h\in\mathbb C^n:\Ima(\bm z^*\bm h)=0\}
 =\operatorname{span}_{\mathbb R}\{\mathrm i\bm z\}^{\perp_{\mathbb R}}.
 \label{eq:horizontal-space1}
\end{equation}

\begin{proposition} \label{prop:linear}
{Fix \(\bm x_\star\ne\bm0\) and deterministic masks satisfying \eqref{eq:frame}. Assume that}
\begin{equation}
 \frac32\|\bm x_\star\|_2^2\|\bm v\|_2^2
 \le\|\mathcal J_{\bm x_\star}\bm v\|_2^2
 \le9\|\bm x_\star\|_2^2\|\bm v\|_2^2,
 \quad\bm v\in\mathcal H_{\bm x_\star}.
 \label{eq:target-gram}
\end{equation}
Suppose that, for some \(0<\gamma\le1\),
\begin{equation}
 \|\mathcal J_{\bm x_\star,S}\|_2
 \le\frac{\|\bm x_\star\|_2}{16}
 \quad\text{for all } S\text{ with }|S|\le\gamma m.
 \label{eq:erasure}
\end{equation}
{Suppose that the current iterate satisfies}
\begin{equation}
 \frac{\dist(\bm z_t,\bm x_\star)}{\|\bm x_\star\|_2}
 \le\min\left\{\frac14,\frac{\sqrt\gamma}{64M^2}\right\}.
 \label{eq:prop-basin}
\end{equation}
Then the update $ \bm z_{t+1}=\bm z_t-\mathcal J_{\bm z_t}^{\dagger}\bm F_{\mbox{clean}}(\bm z_t)$ satisfies
\begin{equation}
 \dist(\bm z_{t+1},\bm x_\star) \le\frac{15}{16}\dist(\bm z_t,\bm x_\star).
 \label{eq:prop-contraction}
\end{equation}
{Moreover, \(\bm z_t\ne\bm0\) and \(\|\mathcal J_{\bm z_t}^{\dagger}\|_2\le1/\|\bm x_\star\|_2\).}
Here, $\mathcal J_{\bm z_t}$ is given in \eqref{eq:jac}, and $\bm F_{\mbox{clean}}(\bm z_t)$ is given in \eqref{eq:residual} with $ y_{\ell,k}=|\bm f_k^*{\bm D}_\ell^*\bm x_\star|^2$.
\end{proposition}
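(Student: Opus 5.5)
The plan is to replay the one-step argument from the proof of Theorem~\ref{rs:thm:complete}. The random concentration inputs are replaced by the deterministic hypotheses \eqref{eq:frame}, \eqref{eq:target-gram} and \eqref{eq:erasure}. The bad measurements are handled through an adaptively chosen set $S$.

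Write $\widetilde{\bm x}_\star=\widetilde{\bm x}_\star(\bm z_t)$, $\bm e=\bm z_t-\widetilde{\bm x}_\star$ and $\bm a_{\ell,k}={\bm D}_\ell\bm f_k$. By \eqref{eq:prop-basin}, $\|\bm z_t\|_2\ge\frac34\|\bm x_\star\|_2>0$, and $\bm e\in\mathcal H_{\bm z_t}$. The identity \eqref{rs:eq:componentwise-cancellation} with $\bm\xi=\bm0$ and Lemma~\ref{rs:lem:equivalent-update} give, exactly as in \eqref{rs:eq:distzk1},
\[
\dist(\bm z_{t+1},\bm x_\star)\le\|\mathsf R_{\bm z_t}^\dagger\bm r\|_2,\qquad \bm r=\tfrac1{\sqrt m}\bigl(|\bm a_{\ell,k}^*\bm e|^2\bigr)_{\ell,k},
\]
provided $\mathsf R_{\bm z_t}$ has full column rank.

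\emph{Choice of $S$.} I take
\[
S=\{(\ell,k):|\bm a_{\ell,k}^*\bm e|>c_0\},\qquad c_0=M\|\bm e\|_2/\sqrt\gamma.
\]
Markov's inequality combined with \eqref{eq:frame} gives $|S|\le m M^2\|\bm e\|_2^2/c_0^2=\gamma m$, so \eqref{eq:erasure} applies to $S$. On $S^c$, the real-linearity of $\bm z\mapsto\mathcal J_{\bm z}$ together with \eqref{eq:frame} gives
\[
\|\mathcal J_{\bm e,S^c}\bm v\|_2\le 2c_0M\|\bm v\|_2=\frac{2M^2\|\bm e\|_2}{\sqrt\gamma}\|\bm v\|_2\le\frac{\|\bm x_\star\|_2}{32}\|\bm v\|_2,
\]
where the last step uses \eqref{eq:prop-basin}.

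\emph{Lower bound for the Jacobian.} Fix $\bm v\in\mathcal H_{\bm z_t}$. Split $\bm v=\bm v_h+s\,\mathrm i\widetilde{\bm x}_\star/\|\bm x_\star\|_2$ with $\bm v_h\in\mathcal H_{\widetilde{\bm x}_\star}$. Since $\Ima(\bm z_t^*\bm v)=0$, we get $|s|=|\Ima((\widetilde{\bm x}_\star-\bm z_t)^*\bm v)|/\|\bm x_\star\|_2\le\frac14\|\bm v\|_2$. The kernel relation $\mathcal J_{\widetilde{\bm x}_\star}(\mathrm i\widetilde{\bm x}_\star)=0$ then reduces \eqref{eq:target-gram} (which is phase invariant) to
\[
\|\mathcal J_{\widetilde{\bm x}_\star}\bm v\|_2\ge\sqrt{3/2}\,\sqrt{15/16}\,\|\bm x_\star\|_2\|\bm v\|_2.
\]
Discarding the rows in $S$ and using the triangle inequality gives
\[
\|\mathcal J_{\bm z_t}\bm v\|_2\ge\|\mathcal J_{\widetilde{\bm x}_\star}\bm v\|_2-\|\mathcal J_{\widetilde{\bm x}_\star,S}\bm v\|_2-\|\mathcal J_{\bm e,S^c}\bm v\|_2\ge\|\bm x_\star\|_2\|\bm v\|_2 .
\]
Because the kernel of $\mathcal J_{\bm z_t}$ is exactly $\operatorname{span}_{\mathbb R}\{\mathrm i\bm z_t\}$, this yields $\|\mathcal J_{\bm z_t}^\dagger\|_2=\|\mathsf R_{\bm z_t}^\dagger\|_2\le1/\|\bm x_\star\|_2$. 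It also gives the full column rank needed above.

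\emph{Remainder.} Split $\bm r=\bm r_{S^c}+\bm r_S$. On $S^c$ we have $|\bm a^*\bm e|^2\le c_0|\bm a^*\bm e|$, so \eqref{eq:frame} gives
\[
\|\bm r_{S^c}\|_2\le c_0M\|\bm e\|_2\le\frac{\|\bm x_\star\|_2\|\bm e\|_2}{64}.
\]
This piece therefore contributes at most $\|\bm e\|_2/64$. The main obstacle is $\bm r_S$, because a crude bound on $|\bm a^*\bm e|$ loses a factor $\sqrt n$. I would first use $\bm r_S=\frac12\mathcal J_{\bm e,S}\bm e=\frac12(\mathcal J_{\bm z_t,S}-\mathcal J_{\widetilde{\bm x}_\star,S})\bm e$. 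The second term contributes at most $\frac1{32}\|\bm e\|_2$ after applying $\mathcal J_{\bm z_t}^\dagger$, by \eqref{eq:erasure}. For the first term, $\mathcal J_{\bm z_t}^\dagger P_S\mathcal J_{\bm z_t}$ restricted to the horizontal space equals $(\mathcal J^*\mathcal J)^{-1}\mathcal J_S^*\mathcal J_S$, a positive contraction whose norm is at most $\|\mathcal J_{\bm z_t,S}\bm e\|^2/\|\mathcal J_{\bm z_t}\bm e\|^2$ in the direction of $\bm e$. The key point is that it is at most $1$. Combining this with the complementary identity $\mathcal J^\dagger P_{S^c}\mathcal J=I-\mathcal J^\dagger P_S\mathcal J$ and the bounds on $S^c$ should show $\|\frac12\mathcal J^\dagger\mathcal J_{\bm z_t,S}\bm e\|_2\le\frac12\|\bm e\|_2+$ small. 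If this factor-$\frac12$ bookkeeping is too weak, I would instead bound $\langle\bm w,\bm r_S\rangle$ by duality, writing $|\bm a^*\bm e|^2\le|\bm a^*\bm e|(|\bm a^*\bm z_t|+|\bm a^*\widetilde{\bm x}_\star|)$ and applying Cauchy--Schwarz separately on $S$. Summing the three contributions should give at most $\frac{15}{16}\|\bm e\|_2$, which is \eqref{eq:prop-contraction}.
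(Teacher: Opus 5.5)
\textbf{Gap: the bad-row term.} Several steps of your outline are correct and match the paper's proof: the adaptive bad set with $|S|\le\gamma m$, the lower bound obtained through the good rows, the bound for $\bm r_{S^c}$, and the split $\bm r_S=\frac12\mathcal J_{\bm z_t,S}\bm e-\frac12\mathcal J_{\widetilde{\bm x}_\star,S}\bm e$, with \eqref{eq:erasure} handling the second piece. The gap is the step you call the key point.

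On the horizontal space put $A=\mathcal J_{\bm z_t,S^c}^*\mathcal J_{\bm z_t,S^c}$ and $B=\mathcal J_{\bm z_t,S}^*\mathcal J_{\bm z_t,S}$. The operator $(A+B)^{-1}B$ has spectrum in $[0,1]$, and it is a contraction in the norms induced by $A+B$ and by $A$. It is not a Euclidean contraction. For $B=t\bm u\bm u^\top$ with $t\to\infty$ it tends to the oblique projection $A^{-1}\bm u\bm u^\top/(\bm u^\top A^{-1}\bm u)$, whose norm exceeds $1$ unless $\bm u$ is an eigenvector of $A$.

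The ratio $\|\mathcal J_{\bm z_t,S}\bm e\|_2^2/\|\mathcal J_{\bm z_t}\bm e\|_2^2$ is a Rayleigh quotient in the $(A+B)$-geometry. It does not control $\|(A+B)^{-1}B\bm e\|_2$. Converting the $(A+B)$-norm bound to the Euclidean norm costs $\|\mathcal J_{\bm z_t,S}\bm e\|_2$. Since $\mathcal J_{\bm z_t,S}\bm e=\mathcal J_{\widetilde{\bm x}_\star,S}\bm e+2\bm r_S$, this is essentially $2\|\bm r_S\|_2$, the $\sqrt n$-lossy quantity you set out to avoid.

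The complementary identity only relocates the problem, because $\mathcal J_{\bm z_t}^\dagger P_{S^c}\mathcal J_{\bm z_t}\bm e=\bm e-(A+B)^{-1}B\bm e$ is of order $\|\bm e\|_2$. The duality fallback fails for the same reason. Once absolute values are taken inside the sum over $S$, Cauchy--Schwarz leaves $\bigl(\frac1m\sum_{S}|\bm a^*\bm e|^2|\bm a^*\bm z_t|^2\bigr)^{1/2}$. This is at least $\|\bm r_S\|_2-\frac1{16}\|\bm x_\star\|_2\|\bm e\|_2$, so it has no deterministic $O(\|\bm e\|_2)$ bound.

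\textbf{Missing ingredient.} You need a bound on $\|(A+B)^{-1}B\|_2$ that depends only on the conditioning of the good-row Gram matrix and is uniform over $B\succeq0$. This is Lemma~\ref{lem:half}: if $\alpha I\preceq A\preceq\beta I$, then $\|(A+B)^{-1}B\|_2\le(\alpha+\beta)/(2\sqrt{\alpha\beta})$, a sharp bound larger than $1$.

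To apply it you need two-sided bounds on the good rows, as in \eqref{cl:good-upper}. Your chain of inequalities already gives the lower one, since it passes through $\|\mathcal J_{\bm z_t,S^c}\bm v\|_2$. You never prove the upper one. It follows from
$\|\mathcal J_{\bm z_t,S^c}\bm v\|_2\le\|\mathcal J_{\widetilde{\bm x}_\star}\bm v\|_2+\|\mathcal J_{\bm e,S^c}\bm v\|_2\le(3+\tfrac1{32})\|\bm x_\star\|_2\|\bm v\|_2$.
Here the upper half of \eqref{eq:target-gram} extends to all $\bm v$ by projecting onto $\mathcal H_{\widetilde{\bm x}_\star}$, because $\mathcal J_{\widetilde{\bm x}_\star}(\mathrm i\widetilde{\bm x}_\star)=\bm 0$.

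With $\alpha=\|\bm x_\star\|_2^2$ and $\beta=10\|\bm x_\star\|_2^2$, the first term is at most $\frac{11}{4\sqrt{10}}\|\bm e\|_2$. Adding your contributions $\frac1{64}$ and $\frac1{32}$ gives $\frac{11}{4\sqrt{10}}+\frac3{64}<\frac{15}{16}$; this is exactly how the paper closes the argument. The margin is thin. The simpler estimate $\|(A+B)^{-1}B\|_2\le\sqrt{\beta/\alpha}$, from contractivity in the $A$-norm, does not close the argument with these constants, so the Kantorovich-type refinement is genuinely needed.
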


\begin{proof}
By scaling \(\bm x_\star\) and \(\bm z_t\) by \(\|\bm x_\star\|_2\), we may assume \(\|\bm x_\star\|_2=1\).  Using the phase alignment in \eqref{eq:phase-distance} and \eqref{eq:alignment}, we choose a representative \(\bm z_t\) with \(\bm x_\star^*\bm z_t \in \R_+\) and set $\bm h=\bm z_t-\bm x_\star$. The definition of $\mathcal H_{\bm z_t}$ in \eqref{eq:horizontal-space1} then gives \(\bm h\in\mathcal H_{\bm z_t}\).  Let
\begin{equation} \label{eq:setS}
 S=\{(\ell,j):|\bm f_j^*{\bm D}_\ell^*\bm h|>1/(64M)\}.
\end{equation}
Note that by assumption \eqref{eq:prop-basin}, $\norm{\vh} \le \frac{\sqrt\gamma}{64M^2}$. Therefore, the definition of $S$ together with ~\eqref{eq:frame} gives
\begin{equation} \label{eq:numS}
 |S|\le4096M^4m \norm{\vh}^2 \le\gamma m.
 \end{equation}
We claim that, for every nonzero \(\bm v\in\mathcal H_{\bm z_t}\),
\begin{equation}
\|\bm v\|_2  \le   \|\mathcal J_{\bm z_t,S^c}\bm v\|_2  \le \sqrt{10}\|\bm v\|_2.
 \label{cl:good-upper}
\end{equation}
Note that  $ \bm z_{t+1}=\bm z_t-\mathcal J_{\bm z_t}^{\dagger}\bm F_{\mbox{clean}}(\bm z_t)$. The definitions of $\mathcal J_{\bm z_t}$ and $\bm F_{\mbox{clean}}(\bm z_t)$ give 
\[
\bm F_{\mbox{clean}}(\bm z_t)=\mathcal J_{\bm z_t}\bm h-\bm b,
\]
where \(\bm b=(|\bm f_j^*{\bm D}_\ell^*\bm h|^2/\sqrt m)_{\ell,j} \in \R^m\) is the quadratic remainder.  This immediately gives
\begin{equation} \label{eq:upzt}
 \bm z_{t+1}-\bm x_\star=\vh-\mathcal J_{\bm z_t}^{\dagger}\bm F_{\mbox{clean}}(\bm z_t) =\mathcal J_{\bm z_t}^\dagger\bm b.
\end{equation}
{The last equality uses \(\mathcal J_{\bm z_t}^\dagger\mathcal J_{\bm z_t}\bm h=\bm h\): the claim \eqref{cl:good-upper} implies \(\ker\mathcal J_{\bm z_t}=\operatorname{span}_{\mathbb R}\{\mathrm i\bm z_t\}\), and \(\bm h\in\mathcal H_{\bm z_t}\) is orthogonal to this kernel.} Use \(U_{\bm z_t}\) and \(\mathsf R_{\bm z_t}\) from
\eqref{rs:eq:restricted-block-jacobian}. Denote by
\[
 \mathsf R_{\bm z_t,S^c}\in\mathbb R^{|S^c|\times(2n-1)},
 \qquad
 \mathsf R_{\bm z_t,S}\in\mathbb R^{|S|\times(2n-1)}
\]
the row submatrices of \(\mathsf R_{\bm z_t}\). {In the block formulas below, rows and residual entries are ordered consistently with \(S^c\) first and \(S\) second.} For every \(\bm q\in\mathbb R^{2n-1}\),  the definitions give
\begin{equation} \label{eq:property3}
 \mathsf R_{\bm z_t,S^c}\bm q
 =\mathcal J_{\bm z_t,S^c}(U_{\bm z_t}\bm q),
 \qquad
 \mathsf R_{\bm z_t,S}\bm q
 =\mathcal J_{\bm z_t,S}(U_{\bm z_t}\bm q).
\end{equation}
Moreover, we have
\begin{equation} \label{eq:property2}
\norm{ U_{\bm z_t} \bm q} =\norm{\bm q} \qquad \mbox{for all} \quad \bm q \in \R^{2n-1}.
\end{equation}
Indeed, $\norm{ U_{\bm z_t} \bm q}^2= \bm q^\top U_{\bm z_t}^*U_{\bm z_t} \bm q=\bm q^\top  \Rea(U_{\bm z_t}^*U_{\bm z_t})\bm q =\norm{\bm q}^2$ due to $\Rea(U_{\bm z_t}^*U_{\bm z_t})=I_{2n-1}$.
Because \(U_{\bm z_t}\bm q\in\mathcal H_{\bm z_t}\) and
\(\|U_{\bm z_t}\bm q\|_2=\|\bm q\|_2\), the bounds \eqref{cl:good-upper} give
\[
 \|\bm q\|_2^2
 \le\|\mathsf R_{\bm z_t,S^c}\bm q\|_2^2
 \le10\|\bm q\|_2^2,
 \qquad \bm q\in\mathbb R^{2n-1},
\]
which means 
\begin{equation} \label{eq:RzSbound}
I_{2n-1}\preceq
 \mathsf R_{\bm z_t,S^c}^{\top}\mathsf R_{\bm z_t,S^c}
 \preceq10I_{2n-1}
\end{equation}
and 
\begin{equation} \label{eq:Rzbound}
 \mathsf R_{\bm z_t}^{\top}\mathsf R_{\bm z_t}
 =\mathsf R_{\bm z_t,S^c}^{\top}\mathsf R_{\bm z_t,S^c}
  +\mathsf R_{\bm z_t,S}^{\top}\mathsf R_{\bm z_t,S}
 \succeq I_{2n-1}.
\end{equation}
{Applying the minimum-norm characterization in Lemma~\ref{rs:lem:equivalent-update} to the real right-hand side \(\bm b\) gives} $\mathcal J_{\bm z_t}^{\dagger}\bm b=U_{\bm z_t} \mathsf R_{\bm z_t}^{\dagger}  \bm b$. Note that  \(\mathsf R_{\bm z_t}\) has full column rank. Thus, by \eqref{eq:upzt}, we have
\begin{eqnarray*}
 \bm z_{t+1}-\bm x_\star &= & \mathcal J_{\bm z_t}^{\dagger}\bm b=U_{\bm z_t} \mathsf R_{\bm z_t}^{\dagger}  \bm b=U_{\bm z_t} (\mathsf R_{\bm z_t}^{\top}\mathsf R_{\bm z_t})^{-1}  \mathsf R_{\bm z_t}^{\top} \bm b \\
 &=& U_{\bm z_t} (\mathsf R_{\bm z_t}^{\top}\mathsf R_{\bm z_t})^{-1} \left(\mathsf R_{\bm z_t,S^c}^{\top}\bm b_{S^c}
         +\mathsf R_{\bm z_t,S}^{\top}\bm b_S\right)\notag\\
  &\overset{\text{(i)}}{=}&    \frac12U_{\bm z_t} (\mathsf R_{\bm z_t}^{\top}\mathsf R_{\bm z_t})^{-1}
    \mathsf R_{\bm z_t,S}^{\top}\mathsf R_{\bm z_t,S}
    \Rea(U_{\bm z_t}^*\bm h)\\
   &&  +U_{\bm z_t} (\mathsf R_{\bm z_t}^{\top}\mathsf R_{\bm z_t})^{-1}  \left(\mathsf R_{\bm z_t,S^c}^{\top}\bm b_{S^c}
         -\frac12\mathsf R_{\bm z_t,S}^{\top}   \mathcal J_{\bm x_\star,S}\bm h\right)\notag\\
 &=& \frac12U_{\bm z_t}
    (\mathsf R_{\bm z_t}^{\top}\mathsf R_{\bm z_t})^{-1}
    \mathsf R_{\bm z_t,S}^{\top}\mathsf R_{\bm z_t,S}
    \Rea(U_{\bm z_t}^*\bm h)\notag+ U_{\bm z_t} \mathsf R_{\bm z_t}^{\dagger} 
    \begin{bmatrix}
       \bm b_{S^c}\\-\frac12\mathcal J_{\bm x_\star,S}\bm h
    \end{bmatrix}\\
       &:=& I+II,
 \label{eq:good-bad}
\end{eqnarray*}
To justify (i), the definition of $\mathcal J_{\bm z}$ in \eqref{eq:jac} and the identity $\vh=\vz_t-\vx_\star$ give 
\begin{eqnarray*}
 \bm b_S=(\frac1{\sqrt m}|\bm f_j^*{\bm D}_\ell^*\bm h|^2)_{S}=\tfrac12\mathcal J_{\bm h,S}\bm h &=& \tfrac12\mathcal J_{\bm z_t,S}\bm h -\tfrac12\mathcal J_{\bm x_\star,S}\bm h  \\
 &=& \frac12\mathsf R_{\bm z_t,S}\Rea(U_{\bm z_t}^*\bm h)-\frac12\mathcal J_{\bm x_\star,S}\bm h.
\end{eqnarray*}
Here, we use the fact that  $\bm h=U_{\bm z_t}\Rea(U_{\bm z_t}^*\bm h)$ and the relationship between ${\mathcal J}_{\bm z_t,S}$ and $\mathsf R_{\bm z_t,S}$ in \eqref{eq:property3}.  To verify $\bm h=U_{\bm z_t}\Rea(U_{\bm z_t}^*\bm h)$, note that  the columns of  \(U_{\bm z_t}\in\mathbb C^{n\times(2n-1)}\) {form a real-orthonormal basis of \(\mathcal H_{\bm z_t}\)}, which means $\Rea(U_{\bm z_t}^*U_{\bm z_t})=I_{2n-1}$. Since \(\bm h\in\mathcal H_{\bm z_t}\), this gives 
\begin{equation} \label{eq:property1}
\bm h=U_{\bm z_t} \bm q= U_{\bm z_t} \Rea(U_{\bm z_t}^*U_{\bm z_t}) \bm q =U_{\bm z_t} \Rea(U_{\bm z_t}^*U_{\bm z_t} \bm q) =U_{\bm z_t}\Rea(U_{\bm z_t}^*\bm h)
\end{equation}
for some $\bm q \in \R^{2n-1}$. 

We next give {upper bounds} for $I$ and $II$.  First, applying Lemma~\ref{lem:half} on \(\mathbb R^{2n-1}\) to the matrices
\[
 \mathsf R_{\bm z_t,S^c}^{\top}\mathsf R_{\bm z_t,S^c}
 \quad\text{and}\quad
 \mathsf R_{\bm z_t,S}^{\top}\mathsf R_{\bm z_t,S},
\]
together with \eqref{eq:RzSbound}, we have
\[
 \left\|(\mathsf R_{\bm z_t}^{\top}\mathsf R_{\bm z_t})^{-1}
          \mathsf R_{\bm z_t,S}^{\top}\mathsf R_{\bm z_t,S}\right\|_2
 \le\frac{11}{2\sqrt{10}}.
\]
The vector
\((\mathsf R_{\bm z_t}^{\top}\mathsf R_{\bm z_t})^{-1}
\mathsf R_{\bm z_t,S}^{\top}\mathsf R_{\bm z_t,S}
\Rea(U_{\bm z_t}^*\bm h)\) is real.  Therefore,
\[
\norm{I} \le \frac12
  \left\|(\mathsf R_{\bm z_t}^{\top}\mathsf R_{\bm z_t})^{-1}
          \mathsf R_{\bm z_t,S}^{\top}\mathsf R_{\bm z_t,S}\right\|_2
  \|\Rea(U_{\bm z_t}^*\bm h)\|_2 \le \frac{11}{4\sqrt{10}} \|\bm h\|_2,
\]
The last inequality uses \(\|\Rea(U_{\bm z_t}^*\bm h)\|_2=\|\bm h\|_2\) due to \eqref{eq:property1} and \eqref{eq:property2}.

For $II$,  by the definition of $S$ in \eqref{eq:setS}, one has
\[
 \|\bm b_{S^c}\|_2^2
 \le\frac1{4096M^2m}\sum_{\ell,j}|\bm f_j^*{\bm D}_\ell^*\bm h|^2
 \le\frac{\|\bm h\|_2^2}{4096}.
\]
By \eqref{eq:Rzbound}, one has $\norm{ \mathsf R_{\bm z_t}^{\dagger} } \le 1$. 
Therefore,  \eqref{eq:property2} yields 
\[
 \begin{aligned}
 \norm{II} \le \norm{ \mathsf R_{\bm z_t}^{\dagger}} \left\|
   \begin{bmatrix}\bm b_{S^c}\\
     -\frac12\mathcal J_{\bm x_\star,S}\bm h\end{bmatrix}\right\|_2\le\|\bm b_{S^c}\|_2
       +\frac12\|\mathcal J_{\bm x_\star,S}\bm h\|_2\le\left(\frac1{64}+\frac1{32}\right)\|\bm h\|_2.
 \end{aligned}
\]
The last inequality uses \eqref{eq:erasure}, which applies because $|S| \le \gamma m$ by \eqref{eq:numS}.
Combining the {bounds} for $I$ and $II$, we have
\begin{align*}
 \dist(\bm z_{t+1},\bm x_\star) \le\|\bm z_{t+1}-\bm x_\star\|_2  \le\left(\frac{11}{4\sqrt{10}}+\frac3{64}\right)\|\bm h\|_2
 \le\frac{15}{16}\dist(\bm z_t,\bm x_\star).
\end{align*}

It remains to prove the claim \eqref{cl:good-upper}.  Since \(\bm z_t=\bm x_\star+\bm h\), the definition
\eqref{eq:jac} gives \(\mathcal J_{\bm z_t,S^c}\bm v  =\mathcal J_{\bm x_\star,S^c}\bm v+\mathcal J_{\bm h,S^c}\bm v\). The reverse triangle inequality implies
\begin{equation} \label{eq:Jz0}
 \|\mathcal J_{\bm z_t,S^c}\bm v\|_2
 \ge\|\mathcal J_{\bm x_\star,S^c}\bm v\|_2 -\|\mathcal J_{\bm h,S^c}\bm v\|_2
 \end{equation}
 We first give an upper bound for $\|\mathcal J_{\bm h,S^c}\bm v\|_2$. Observe that
  \begin{align}
 \|\mathcal J_{\bm h,S^c}\bm v\|_2^2
 &=\frac4m\sum_{(\ell,j)\in S^c}
   \bigl[\Rea(\overline{\bm f_j^*{\bm D}_\ell^*\bm h}\,
                         \bm f_j^*{\bm D}_\ell^*\bm v)\bigr]^2 \notag\\
 &\le\frac4m\sum_{(\ell,j)\in S^c}
       |\bm f_j^*{\bm D}_\ell^*\bm h|^2|\bm f_j^*{\bm D}_\ell^*\bm v|^2 \notag\\
 &\le\frac{4}{(64M)^2m}\sum_{\ell,j}|\bm f_j^*{\bm D}_\ell^*\bm v|^2 
 \le\frac{\|\bm v\|_2^2}{32^2}, \label{eq:Jz1}
\end{align}
where the last inequality comes from  \eqref{eq:frame}.  For the  term $\|\mathcal J_{\bm x_\star,S^c}\bm v\|_2$, 
let  
\[
  \bm v_\perp=\bm v- \langle\bm v,\mathrm i\bm x_\star\rangle_{\mathbb R} \mathrm i\bm x_\star \in\mathcal H_{\bm x_\star}
 \]
   be the real orthogonal projection onto \(\mathcal H_{\bm x_\star}\). Since $\mathcal J_{\bm x_\star}(\mathrm i\bm x_\star) =\bm 0$, we have \(\mathcal J_{\bm x_\star}\bm v_\perp=\mathcal J_{\bm x_\star}\bm v\). Therefore,
 \[
  \|\mathcal J_{\bm x_\star,S^c}\bm v\|_2^2 =\|\mathcal J_{\bm x_\star}\bm v\|_2^2
      -\|\mathcal J_{\bm x_\star,S}\bm v\|_2^2=\|\mathcal J_{\bm x_\star}\bm v_\perp\|_2^2 -\|\mathcal J_{\bm x_\star,S}\bm v\|_2^2.
 \] 
  Let $\rho=\norm{\vh}$.  According to \eqref{eq:target-gram}, one has 
\begin{equation} \label{eq:Jxv}
\|\mathcal J_{\bm x_\star}\bm v\|_2^2= \|\mathcal J_{\bm x_\star}\bm v_\perp\|_2^2 \ge \frac32\|\bm v_\perp\|_2^2 \ge \frac32(1-\rho^2){\|\bm v\|_2^2},
\end{equation}
where the last inequality follows from 
\[
 \|\bm v_\perp\|_2^2
 =\|\bm v\|_2^2
   -\bigl|\langle\bm v,\mathrm i\bm x_\star\rangle_{\mathbb R}\bigr|^2
 \ge(1-\rho^2)\|\bm v\|_2^2
\]
since 
\[
\Abs{ \langle\bm v,\mathrm i\bm x_\star\rangle_{\mathbb R} }= \Abs{\langle\bm v,\mathrm i(\bm z_t-\vh)\rangle_{\mathbb R}} =\Abs{\langle\bm v,\mathrm i\bm h\rangle_{\mathbb R}} \le \|\bm v\|_2\|\bm h\|_2,
\]
where the second equality {uses} \(\bm v\in\mathcal H_{\bm z_t}\). On the other hand, \eqref{eq:erasure} together with \eqref{eq:numS} gives
\[
\|\mathcal J_{\bm x_\star,S}\bm v\|_2^2 \le\|\bm v\|_2^2/256.
\]
 Combining the two bounds gives
\begin{equation} \label{eq:Jz2}
 \|\mathcal J_{\bm x_\star,S^c}\bm v\|_2^2 \ge\left[\frac32(1-\rho^2)-\frac1{256}\right]\|\bm v\|_2^2.
\end{equation}
Substituting \eqref{eq:Jz1} and \eqref{eq:Jz2} into \eqref{eq:Jz0} and using \(\rho\le1/4\), we obtain
\begin{align*}
 \|\mathcal J_{\bm z_t,S^c}\bm v\|_2 \ge\left[\sqrt{\tfrac32(1-\rho^2)-\tfrac1{256}}
                          -\tfrac1{32}\right]\|\bm v\|_2  >\|\bm v\|_2.
 \label{eq:good-lower}
\end{align*}
{For the upper bound, the identity \(\mathcal J_{\bm x_\star}\bm v=\mathcal J_{\bm x_\star}\bm v_\perp\) and \(\|\bm v_\perp\|_2\le\|\bm v\|_2\) extend the upper bound in \eqref{eq:target-gram} to every \(\bm v\). Together with \eqref{eq:Jz1} and \(3+1/32<\sqrt{10}\), this gives}
\[
 \|\mathcal J_{\bm z_t,S^c}\bm v\|_2
 \le \|\mathcal J_{\bm x_\star,S^c}\bm v\|_2 +\|\mathcal J_{\bm h,S^c}\bm v\|_2 \le \|\mathcal J_{\bm x_\star}\bm v\|_2 +\|\mathcal J_{\bm h,S^c}\bm v\|_2 \le \sqrt{10}\|\bm v\|_2.
\]
{The claim also gives \(\sigma_{\min}(\mathsf R_{\bm z_t})\ge1\), hence \(\|\mathcal J_{\bm z_t}^{\dagger}\|_2=\|\mathsf R_{\bm z_t}^{\dagger}\|_2\le1\) by the real isometry \(U_{\bm z_t}\). Restoring the signal scale yields the stated bound \(1/\|\bm x_\star\|_2\).}
This completes the proof.
\end{proof}

{By Proposition~\ref{prop:linear}, it remains to establish \eqref{eq:target-gram} and \eqref{eq:erasure}, and to verify that every iterate \(\bm z_t\) remains in the basin \eqref{eq:prop-basin}.}  To this end, we need the following  uniform bound for arbitrary measurement subset.

\begin{lemma} \label{lem:subset}
Suppose that the entries of the masks $\dkh{\vd_l}_{l=1}^L$ are independent copies of a complex random variable $d$ satisfying \eqref{eq:mask}. For every fixed \(\beta>0\), with probability
at least \(1-\frac14(nL)^{-\beta}\), simultaneously for every set
\(S\subseteq\{1,\ldots,L\}\times\{0,\ldots,n-1\}\),
\begin{equation}
 \left\|\frac1m\sum_{(\ell,k)\in S}
              {\bm D}_\ell\bm f_k\bm f_k^*{\bm D}_\ell^*\right\|_2
 \le C_{M,\beta}\left(\frac1L+\frac{|S|}{m}\log^4(nL)\right).
 \label{eq:subset}
\end{equation}
Here, $\bm f_k$ and ${\bm D}_\ell$ are defined in \eqref{eq:fourier-vector0}, {\(m=nL\), and \(C_{M,\beta}\ge32M^2\) depends only on \(M\) and \(\beta\)}.
\end{lemma}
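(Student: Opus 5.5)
Since the sum is positive semidefinite, I will pass to the variational form
\[
\Big\|\frac1m\sum_{(\ell,k)\in S}{\bm D}_\ell\bm f_k\bm f_k^*{\bm D}_\ell^*\Big\|_2=\sup_{\|\bm w\|_2=1}\frac1m\sum_{(\ell,k)\in S}|\bm f_k^*{\bm D}_\ell^*\bm w|^2 .
\]
For a fixed unit $\bm w$, the largest possible sum over sets with $|S|=s$ is the sum of the $s$ largest values of $|\bm f_k^*{\bm D}_\ell^*\bm w|^2$. It therefore suffices to prove, uniformly over unit $\bm w$ and all $s\le m$,
\[
\frac1m\sum_{i\le s}\bigl(|\bm f^*{\bm D}^*\bm w|^2\bigr)^{\downarrow}_i\le C\Big(\frac1L+\frac sm\log^4 m\Big).
\]
I reduce this to a uniform tail estimate by writing $\sum_{i\le s}a_i^\downarrow\le s\tau+\sum_{(\ell,k)}(a_{\ell,k}-\tau)_+$ with $\tau\asymp\log^4 m$. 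The term $s\tau/m$ gives the $\frac{|S|}{m}\log^4(nL)$ part. It then remains to show
\[
\sup_{\|\bm w\|_2=1}\frac1m\sum_{\ell,k}|\bm f_k^*{\bm D}_\ell^*\bm w|^2\,\mathbf 1\bigl\{|\bm f_k^*{\bm D}_\ell^*\bm w|^2>\tau\bigr\}\le \frac{C}{L}.
\]
For an individual $\bm w$ this is easy: Hoeffding, as in \eqref{eq:scalar-mask-tail}, makes each term exceed $\tau$ with probability $e^{-c\tau}$. The difficulty is uniformity over $\bm w$.

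For uniformity I will use the chaos-process machinery of \cite{kmr2014}. The plan is to bound the $p$-th moment of the supremum of the heavy part through a weak $\ell_p$ (order statistics) argument. Fix a mask $\ell$ and set $g_\ell(\bm w)=(\bm f_k^*{\bm D}_\ell^*\bm w)_k=F^*{\bm D}_\ell^*\bm w$. Since $F^*$ is $\sqrt n$ times a unitary, $\|g_\ell(\bm w)\|_2^2\le M^2 n$ deterministically. The supremum over $\bm w$ of the $s$ largest entries of $|g_\ell(\bm w)|^2$, summed over $\ell$, is a supremum of a second-order chaos in the Rademacher-like vector $\bm d$, with the index matrices parametrized by $(\bm w,S)$. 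I would apply \cite[Theorem 3.1]{kmr2014} to the set $\mathcal A_s=\{A_{\bm w,S}\}$ of block-diagonal matrices whose $\ell$-th block is $P_{S_\ell}F^*\diag(\bar{\bm w})/\sqrt m$, where $P_{S_\ell}$ projects onto the coordinates of $S$ in mask $\ell$. This requires three quantities. First, $d_F(\mathcal A_s)\le M/\sqrt L$ (to be adjusted for the normalization). Second, $d_{2\to2}(\mathcal A_s)\lesssim\sqrt{s/m}\,\|F^*\|_{1\to\infty}\|\bm w\|$, which is small because Fourier entries are unimodular and we only keep $s$ rows. Third, the Talagrand functional $\gamma_2(\mathcal A_s,\|\cdot\|_{2\to2})\lesssim\sqrt{s/m}\,\log^2 m$, obtained exactly as in the RIP-for-partial-circulant argument of \cite{kmr2014} via Dudley's integral and Maurey's covering for sparse Fourier-type sets. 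The $\log^2$ inside $\gamma_2$ squares to $\log^4$. Symmetry and boundedness of $d$ give the subgaussian hypothesis, and $\E|d|^2=1$ fixes the mean part, $\E\sum_{(\ell,k)\in S}|\cdot|^2/m\le M^2/L\cdot(\ldots)$. The result is
\[
\sup_{S,\bm w}\frac1m\sum_{(\ell,k)\in S}|\bm f_k^*{\bm D}_\ell^*\bm w|^2\lesssim \frac1L+\gamma_2^2+d_F\gamma_2+\ldots,
\]
with tail deviation $\sqrt{u}\,d_Fd_{2\to2}+u\,d^2_{2\to2}$. Choosing $u\asymp\beta\log m$ and taking a union over the $m$ dyadic values of $s$ yields probability $1-\tfrac14 m^{-\beta}$. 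Cross terms are absorbed by AM-GM into $1/L+(s/m)\log^4 m$.

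The main obstacle is the entropy estimate for $\gamma_2$ over the union of all $s$-subsets, i.e. covering $\{\bm w\mapsto P_S F^*{\bm D}^*\bm w\}$ in operator norm. My first attempt would be to bound $\|A_{\bm w,S}-A_{\bm w',S}\|_{2\to2}$ by $\sqrt{s/m}$ times the max over all $(\ell,k)$ of $|\bm f_k^*{\bm D}_\ell^*(\bm w-\bm w')|$-type norms that do not depend on $S$, which removes the union over $S$. I would then cover the unit sphere in the norm $\|\bm w\|_X=\max_k|(F^*\bm w)_k|/\sqrt n$ using Maurey's empirical method, which gives $\log N\lesssim\epsilon^{-2}\log^2 m$, and compute Dudley's integral with the trivial volumetric bound at small scales. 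If mask dependence makes this fail, the fallback is to truncate on the event $\max_{\ell,k}|\bm f_k^*{\bm D}_\ell^*\bm w|\lesssim\sqrt{\log m}$ at net points and use a coarse net plus the deterministic bound \eqref{eq:frame}, at the price of extra logarithms.
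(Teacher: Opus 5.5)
Your variational reduction is correct, but the proposal breaks down at the step you yourself flag as the main obstacle, and that step cannot be completed in the form you set up. (The truncation at $\tau\asymp\log^4 m$ does not remove the difficulty: the heavy set $\{(\ell,k):|\bm f_k^*{\bm D}_\ell^*\bm w|^2>\tau\}$ depends on $\bm w$, so a uniform bound on the heavy part is again a statement uniform over $S$.)

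Applying \cite[Theorem~3.1]{kmr2014} to $\mathcal A_s=\{A_{\bm w,S}\}$ gives a bound that is at least $\gamma_2(\mathcal A_s,\|\cdot\|_{2\to2})^2$. This functional is a property of the set and metric, so it is too large however you cover. To see this, take $\bm w=\bm e_0$ and $S\subseteq\{1\}\times\{0,\ldots,n-1\}$ with $|S|=s\le n/2$. The first block of $A_{\bm e_0,S}$ is $\mathbf 1_S\bm e_0^\top/\sqrt m$, so $\|A_{\bm e_0,S}-A_{\bm e_0,S'}\|_{2\to2}=\sqrt{|S\Delta S'|/m}$. A Gilbert--Varshamov packing of $e^{cs\log(en/s)}$ such sets at mutual distance $\gtrsim\sqrt{s/m}$ then forces $\gamma_2(\mathcal A_s)\gtrsim s\sqrt{\log(en/s)/m}$. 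Three consequences follow.
\begin{enumerate}
\item Your claim $\gamma_2(\mathcal A_s)\lesssim\sqrt{s/m}\log^2m$ fails once $s\gg\log^4m$.
\item At the scale $s\asymp n/\log^4 m$ that any splitting argument needs, the bound $s^2\log(en/s)/m$ exceeds the target $1/L$ by a factor polynomial in $n$.
\item Even granting an $O(1/L)$ bound for pieces of the only admissible size $s\lesssim\sqrt{n/\log n}$, splitting a general $S$ into such pieces gives $(|S|/m)\sqrt{n\log n}$ instead of $(|S|/m)\log^4m$.
\end{enumerate}
The process itself is harmless on this subfamily, since $\|A_{\bm e_0,S}\bm d\|_2^2=|S|\,|d_{1,0}|^2/m$. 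The loss comes from indexing by the matrices $A_{\bm w,S}$: their operator-norm metric separates row rearrangements that $\|A\bm d\|_2$ does not see.

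Your repairs do not address this.
\begin{enumerate}
\item A metric built from $\max_{\ell,k}|\bm f_k^*{\bm D}_\ell^*(\bm w-\bm w')|$ depends on the masks, so it cannot be the deterministic index metric.
\item The signal-side norm $\max_k|(F^*\bm w)_k|/\sqrt n$ only gives $\|A_{\bm w,S}-A_{\bm w',S}\|_{2\to2}\le\sqrt{s/L}\,\max_k|(F^*(\bm w-\bm w'))_k|/\sqrt n$, which is sharp for $\bm w-\bm w'\propto\bm e_0$. Your Maurey cover of the sphere of $\C^n$ therefore returns only the trivial order $s/L$ up to logarithms, and a net in $\bm w$ still leaves all $\binom{m}{s}$ supports.
\item The fallback needs a union bound over $e^{cn}$ net points, which per-point tails at level $\sqrt{\log m}$ cannot afford.
\end{enumerate}

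The missing idea is to dualize, so that the sparse vector lives on the measurement side and the signal-side supremum is evaluated exactly. Your quantity equals $\frac1L\|W_S^*\|_2^2=\frac1L\|W_S\|_2^2$, where $W=[{\bm D}_1U,\ldots,{\bm D}_LU]$ and $U=n^{-1/2}[\bm f_0,\ldots,\bm f_{n-1}]$. The paper bounds $\|W_S\|_2=\sup\{\|W\bm v\|_2:\|\bm v\|_2\le1,\ \supp\bm v\subseteq S\}$ instead.

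Because $W\bm v=V_{\bm v}\bm d$ with the deterministic matrix $V_{\bm v}$ of \eqref{eq:V}, this is a first-order norm process, handled by \cite[Theorem~3.5(a)]{kmr2014}. It has $\|V_{\bm v}\|_F=\|\bm v\|_2$ and index metric $\|V_{\bm v}-V_{\bm v'}\|_2=\|\bm v-\bm v'\|_X\le\sqrt{r/n}\,\|\bm v-\bm v'\|_2$ on $r$-sparse vectors, where $\|\cdot\|_X$ is the norm in \eqref{eq:Vnorm}.

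All supports are covered at once by covering the sparse ball $T_r\subset\C^m$ in $\|\cdot\|_X$. At small scales a volumetric bound suffices; at larger scales Maurey's method with atoms $\pm\bm e_k,\pm\mathrm i\bm e_k$ works, because $\|\Rea\bm v\|_1+\|\Ima\bm v\|_1\le\sqrt{2r}$. This sparse index set, not the dense sphere of $\C^n$, is the true analogue of the partial-circulant argument you cite.

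The Dudley integral is then $\lesssim\sqrt{r/n}\log^2(2m)$, so $\sup_{\bm v\in T_{r_0}}\|W\bm v\|_2=O_{M,\beta}(1)$ with high probability for $r_0\asymp n/\log^4(nL)$. Splitting an arbitrary $S$ into $\lceil|S|/r_0\rceil$ pieces and applying Cauchy--Schwarz gives $\|W_S\|_2^2\lesssim1+|S|\log^4(nL)/n$, which is \eqref{eq:subset}. The case $n<32\log^4(nL)$ is deterministic.
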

\begin{proof}
{See Appendix~\ref{sec:proLesubset}.}
\end{proof}

We are now ready to prove Theorem~\ref{thm:noisy}.

\subsection{Proof of Theorem~\ref{thm:noisy}} \label{sec:mainproof}

\begin{proof}
{Write \(m=nL\). We verify \eqref{eq:target-gram}, \eqref{eq:erasure}, and basin entry, then prove that the noisy updates preserve the basin.}

{The tangent estimate \eqref{rs:eq:tangent-injectivity} at \(\bm x_\star\), with \(\delta=1/4\), gives the following bound with probability at least \(1-m^{-\beta}/16\) when \(L\ge C_{M,\beta}\log m\):}
\begin{equation} \label{eq:cond1}
 \|\mathcal J_{\bm x_\star}\bm v\|_2^2  \ge\frac32{\|\bm x_\star\|_2^2}\|\bm v\|_2^2,\qquad
 \bm v\in\mathcal H_{\bm x_\star}.
\end{equation}
For the upper bound, Lemma~\ref{rs:prop:cdp-regularity} gives, with probability at least $1-m^{-\beta}/8$, 
\begin{equation} \label{eq:cond2}
 \|\mathcal J_{\bm x_\star}\bm v\|_2^2  \le \frac4m\sum_{(\ell,k)}|\bm f_k^*{\bm D}_\ell^*\bm x_\star|^2  |\bm f_k^*{\bm D}_\ell^*\bm v|^2 \le 4(2+1/4){\|\bm x_\star\|_2^2}\|\bm v\|_2^2=9{\|\bm x_\star\|_2^2}\|\bm v\|_2^2,
\end{equation}
provided \(L\ge C_{M,\beta}\log^2(nL)\) for some constant $C_{M,\beta}>0$. This gives  \eqref{eq:target-gram}.

For every fixed \(\beta>0\), apply \eqref{eq:scalar-mask-tail} at \(\bm x_\star\) and take a
{union bound over the \(m\) measurement pairs. It follows} with probability at least \(1-\frac1{16}(nL)^{-\beta}\) that 
\[
\max_{\ell,k}|\bm f_k^*{\bm D}_\ell^*\bm x_\star|^2
           \le C_{M,\beta}\log(nL)\|\bm x_\star\|_2^2.
 \]
Combining this estimate with Lemma~\ref{lem:subset}, we obtain, with probability at least \({1-\frac5{16}(nL)^{-\beta}}\), the following bound simultaneously for all \(S\) and \(\bm v\):
\begin{align*}
 \|\mathcal J_{\bm x_\star,S}\bm v\|_2^2
 &\le\frac4m\sum_{(\ell,k)\in S}|\bm f_k^*{\bm D}_\ell^*\bm x_\star|^2
                              |\bm f_k^*{\bm D}_\ell^*\bm v|^2\notag\\
 &\le C_*\left(\frac{\log(nL)}{L}
             +\frac{|S|}{m}\log^5(nL)\right)
                 \|\bm x_\star\|_2^2\|\bm v\|_2^2,
\end{align*}
where \(C_*=C_{M,\beta}\ge1\) is {a fixed constant}.  Let 
\[
 \gamma=\frac1{512C_*\log^5(nL)}.
\]
{Set \(r_b=\sqrt\gamma/(64M^2)\). Since \(m\ge2\) and \(M\ge1\), we have \(\gamma\le1\) and \(r_b\le1/64\). Fix \(C_*\) first and choose the theorem's noise constant so that}
{
\[
 c_{M,\beta}\le\frac1{4096M^2(1+M^2)\sqrt{512C_*}},\qquad
 \frac{\|\bm\xi\|_\infty}{\|\bm x_\star\|_2^2}\le\frac{r_b}{64(1+M^2)}.
\]
}
{Only the sampling constant is enlarged in the remaining estimates.}
Hence, if  \(L\ge512C_*\log(nL)\), then
\begin{equation} \label{eq:cond3}
 \|\mathcal J_{\bm x_\star,S}\|_2
 \le\frac{\|\bm x_\star\|_2}{{16}}
 \quad\text{for all } S\text{ with }|S|\le\gamma m.
\end{equation}
{Indeed, the preceding squared bound is at most \((1/512+1/512)\|\bm x_\star\|_2^2\|\bm v\|_2^2\); taking square roots gives \(1/16\). Thus \eqref{eq:erasure} holds.}

{For basin entry, we follow the initializer argument in Section~\ref{pr:spec}. Lemma~\ref{rs:prop:cdp-regularity} and Hoeffding's inequality \cite{hoeffding1963} give, with combined failure probability at most \(3m^{-\beta}/16\),}
\begin{align*}
 &\frac1{\|\bm x_\star\|_2^2}
 \left\|\frac1m\sum_{\ell,k}
  |\bm f_k^*{\bm D}_\ell^*\bm x_\star|^2
        {\bm D}_\ell\bm f_k\bm f_k^*{\bm D}_\ell^*
  -(\|\bm x_\star\|_2^2I+\bm x_\star\bm x_\star^*)\right\|_2
 \notag\\
 &\hspace{20mm}\le
 C_{M,\beta}\left(\sqrt{\frac{\log(nL)}L}
                          +\frac{\log^2(nL)}L\right)
 \le\frac{r_b}{16},
 \end{align*}
 and 
 \begin{align*}
 &\frac1{\|\bm x_\star\|_2^2}
 \left|\frac1m\sum_{\ell,k}|\bm f_k^*{\bm D}_\ell^*\bm x_\star|^2
                              -\|\bm x_\star\|_2^2\right|
 \le C_{M,\beta}\sqrt{\frac{\log(nL)}L}
 \le\frac{r_b}{8},
\end{align*}
provided $L\ge C'_{M,\beta} \log^6(nL)$ for a sufficiently large constant $C'_{M,\beta}>1$.
{These choices are possible because \(\sqrt{\log m/L}=O(\log^{-5/2}m)\) and \(\log^2m/L=O(\log^{-4}m)\). The chosen noise bound and \eqref{rs:eq:noisy-initializer-matrix-event} then give}
\[
\|Y-(\|\bm x_\star\|_2^2I+\bm x_\star\bm x_\star^*)\|_2 \le  \xkh{\frac{r_b}{16}+\frac{M^2r_b}{64(1+M^2)}} {\|\bm x_\star\|_2^2} \le \frac{5r_b}{64}{\|\bm x_\star\|_2^2}
\]
  and 
  \[
  |\lambda-\|\bm x_\star\|_2^2| \le  \xkh{\frac{r_b}{8}+\frac{r_b}{64(1+M^2)}} \|\bm x_\star\|_2^2 \le \frac{9r_b}{64}\|\bm x_\star\|_2^2.
  \]   
 {Since \(9r_b/64<1/2\), the energy estimate implies \(\lambda>0\). Weyl's inequality and the Davis--Kahan theorem \cite{davis1970rotation} give the following bound; its factor \(2\) is valid because the relative matrix perturbation is at most \(5r_b/64<1/4\):}
\[
 \dist\!\left(\bm v_0,
                  \frac{\bm x_\star}{\|\bm x_\star\|_2}\right)
 \le2\frac{\|Y-(\|\bm x_\star\|_2^2I+
                        \bm x_\star\bm x_\star^*)\|_2}
                {\|\bm x_\star\|_2^2}.
\]
Using \(\|\bm v_0\|_2=1\), the triangle inequality, and
\(\lambda>0\), we obtain
\begin{equation} \label{eq:cond4}
 \frac{\dist(\bm z_0,\bm x_\star)}{\|\bm x_\star\|_2} \le
 \dist\!\left(\bm v_0,
              \frac{\bm x_\star}{\|\bm x_\star\|_2}\right)
       +\frac{|\sqrt{\lambda}-\|\bm x_\star\|_2|}
                   {\|\bm x_\star\|_2}\le\frac{10r_b}{64}+\frac{9r_b}{64} <\frac{r_b}2 \le \frac14.
\end{equation}
This proves part~(1) of Theorem~\ref{thm:noisy}.

{The five underlying events \eqref{eq:cond1}--\eqref{eq:cond4} hold together with probability at least \(1-m^{-\beta}\), and the hypotheses of Proposition~\ref{prop:linear} hold at \(\bm z_0\).} Recall that the update rule is $\bm z_{t+1}=\bm z_t-\mathcal J_{\bm z_t}^{\dagger} \bm F(\bm z_t)$ and 
\begin{equation} \label{eq:noiseF}
\bm F({\bm z_t})= \bm F_{\mbox{clean}}(\bm z_t)- \frac{\bm\xi}{\sqrt m}.
\end{equation}
Hence, 
\begin{eqnarray}
 \dist(\bm z_{t+1},\bm x_\star) &\le&  \dist\!\left(\bm z_t-\mathcal J_{\bm z_t}^{\dagger}\bm F_{\mbox{clean}}(\bm z_t),
       \bm x_\star\right) +\frac{\|\mathcal J_{\bm z_t}^{\dagger}\|_2\|\bm\xi\|_2}
                   {\sqrt m} \notag \\
                   & \le &\frac{15}{16}\dist(\bm z_t,\bm x_\star)        +\frac{\|\bm\xi\|_2}{\sqrt m\,\|\bm x_\star\|_2} \label{eq:noisy-linear-step}.
\end{eqnarray}
Here, the last inequality follows from \eqref{eq:prop-contraction} and {the scaled pseudoinverse bound \(\|\mathcal J_{\bm z_t}^{\dagger}\|_2\le1/\|\bm x_\star\|_2\) in Proposition~\ref{prop:linear}}.
Furthermore, the noise condition \( \|\bm\xi\|_\infty \le c_{M,\beta} \log^{-5/2}(nL) \|\bm x_\star\|_2^2\) gives 
\begin{equation*} 
 \frac{\|\bm\xi\|_2}{\sqrt m\,\|\bm x_\star\|_2^2}
 \le\frac{\|\bm\xi\|_\infty}{\|\bm x_\star\|_2^2}
 \le\frac{r_b}{64(1+M^2)}\le\frac{r_b}{32}.
\end{equation*}
Therefore, if the current relative error is at most \(r_b/2\), then the above two inequalities give
\[
 \frac{\dist(\bm z_{t+1},\bm x_\star)}{\|\bm x_\star\|_2}
 \le\frac{15}{16}\frac{r_b}{2}+\frac{r_b}{32}
 =\frac{r_b}{2}.
\]
By induction, \eqref{eq:noisy-linear-step} holds for all ${t\ge0}$. This proves part~(2) of Theorem~\ref{thm:noisy}.

Finally, iterating \eqref{eq:noisy-linear-step} yields
\begin{align*}
 \dist(\bm z_t,\bm x_\star) \le
 \left(\frac{15}{16}\right)^t\dist(\bm z_0,\bm x_\star) +\frac{16\|\bm\xi\|_2}{\sqrt m\,\|\bm x_\star\|_2}.
  \end{align*}
  Take
  \begin{equation}
 T=\left\lceil
 \frac{{\max\{0,\log(16M^2\sqrt n\,r_b)\}}}{\log(16/15)}
 \right\rceil.
 \label{eq:noisy-entry-time}
\end{equation}
  If in addition
 \begin{equation*} 
  \frac{\|\bm\xi\|_2}{\sqrt m\,\|\bm x_\star\|_2^2}  \le\frac1{512M^2\sqrt n},
 \end{equation*}
then, for all $t\ge T$,
\begin{eqnarray*}
 {\frac{\dist(\bm z_t,\bm x_\star)}{\|\bm x_\star\|_2}} &\le &  \left(\frac{15}{16}\right)^T
       \frac{\dist(\bm z_0,\bm x_\star)}{\|\bm x_\star\|_2} + \frac{16\|\bm\xi\|_2}{\sqrt m\,\|\bm x_\star\|_2^2} \\
       &\le & \left(\frac{15}{16}\right)^T\frac{r_b}{2} + \frac1{32M^2\sqrt n} \\
 &\le & \frac1{16M^2\sqrt n}.
\end{eqnarray*}
For all $\vw, \vv \in \C^n$, we have
\begin{equation} \label{eq:Jhvlatter}
 \|\mathcal J_{\bm w}\bm v\|_2 \le 2\sqrt{\frac1m\sum_{(\ell,k)}|\bm f_k^*{\bm D}_\ell^* \bm w|^2  |\bm f_k^*{\bm D}_\ell^*\bm v|^2}
 \le2M^2\sqrt n\|\bm w\|_2\|\bm v\|_2.
\end{equation}
{As in Proposition~\ref{prop:linear}, phase-align \(\bm z_t\) and set \(\bm h=\bm z_t-\bm x_\star\). Then \(\|\bm h\|_2=\dist(\bm z_t,\bm x_\star)\), \(\bm h\in\mathcal H_{\bm z_t}\), and \(\|\mathcal J_{\bm z_t}^{\dagger}\|_2\le1/\|\bm x_\star\|_2\). For \(\bm b=(|\bm f_j^*{\bm D}_\ell^*\bm h|^2/\sqrt m)_{\ell,j}\), \eqref{eq:Jhvlatter} gives \(\|\bm b\|_2=\frac12\|\mathcal J_{\bm h}\bm h\|_2\le M^2\sqrt n\|\bm h\|_2^2\). Thus \eqref{eq:noiseF} and \eqref{eq:upzt} imply}
{
\begin{align*}
 \dist(\bm z_{t+1},\bm x_\star)
 &\le\left\|\mathcal J_{\bm z_t}^{\dagger}\left(\bm b+\frac{\bm\xi}{\sqrt m}\right)\right\|_2\\
 &\le\frac{2M^2\sqrt n\,\dist^2(\bm z_t,\bm x_\star)}{\|\bm x_\star\|_2}+\frac{\|\bm\xi\|_2}{\sqrt m\,\|\bm x_\star\|_2}.
\end{align*}
}
Dividing by \(\|\bm x_\star\|_2\) proves (3); the definition of \(T\) and \(r_b\le1/64\) give \(T\le C_{M,\beta}\log(en)\).

\end{proof}

\section{Conclusions and discussion}
\label{sec:discussion}
We have established convergence guarantees for the minimum-norm Gauss--Newton method for coded diffraction phase retrieval under two data-use protocols, which require different probabilistic arguments.
With resampling, the current iterate is independent of the next mask block, so a Jacobian lower bound and a fourth-moment estimate in the aligned-error direction suffice.  
With fixed data, the current iterate depends on the masks. Our uniform subset estimate controls the rows selected by the current error without requiring independence between the iterate and the data.

Several questions remain open. First, our guarantees hold for each fixed signal $\vx_\star$, rather than simultaneously for all $\vx_\star$. Extending the analysis to a uniform recovery guarantee would be of interest. Second, the mask complexities are $O(\log^2 n)$ per independent block with resampling and $O(\log^6 n)$ in total with fixed data. Reducing these logarithmic exponents is another important direction. Third, our convergence guarantees rely on spectral initialization; establishing convergence from a random initialization remains an interesting challenge.

\appendix
\section{Proof of Lemma \ref{lem:subset}} \label{sec:proLesubset}
\begin{proof}
{First consider \(n<32\log^4(nL)\). In this case,} 
\begin{align*}
 \left\|\frac1m\sum_{(\ell,k)\in S}
              {\bm D}_\ell\bm f_k\bm f_k^*{\bm D}_\ell^*\right\|_2
 &\le\frac1m\sum_{(\ell,k)\in S}
              \|{\bm D}_\ell\bm f_k\bm f_k^*{\bm D}_\ell^*\|_2\\
 &\le\frac{nM^2|S|}{m}
 \le32M^2\frac{|S|}{m}\log^4(nL).
\end{align*}
Here, we use the fact that \(\norm{{\bm D}_{\ell}}\le M\) and $\norm{\vf_k}=\sqrt n$.  {Thus \eqref{eq:subset} holds deterministically in this case. Henceforth assume \(n\ge32\log^4(nL)\).}

Let \(U=n^{-1/2}[\bm f_0,\ldots,\bm f_{n-1}]\) and
\(W=[{\bm D}_1U,\ldots,{\bm D}_LU] \in \C^{n\times m}\).  For any \(S\subseteq\{1,\ldots,L\}\times\{0,\ldots,n-1\}\), let \(W_S\) be the submatrix of \(W\) consisting of its columns in
\(S\).  {Since the column of \(W\) indexed by \((\ell,k)\) is \(n^{-1/2}{\bm D}_\ell\bm f_k\),} 
\[
 \frac1m\sum_{(\ell,k)\in S}{\bm D}_\ell\bm f_k\bm f_k^*{\bm D}_\ell^*
 =\frac n m W_SW_S^*=\frac1L W_SW_S^*.
\]
To prove the result, it suffices to show that there exists a constant $C_{M,\beta}$ such that 
\begin{equation} \label{eq:supWS}
 {\|W_S\|_2^2} \le C_{M,\beta}\left(1+\frac{|S|}{n}\log^4(nL)\right){\quad\text{for every }S}.
\end{equation}
To this end,  let \(T_r=\{\bm w{\in\mathbb C^m}:\|\bm w\|_2\le1, |\supp\bm w|\le r\}\), where \(1\le r\le m\). We claim that for each fixed integer \(1\le r\le m\) and each fixed \(p\ge1\), with probability at least \(1-2e^{-p}\),
\begin{equation}  \label{eq:restricted-upper}
 \sup_{\bm w\in T_r}\|W\bm w\|_2
 \le C_M\left[1+\sqrt{r/n}\log^2(2m)+\sqrt{pr/n}\right].
\end{equation}
Here \(C_M\) depends only on the mask bound \(M\).
{Choose \(r_0=\lfloor n/(16\log^4(nL))\rfloor\). The assumption on \(n\) gives \(r_0\ge n/(32\log^4(nL))\ge1\). Applying \eqref{eq:restricted-upper} with \(r=r_0\) and \(p=(\beta+4)\log(nL)\), and using \(2m^{-\beta-4}\le m^{-\beta}/4\), gives, with probability at least \(1-m^{-\beta}/4\),}
\begin{equation}  \label{eq:restricted-upper1}
 \sup_{\bm w\in T_{r_0}}\|W\bm w\|_2  \le  C_M\xkh{2+\sqrt{\beta+4}}:=K_{M,\beta}.
\end{equation}
On this event, partition any fixed nonempty set \(S\) into
\(b=\lceil|S|/r_0\rceil\) disjoint sets
\(S_1,\ldots,S_b\), each of size at most \(r_0\).
For any \(\bm w\in\mathbb C^m\) with  \(\mbox{supp}(\bm w) \subset S\), let
\(\bm w_{S_a}\) agree with \(\bm w\) on \(S_a\) and vanish outside \(S_a\).  
Whenever \(\bm w_{S_a}\ne\bm0\), the normalized vector
\(\bm w_{S_a}/\|\bm w_{S_a}\|_2\) belongs to \(T_{r_0}\).
Applying \eqref{eq:restricted-upper1} to this vector yields
\[
 \|W\bm w_{S_a}\|_2  \le K_{M,\beta}\|\bm w_{S_a}\|_2.
\]
Note that 
\[
 \bm w=\sum_{a=1}^b\bm w_{S_a},
 \qquad
 \sum_{a=1}^b\|\bm w_{S_a}\|_2^2=\|\bm w\|_2^2.
\]
Hence  the triangle inequality and Cauchy--Schwarz give
\begin{align*}
 \|W\bm w\|_2
 &=\left\|\sum_{a=1}^bW\bm w_{S_a}\right\|_2
 \le\sum_{a=1}^b\|W\bm w_{S_a}\|_2
 \le K_{M,\beta}\sum_{a=1}^b\|\bm w_{S_a}\|_2\\
 &\le K_{M,\beta}\sqrt b
       \left(\sum_{a=1}^b\|\bm w_{S_a}\|_2^2\right)^{1/2}
 =K_{M,\beta}\sqrt b\|\bm w\|_2,
\end{align*}
which means 
 \[
 \|W_S\|_2\le K_{M,\beta}\sqrt b  \le K_{M,\beta}\sqrt{1+|S|/r_0} \le K_{M,\beta} \sqrt{1+32|S|\log^4(nL)/n}.
 \]
 On the event in \eqref{eq:restricted-upper1}, this inequality holds simultaneously for all $S$.
This immediately gives
\[
 {\|W_S\|_2^2} \le K_{M,\beta}^2\left(1+{\frac{32|S|\log^4(nL)}{n}}\right){\quad\text{for every }S},
\]
which proves \eqref{eq:supWS} and hence \eqref{eq:subset}.

It remains to prove \eqref{eq:restricted-upper}. We use the norm-process bound in Lemma~\ref{lem:kmr}. 
For a compact matrix class \(\mathcal V\),
write \(\mathcal N(\mathcal V,u)\) for its operator-norm covering
number, and put
\[
 d_F=\sup_{V\in\mathcal V}\|V\|_F,\qquad
 d_{\rm op}=\sup_{V\in\mathcal V}\|V\|_2.
\]
For \(\bm w=[\bm w_1^\top,\ldots,\bm w_L^\top]^\top
\in\mathbb C^{nL}\) with \(\bm w_\ell=(w_{\ell,0},\ldots,w_{\ell,n-1})^\top \in \C^n\), define
\begin{equation}
 V_{\bm w}=[\diag(U\bm w_1),\ldots,\diag(U\bm w_L)]
       \in\mathbb C^{n\times m},
 \label{eq:V}
\end{equation}
where  \(U=n^{-1/2}[\bm f_0,\ldots,\bm f_{n-1}]\).
Let \(\bm d=\zkh{\vd_1^\top, \ldots, \vd_L^\top}^\top \in \C^m\).  Recall that \(W=[{\bm D}_1U,\ldots,{\bm D}_LU]\).  It is {straightforward to verify} that  \(W\bm w=V_{\bm w}\bm d\). Thus, to prove the claim \eqref{eq:restricted-upper}, it suffices to show, with probability at least \(1-2e^{-p}\),
\begin{equation}  \label{cla:Vw}
 \sup_{\bm w\in T_r}\|V_{\bm w}\bm d\|_2 \le C_M\left[1+\sqrt{r/n}\log^2(2m)+\sqrt{pr/n}\right],
\end{equation}
{We apply Lemma~\ref{lem:kmr}, which is stated for real matrices. To this end, set} \(\bm d_{\mathrm R}=\sqrt2\Rea\bm d\), \(\bm d_{\mathrm I}=\sqrt2\Ima\bm d\), and let
\[
 \widetilde V_{\bm w}
 =\begin{bmatrix}\Rea V_{\bm w}\\ \Ima V_{\bm w}\end{bmatrix}
 \in\mathbb R^{2n\times m}.
\]
{We have \(\bm d=(\bm d_{\mathrm R}+\mathrm i\bm d_{\mathrm I})/\sqrt2\), and for every \(\bm u\in\mathbb R^m\),} 
\begin{equation} \label{eq:VtildeV}
 \|\widetilde V_{\bm w}\bm u\|_2^2
 =\|(\Rea V_{\bm w})\bm u\|_2^2
  +\|(\Ima V_{\bm w})\bm u\|_2^2
 =\|V_{\bm w}\bm u\|_2^2.
\end{equation}
Therefore, 
\begin{align}
 \sup_{\bm w\in T_r}\|V_{\bm w}\bm d\|_2  & \le\frac1{\sqrt2}\left(
    \sup_{\bm w\in T_r}\|V_{\bm w}\bm d_{\mathrm R}\|_2
   +\sup_{\bm w\in T_r}\|V_{\bm w}\bm d_{\mathrm I}\|_2\right) \notag \\
   &= \frac1{\sqrt2}\left(
    \sup_{\bm w\in T_r}\| \widetilde V_{\bm w}\bm d_{\mathrm R}\|_2
   +\sup_{\bm w\in T_r}\| \widetilde V_{\bm w}\bm d_{\mathrm I}\|_2\right). \label{eq:supvwd}
\end{align}
We apply Lemma~\ref{lem:kmr} separately to $ \sup_{\bm w\in T_r}\|V_{\bm w}\bm d_{\mathrm R}\|_2$ and $\sup_{\bm w\in T_r}\|V_{\bm w}\bm d_{\mathrm I}\|_2$. We first verify the hypotheses of \eqref{eq:kmr}. 
Under the assumption \eqref{eq:mask}, one has 
\[
 \E[(\Rea d)^2]+\E[(\Ima d)^2]=\E|d|^2=1,
 \qquad
 \E[(\Rea d)^2]-\E[(\Ima d)^2]=\Rea(\E d^2)=0.
\]
Thus each of \(\Rea d\) and \(\Ima d\) has second moment
\(1/2\). Since \(\E d=0\), the coordinates of \(\bm d_{\mathrm R}\)  are independent,  centered, variance-one real
random variables bounded by \(\sqrt2M\). These variables are subgaussian by Hoeffding's inequality \cite{hoeffding1963}.  {For \(\widetilde{\mathcal V}=\{\widetilde V_{\bm w}:\bm w\in T_r\}\), Lemma~\ref{lem:kmr} gives}
\begin{equation} \label{eq:Esup0}
 \left(\E\sup_{\bm w\in T_r }\|\widetilde V_{\bm w}\bm d_{\mathrm R} \|_2^p\right)^{1/p}
 \le C_1 \left[
  \int_0^{d_{\rm op}(\widetilde{\mathcal V})}\sqrt{\log\mathcal N(\widetilde{\mathcal V},u)}\,du
  +d_F(\widetilde{\mathcal V})+\sqrt p\,d_{\rm op}(\widetilde{\mathcal V})\right],
\end{equation}
where $C_1>0$ is a  constant depending only on $M$,
\begin{equation} \label{eq:relanorm1}
d_F(\widetilde{\mathcal V}):= \sup_{\bm w\in T_r } \|\widetilde V_{\bm w}\|_F=\sup_{\bm w\in T_r }  \|V_{\bm w}\|_F= \sup_{\bm w\in T_r } \|\bm w\|_2\le1,
\end{equation}
and 
\[
d_{\rm op}(\widetilde{\mathcal V}):= \sup_{\bm w\in T_r } \|\widetilde V_{\bm w}\|_2.
\]
Observe that by \eqref{eq:VtildeV}, we have 
\begin{equation} \label{eq:relanorm2}
\|\widetilde V_{\bm w} \|_2 =\sup_{\bm u \in\mathbb R^m,\ \|\bm u\|_2=1}
       \|\widetilde V_{\bm w}  \bm u \|_2 =\sup_{\bm u \in\mathbb R^m,\ \|\bm u\|_2=1}
       \|V_{\bm w} \bm u \|_2{\le} \|V_{\bm w}\|_2
\end{equation}
and from the definition of $V_{\bm w}$ in \eqref{eq:V}, we have
\begin{equation}
 \|V_{\bm w}\|_2
 =\max_j\left(\sum_\ell|(U\bm w_\ell)_j|^2\right)^{1/2}
 =:\|\bm w\|_X.
 \label{eq:Vnorm}
\end{equation}
 {Since every entry of \(U\) has modulus \(n^{-1/2}\),} for each \(\bm w\in T_r\), we have
\begin{equation}
 \|\bm w\|_X\le\|\bm w\|_1/\sqrt n
 \le\sqrt{r/n}\|\bm w\|_2
 \label{eq:sparse-radius}
\end{equation}
Consequently, $d_{\rm op}(\widetilde{\mathcal V}) \le \sqrt{r/n}$.  Next, we claim that there is an absolute constant \(C>0\) such that
\begin{equation} \label{eq:intbound}
  \int_0^{d_{\rm op}(\widetilde{\mathcal V})}\sqrt{\log\mathcal N(\widetilde{\mathcal V},u)}\,du \le  C\sqrt{r/n}\log^2(2m).
\end{equation}
Combining these estimates with \eqref{eq:Esup0} gives
\[
 \left(\E\sup_{\bm w\in T_r }\|\widetilde V_{\bm w}\bm d_{\mathrm R} \|_2^p\right)^{1/p}
 \le C_M\left[1+\sqrt{r/n}\log^2(2m)   +\sqrt{pr/n}\right].
\]
For a nonnegative random variable \(Z\), Markov's
inequality applied to \(Z^p\) gives \(\Prob\{Z>e(\E Z^p)^{1/p}\}\le e^{-p}\). Therefore, with probability at least $1-e^{-p}$,
\[
  \sup_{\bm w\in T_r}\| \widetilde V_{\bm w}\bm d_{\mathrm R}\|_2 \le eC_M\left[1+\sqrt{r/n}\log^2(2m)   +\sqrt{pr/n}\right].
\]
The same argument gives an identical bound for \(\sup_{\bm w\in T_r}\| \widetilde V_{\bm w}\bm d_{\mathrm I}\|_2\). Substituting these bounds into \eqref{eq:supvwd} and taking a union bound, we obtain, with probability at least \(1-2e^{-p}\),
\[
 \sup_{\bm w\in T_r}\|W \bm w \|_2= \sup_{\bm w\in T_r}\|V_{\bm w}\bm d\|_2 \le \sqrt2eC_M\left[1+\sqrt{r/n}\log^2(2m)+\sqrt{pr/n}\right],
\]
which proves \eqref{eq:restricted-upper}.

Finally, we prove \eqref{eq:intbound}.  Let \(\mathcal V=\{V_{\bm w}:\bm w\in T_r\}\). Recall from \eqref{eq:relanorm1} and \eqref{eq:relanorm2} that 
\begin{align*}
 \|\widetilde V_{\bm w}\|_F=\|V_{\bm w}\|_F, \qquad  \|\widetilde V_{\bm w} \|_2{\le}\| V_{\bm w} \|_2.
\end{align*}
{The same argument applies to differences: \(\|\widetilde V_{\bm w}-\widetilde V_{\bm v}\|_2\le\|V_{\bm w}-V_{\bm v}\|_2\), since the supremum over real inputs is no larger than that over complex inputs.} Therefore, any operator-norm \(u\)-net of \(\mathcal V=\{V_{\bm w}:\bm w\in T_r\}\) corresponds to  an operator-norm \(u\)-net of $\widetilde{\mathcal V}$. Consequently, the covering numbers satisfy \( \mathcal N(\widetilde{\mathcal V},u) \le \mathcal N(\mathcal V,u)\).   Moreover, the map
\(\bm w\mapsto V_{\bm w}\) is linear, so \eqref{eq:Vnorm} and
\eqref{eq:sparse-radius} imply
\[
 \|V_{\bm w}-V_{\bm v}\|_2=\|\bm w-\bm v\|_X,
 \qquad d_F=1,\qquad d_{\rm op}\le d_r:=\sqrt{r/n}.
\]
In particular, for every \(u>0\), covering \(\mathcal V\) in
operator norm is exactly equivalent to covering \(T_r\) in
\(\|\cdot\|_X\), with centers in the respective sets:
\[
 \mathcal N(\mathcal V,u)
 =\mathcal N(T_r,\|\cdot\|_X,u).
\]
Here, $\|\cdot\|_X$  is defined in \eqref{eq:Vnorm}. 
From Lemma \ref{le:covernum}, {for every \(u>0\)}, 
\begin{equation}  \label{eq:coarse-cover11}
 \log\mathcal N(T_r,\|\cdot\|_X,u)  \le r\log(em/r)+2r\log(1+2\sqrt{r/n}/u).
\end{equation}
and for $0<u\le\sqrt{r/n}$,
\begin{equation}  \label{eq:coarse-cover21}
 \log\mathcal N(T_r,\|\cdot\|_X,u)
 \le\frac{Cr\log(2n)\log(2m)}{nu^2}.
\end{equation}
We now integrate the two covering estimates for the matrix class
\(\mathcal V\) to prove \eqref{eq:intbound}.  Specifically, 
\begin{align*}
&  \int_0^{d_{\rm op}(\widetilde{\mathcal V})}\sqrt{\log\mathcal N(\widetilde{\mathcal V},u)}\,du \\
 &\quad\overset{\mathrm{(i)}}{\le}
       \int_0^{ \sqrt{r/n}}\sqrt{\log\mathcal N(T_r,\|\cdot\|_X,u)}\,du\\
 &\quad=
       \int_0^{1/\sqrt n}\sqrt{\log\mathcal N(T_r,\|\cdot\|_X,u)}\,du
       +\int_{1/\sqrt n}^{\sqrt{r/n}}
          \sqrt{\log\mathcal N(T_r,\|\cdot\|_X,u)}\,du\\
 &\quad\overset{\mathrm{(ii)}}{\le}
       \sqrt{\frac rn}\int_0^1
          \sqrt{\log(em/r)+2\log(1+2\sqrt r/v)}\,dv +C\sqrt{\frac rn}\sqrt{\log(2n)\log(2m)}
           \log\sqrt r\\
 &\quad\overset{\mathrm{(iii)}}{\le}
       C\sqrt{\frac rn}\left[
          \sqrt{\log(2m)}
          +\int_0^1\sqrt{\log(1/v)}\,dv
          +\sqrt{\log(2n)\log(2m)}\log \sqrt r
       \right]\\
 &\quad\overset{\mathrm{(iv)}}{\le}
       C\sqrt{\frac rn}\log^2(2m).
\end{align*}
Here, (i) uses $\mathcal N(\widetilde{\mathcal V},u) \le \mathcal N(T_r,\|\cdot\|_X,u)$ and $d_{\rm op}(\widetilde{\mathcal V}) \le \sqrt{r/n}$, as established in \eqref{eq:sparse-radius}. For (ii), we apply \eqref{eq:coarse-cover11} and make the substitution \(u=v/\sqrt n\) in the first term, and  use \eqref{eq:coarse-cover21} to obtain the second term.  For (iii), we use \(\sqrt{a+b}\le\sqrt a+\sqrt b\) for \(a,b\ge0\), together with \(\log(1+2\sqrt r/v)\le\log(1+2\sqrt r)+\log(1/v)\) and \(\log(em/r)+2\log(1+2\sqrt r)\le C\log(2m)\) for \(0<v\le1\).
Finally, (iv) uses \(\int_0^1\sqrt{\log(1/v)}\,dv=\sqrt\pi/2\) {and \(n,r\le m\), so \(\log\sqrt r\le\frac12\log m\)}. 
This completes the proof.
\end{proof}

\begin{lemma} \label{le:covernum}
For any fixed \(1\le r\le m\), let \(T_r=\{\bm w{\in\mathbb C^m}:\|\bm w\|_2\le1, |\supp\bm w|\le r\}\) and $\|\cdot\|_X$ be the norm defined in \eqref{eq:Vnorm}. {Let \(\mathcal N(T_r,\|\cdot\|_X,u)\) be the covering number with centers in \(T_r\). Then, for every \(u>0\),} 
\begin{equation}  \label{eq:coarse-cover1}
 \log\mathcal N(T_r,\|\cdot\|_X,u)  \le r\log(em/r)+2r\log(1+2\sqrt{r/n}/u).
\end{equation}
and for $0<u\le\sqrt{r/n}$,
\begin{equation}  \label{eq:coarse-cover2}
 \log\mathcal N(T_r,\|\cdot\|_X,u)
 \le\frac{Cr\log(2n)\log(2m)}{nu^2}.
\end{equation}
\end{lemma}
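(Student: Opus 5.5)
The plan is to prove the two covering estimates separately, both directly from the structure of the norm $\|\bm w\|_X=\max_j(\sum_\ell|(U\bm w_\ell)_j|^2)^{1/2}$. Since $\bm w\mapsto V_{\bm w}$ is linear, $\|\cdot\|_X$ is a seminorm on $\mathbb C^m$. By \eqref{eq:sparse-radius}, it satisfies $\|\bm w\|_X\le\sqrt{r/n}\,\|\bm w\|_2$ on $r$-sparse vectors and, more generally, $\|\bm w\|_X\le\|\bm w\|_1/\sqrt n$.

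\emph{Coarse bound \eqref{eq:coarse-cover1}.} Write $T_r$ as the union over supports $\Lambda$ with $|\Lambda|=r$ of the Euclidean unit balls $B_\Lambda$. There are $\binom{m}{r}\le(em/r)^r$ such supports. On each $B_\Lambda$ we have $\|\bm w-\bm v\|_X\le\sqrt{r/n}\,\|\bm w-\bm v\|_2$ for $\bm w,\bm v\in B_\Lambda$. Hence a Euclidean $(u\sqrt{n/r})$-net of $B_\Lambda$ is an $X$-net at scale $u$. The set $B_\Lambda$ is a ball of real dimension $2r$, so the standard volumetric argument gives a net of cardinality at most $(1+2\sqrt{r/n}/u)^{2r}$, with centers in $B_\Lambda\subseteq T_r$. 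Taking logarithms and summing over supports yields \eqref{eq:coarse-cover1}.

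\emph{Fine bound \eqref{eq:coarse-cover2}.} This is the main step, and I would use Maurey's empirical method. Every $\bm w\in T_r$ satisfies $\|\bm w\|_1\le\sqrt r$, so $T_r\subseteq\sqrt r\,\mathrm{conv}\{\pm\bm e_{i},\pm\mathrm i\bm e_i\}$, a symmetric hull of $4m$ points. Write $\bm w/\sqrt r$ as an expectation of a random vertex $Z$. Then average $N$ independent copies, $\bar Z=\frac1N\sum_{s=1}^N Z_s$, and bound $\E\|\sqrt r(\bar Z-\E Z)\|_X$.

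The key estimate is the type constant of the norm $\|\cdot\|_X$. It is the $\ell_\infty^n(\ell_2^L)$ norm of the vector $((U\bm w_\ell)_j)_{j,\ell}$, and $\|U\bm e_i\|_X$-type quantities for vertices equal $n^{-1/2}$ in every coordinate $j$. By symmetrization and a Rademacher bound in $\ell_\infty^n(\ell_2^L)$, which is obtained via a union bound over the $n$ coordinates with Hoeffding/Khintchine for vector sums, one gets
\[
\E\Bigl\|\frac1N\sum_s\varepsilon_s Z_s\Bigr\|_X\lesssim\sqrt{\frac{\log(2n)}{nN}}.
\]
Therefore $\E\|\bm w-\sqrt r\,\bar Z\|_X\lesssim\sqrt{r\log(2n)/(nN)}\le u/2$ once $N\approx Cr\log(2n)/(nu^2)$. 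The number of possible averages is at most $(4m)^N$, giving $\log\mathcal N\le N\log(4m)\lesssim r\log(2n)\log(2m)/(nu^2)$. The restriction $u\le\sqrt{r/n}$ guarantees $N\ge1$ up to constants.

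The centers $\sqrt r\,\bar Z$ need not lie in $T_r$. This is fixed at the cost of doubling $u$: for each net point whose $u/2$-ball meets $T_r$, choose a point of $T_r$ in that ball. After adjusting $C$, this yields \eqref{eq:coarse-cover2}. I expect the only delicate point to be the $\log(2n)$ type estimate for the mixed $\ell_\infty(\ell_2)$ norm. Concretely, for each fixed $j$ the quantity $\sum_\ell|(U\sum_s\varepsilon_sZ_{s,\ell})_j|^2$ is a Rademacher sum of vectors in $\mathbb C^L$ with squared norms at most $1/n$. A subgaussian tail for its norm, followed by a union bound over $j$, gives the required $\sqrt{\log n}$ factor.
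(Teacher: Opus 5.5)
Your proposal follows the paper's proof step for step: the same union-over-supports volumetric argument for \eqref{eq:coarse-cover1}, and for \eqref{eq:coarse-cover2} the same Maurey sampling over the atoms $\pm\bm e_k,\pm\mathrm i\bm e_k$, the same symmetrization, a per-row subgaussian bound for the Rademacher sum in $\mathbb C^L$ (the paper uses McDiarmid) with a union bound over the $n$ rows, and the same recentering into $T_r$ at the cost of doubling $u$. The only slip is a factor of $\sqrt2$: the hull of $\{\pm\bm e_k,\pm\mathrm i\bm e_k\}$ is the unit ball of $\|\Rea\bm w\|_1+\|\Ima\bm w\|_1$, not of $\|\bm w\|_1$ (for example, $\tfrac{1+\mathrm i}{\sqrt2}\bm e_1\in T_1$ lies outside it), so the correct containment is $T_r\subseteq\sqrt{2r}\,\mathrm{conv}\{\pm\bm e_k,\pm\mathrm i\bm e_k\}$, which is what the paper uses via $\|\Rea\bm w\|_1+\|\Ima\bm w\|_1\le\sqrt{2r}$, and this changes only constants.
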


\begin{proof}
{We first prove \eqref{eq:coarse-cover1}. Fix a support of size \(r\). For two vectors on this same support, \eqref{eq:sparse-radius} gives \(\|\bm w-\bm v\|_X\le\sqrt{r/n}\|\bm w-\bm v\|_2\). Thus a Euclidean net of radius \(u/\sqrt{r/n}\) for its unit ball is a radius-\(u\) net in \(\|\cdot\|_X\). A volume comparison in real dimension \(2r\) bounds the size of a radius-\(v\) Euclidean net by \((1+2/v)^{2r}\). Taking the union over at most \(\binom mr\le(em/r)^r\) supports covers \(T_r\)}; choosing \(v=u/\sqrt{r/n}\) then gives
\begin{equation*}
 \log\mathcal N(T_r,\|\cdot\|_X,u)
 \le r\log(em/r)+2r\log(1+2\sqrt{r/n}/u).
\end{equation*}

We next {prove} \eqref{eq:coarse-cover2}.  To establish this bound, for any $0<u\le\sqrt{r/n}$, we construct an explicit $u$-net of $T_r$ under the norm $\|\cdot\|_X$.  Specifically,  let \(\bm e_k\) denote the coordinate unit vectors of \(\mathbb C^m\)
and define the finite set of \(4m+1\) vectors
\[
 \mathcal D=\{\bm0\}\cup
       \{\pm\bm e_k,\pm\mathrm i\bm e_k:1\le k\le m\}.
\]
For any fixed \(\bm w\in T_r\), the real and imaginary parts of \(\bm w\)
have at most \(2r\) nonzero coordinates.  Cauchy--Schwarz
and \(\|\bm w\|_2\le1\) therefore give
\[
 \|\Rea\bm w\|_1+\|\Ima\bm w\|_1
 \le\sqrt{2r}\,
   \bigl(\|\Rea\bm w\|_2^2+\|\Ima\bm w\|_2^2\bigr)^{1/2}
 \le\sqrt{2r}.
\]
For a real number \(a\), write
\(a_+=\max\{a,0\}\). Define a \(\mathcal D\)-valued random vector
\(\bm Y\) by
\begin{align}
 \Prob\{\bm Y=\bm e_k\}
   &=\frac{(\Rea w_k)_+}{\sqrt{2r}},
 &\Prob\{\bm Y=-\bm e_k\}
   &=\frac{(-\Rea w_k)_+}{\sqrt{2r}}, \label{ran:Y1}\\
 \Prob\{\bm Y=\mathrm i\bm e_k\}
   &=\frac{(\Ima w_k)_+}{\sqrt{2r}},
 &\Prob\{\bm Y=-\mathrm i\bm e_k\}
   &=\frac{(-\Ima w_k)_+}{\sqrt{2r}}, \label{ran:Y2}
 \qquad 1\le k\le m,
\end{align}
and 
\begin{equation} \label{ran:Y0}
\Prob\{\bm Y=\bm 0\}= 1-(\|\Rea\bm w\|_1+\|\Ima\bm w\|_1)/\sqrt{2r}.
\end{equation}
Direct calculation gives \(\sqrt{2r}\E\bm Y=\bm w\).  For any fixed integer $h \ge 1$, take \(h\) independent copies
\(\bm Y_1,\ldots,\bm Y_h\) of $\bm Y$.  According to Lemma \ref{le:Yh}, we have
\[
 \E\left\|\frac{\sqrt{2r}}h
          \sum_{a=1}^h\bm Y_a-\bm w\right\|_X
 \le C\sqrt{\frac{r\log(2n)}{nh}}.
\]
Now fix \(0<u\le\sqrt{r/n}\), and choose \(h=\lceil C_1r\log(2n)/(nu^2)\rceil\), where \(C_1 \ge 16C^2\). Thus, 
\begin{equation} \label{eq:ExpYa}
 \E\left\|\frac{\sqrt{2r}}h  \sum_{a=1}^h\bm Y_a-\bm w\right\|_X  \le \frac{u}4.
\end{equation}
Define the set
\[
\mathcal C:= \left\{\frac{\sqrt{2r}}h\sum_{a=1}^h\bm y_a:  \bm y_a\in\mathcal D\right\},
\]
which  has at most \((4m+1)^h\) elements and does not depend on
\(\bm w\).  According to \eqref{eq:ExpYa},  for each \(\bm w\in T_r\), there exists $\bm x \in \mathcal C$ such that \(\bm w \in B(\bm x, \|\cdot\|_X, u/4) \). {Here, \(B(\bm x,\|\cdot\|_X,u/4)\) is the closed ball of radius \(u/4\) centered at \(\bm x\) in the norm \(\|\cdot\|_X\).}  Denote
\[
{\mathcal C}_1:=\dkh{\vx \in \mathcal C: B(\bm x, \|\cdot\|_X, u/4) \cap T_r \neq \emptyset}.
\]
It is easy to see that 
\[
T_r  \subset \cup_{\bm x \in \mathcal C_1} ~ B(\bm x, \|\cdot\|_X, u/4).
\]
Note that the centers $\bm x \in \mathcal C_1$ may lie outside \(T_r\).  For those {points} $\bm x \in \mathcal C_1$ but $\vx \notin T_r$, choose any point $\widetilde\vx \in B(\bm x, \|\cdot\|_X, u/4) \cap T_r$.  Since $B(\bm x, \|\cdot\|_X, u/4) \subset B(\widetilde {\bm x}, \|\cdot\|_X, u/2)$, replacing these centers and retaining those already in \(T_r\) yields a set $\mathcal N \subset T_r$ that is a radius-\(u\) net of \(T_r\) with at most \((4m+1)^h\) points. Thus, for any $0<u\le\sqrt{r/n}$,
\begin{equation*}
 \log\mathcal N(T_r,\|\cdot\|_X,u)  \le h \log(4m+1) \le  \frac{Cr\log(2n)\log(2m)}{nu^2}
\end{equation*}
by the choice of  \(h=\lceil C_1r\log(2n)/(nu^2)\rceil\).
This completes the proof.
\end{proof}

\begin{lemma} \label{le:Yh}
For any fixed \(\bm w\in T_r\), suppose that  \(\bm Y_1,\ldots,\bm Y_h\) are independent copies of $\bm Y$ defined in \eqref{ran:Y1}--\eqref{ran:Y0}. Then 
\[
 \E\left\|\frac{\sqrt{2r}}h
          \sum_{a=1}^h\bm Y_a-\bm w\right\|_X
 \le C\sqrt{\frac{r\log(2n)}{nh}}.
\]
\end{lemma}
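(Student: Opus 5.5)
The plan is the standard empirical-method (Maurey) estimate: a symmetrization step reduces the expectation to a Rademacher sum of i.i.d.\ centered vectors, and a maximal inequality over the $n$ coordinates defining $\|\cdot\|_X$ finishes it. Set $\bm Z_a=\sqrt{2r}\,\bm Y_a-\bm w$. Then $\bm Z_1,\ldots,\bm Z_h$ are i.i.d.\ and mean zero, because $\sqrt{2r}\,\E\bm Y=\bm w$. We must bound $\E\|h^{-1}\sum_a\bm Z_a\|_X$. By symmetrization with independent Rademacher signs $\varepsilon_a$,
\[
\E\Bigl\|\frac1h\sum_{a=1}^h\bm Z_a\Bigr\|_X\le\frac{2\sqrt{2r}}{h}\,\E\Bigl\|\sum_{a=1}^h\varepsilon_a\bm Y_a\Bigr\|_X .
\]
The constant vector $\bm w$ drops out here, since symmetrization applies to $\bm Y_a-\E\bm Y_a$ and the resulting bound involves only $\bm Y_a$. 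This leaves the task of showing $\E\|\sum_a\varepsilon_a\bm Y_a\|_X\le C\sqrt{h\log(2n)/n}$.

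Next write $\|\bm v\|_X=\max_{0\le j<n}\bigl(\sum_\ell|(U\bm v_\ell)_j|^2\bigr)^{1/2}$. For each fixed $j$, the map $\bm v\mapsto((U\bm v_\ell)_j)_{\ell=1}^L\in\C^L$ is linear. Each $\bm Y_a$ is either $\bm0$ or $\pm\bm e_k$, $\pm\mathrm i\bm e_k$. For such a coordinate vector supported in block $\ell$, its image has exactly one nonzero entry, of modulus $n^{-1/2}$, because every entry of $U$ has modulus $n^{-1/2}$. Condition on $\bm Y_1,\ldots,\bm Y_h$. For fixed $j$, the vector $\bm g_j=\sum_a\varepsilon_a\bm u_{a,j}$, with $\|\bm u_{a,j}\|_2\le n^{-1/2}$, is a Rademacher sum in $\C^L\cong\R^{2L}$. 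Its norm has a subgaussian tail around its mean, and the mean is at most $(\sum_a\|\bm u_{a,j}\|_2^2)^{1/2}\le\sqrt{h/n}$. The tail bound comes either from the bounded-differences inequality or from Talagrand's convex concentration for the norm, which is a 1-Lipschitz convex function of $\bm\varepsilon$ with Lipschitz constant $\sqrt{h/n}$ in the $\ell_2$ sense. Explicitly, $\Prob\{\|\bm g_j\|_2>\sqrt{h/n}+s\}\le 2e^{-ns^2/(8h)}$.

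A union bound over the $n$ indices $j$, followed by integration of the tail, gives $\E_{\varepsilon}\max_j\|\bm g_j\|_2\le\sqrt{h/n}+C\sqrt{h\log(2n)/n}$ uniformly in the conditioning. Taking expectations over the $\bm Y_a$ and combining with the symmetrization factor yields
\[
\E\Bigl\|\frac{\sqrt{2r}}h\sum_a\bm Y_a-\bm w\Bigr\|_X\le\frac{2\sqrt{2r}}{h}\,C\sqrt{\frac{h\log(2n)}{n}}=C'\sqrt{\frac{r\log(2n)}{nh}} .
\]

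The main obstacle is the vector-valued concentration step: we need a tail for $\|\bm g_j\|_2$ whose variance proxy is $h/n$, not $hL/n$ or $h$. That is why I use the Lipschitz/convex concentration of the Euclidean norm of a Rademacher sum rather than coordinatewise Hoeffding, which would cost an extra factor depending on $L$. An alternative, if the concentration step proves awkward, is to apply the noncommutative Khintchine inequality to the rectangular matrix formed by the columns $\bm u_{a,j}$ arranged across $j$. The approach above is simpler and suffices.
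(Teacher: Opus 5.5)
Your proposal is correct and follows essentially the same route as the paper: symmetrization to $\frac{2\sqrt{2r}}{h}\E\|\sum_a\epsilon_a\bm Y_a\|_X$, then for each row $j$ the bound $\sqrt{h/n}$ on the conditional mean of $\|\sum_a\epsilon_a\bm g_{a,j}\|_2$, McDiarmid's inequality (one sign flip changes the norm by at most $2/\sqrt n$), a union bound over the $n$ rows, and tail integration, uniformly in the realizations of the $\bm Y_a$. One wording slip: the Euclidean norm itself is $1$-Lipschitz, and its composition with the linear map $\bm\epsilon\mapsto\sum_a\epsilon_a\bm u_{a,j}$ is $\sqrt{h/n}$-Lipschitz; the latter constant is what your Talagrand alternative actually uses.
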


\begin{proof}
 Introduce independent copies \(\bm Y'_1,\ldots,\bm Y'_h\) and independent
Rademacher random variables \(\epsilon_1,\ldots,\epsilon_h\).  Jensen's inequality, followed by the symmetry of \(\bm Y_a-\bm Y'_a\) and the triangle inequality, gives
\begin{align}
 \E\left\|\frac{\sqrt{2r}}h\sum_{a=1}^h\bm Y_a-\bm w\right\|_X
 &\le\frac{\sqrt{2r}}h
       \E\left\|\sum_{a=1}^h(\bm Y_a-\bm Y'_a)\right\|_X
       \notag\\
 &\le\frac{2\sqrt{2r}}h
       \E\left\|\sum_{a=1}^h\epsilon_a\bm Y_a\right\|_X.
 \label{eq:maurey}
\end{align}
We next bound the right-hand side of \eqref{eq:maurey} using the definition of \(\|\cdot\|_X\). Write each $\bm Y_a \in \C^{m}$ in block form as $\bm Y_a=\xkh{\bm Y_{a,1}^\top,\ldots, \bm Y_{a,L}^\top}{^\top}$, with $\bm Y_{a,1}, \ldots, \bm Y_{a,L} \in \C^n$. For \(0\le j<n\), let
\[
 \bm g_{a,j}
 =\xkh{(U\bm Y_{a,1})_j, \ldots, (U\bm Y_{a,L})_j}^\top  \in\mathbb C^L.
\]
With this notation,  the definition of \(\|\cdot\|_X\) gives
\[
 \left\|\sum_{a=1}^h\epsilon_a\bm Y_a\right\|_X
 =\max_{0\le j<n}   \left\|\sum_{a=1}^h\epsilon_a\bm g_{a,j}\right\|_2
\]
For any fixed row \(j\), we apply McDiarmid's bounded differences
inequality, stated as Lemma~\ref{lem:mcdiarmid} in
Appendix~\ref{app:auxiliary}, to establish a tail bound for $\left\|\sum_{a=1}^h\epsilon_a\bm g_{a,j}\right\|_2$.
{Each atom \(\bm Y_a\) has at most one nonzero component. Since the entries of \(U\) have modulus \(n^{-1/2}\), this implies \(\|\bm g_{a,j}\|_2\le n^{-1/2}\). Independence and centering of the Rademacher signs give}
\[
 \E_\epsilon\left\|\sum_{a=1}^h
             \epsilon_a\bm g_{a,j}\right\|_2^2
 =\sum_{a=1}^h\|\bm g_{a,j}\|_2^2\le\frac hn.
\]
Here and below, \(\E_\epsilon\)  denotes {expectation with respect to the Rademacher random variables} \(\epsilon_1,\ldots,\epsilon_h\),  with \(\bm Y_1,\ldots,\bm Y_h\) held fixed. 
Cauchy--Schwarz therefore gives
\[
 \E_\epsilon\left\|\sum_{a=1}^h\epsilon_a\bm g_{a,j}\right\|_2
 \le\left(\E_\epsilon
       \left\|\sum_{a=1}^h\epsilon_a\bm g_{a,j}\right\|_2^2
       \right)^{1/2}
 \le\sqrt{h/n}.
\]
Next, if two sign sequences  differ only at coordinate \(a\), the  triangle inequality gives
\[
 \left|
  \left\|\sum_{b=1}^h\epsilon_b\bm g_{b,j}\right\|_2
  -\left\|\sum_{b=1}^h\epsilon'_b\bm g_{b,j}\right\|_2
 \right|
 \le|\epsilon_a-\epsilon'_a|\,\|\bm g_{a,j}\|_2
 \le\frac2{\sqrt n}.
\]
Thus, Lemma~\ref{lem:mcdiarmid} gives, for every \(t>0\),
\begin{align*}
 \Prob_\epsilon\left\{
  \left\|\sum_{a=1}^h\epsilon_a\bm g_{a,j}\right\|_2
     >\sqrt{h/n}+t
  \,\middle|\,\bm Y_1,\ldots,\bm Y_h\right\}
 \le\exp\left(-\frac{nt^2}{2h}\right).
\end{align*}
Taking a union bound over all $0\le j\le n-1$ gives
\[
 \Prob_\epsilon\left\{
  \left\|\sum_{a=1}^h\epsilon_a\bm Y_a\right\|_X
     >\sqrt{h/n}+t
  \,\middle|\,\bm Y_1,\ldots,\bm Y_h\right\}
 \le\min\left\{1,n\exp\left(-\frac{nt^2}{2h}\right)\right\}.
\]
Therefore, 
\begin{align*}
 \E_\epsilon\left\|\sum_{a=1}^h\epsilon_a\bm Y_a\right\|_X & =\int_0^{\infty}  \Prob_\epsilon\left\{
  \left\|\sum_{a=1}^h\epsilon_a\bm Y_a\right\|_X
     >t  \,\middle|\,\bm Y_1,\ldots,\bm Y_h\right\}  dt \\
 &\le\sqrt{\frac hn}
  +\int_0^\infty
       \min\left\{1,n\exp\left(-\frac{nt^2}{2h}\right)\right\}dt\\
 &\le\sqrt{\frac hn}+t_0
       +n\int_{t_0}^\infty\exp\left(-\frac{nt^2}{2h}\right)dt\\
 &\le\sqrt{\frac hn}+t_0
       +\frac{h}{t_0}\exp\left(-\frac{nt_0^2}{2h}\right)\\
 &=\sqrt{\frac hn}+t_0+\frac{h}{2nt_0}
 \le C\sqrt{\frac{h\log(2n)}n},
\end{align*}
where \(t_0=\sqrt{2h\log(2n)/n}\), and the {fourth} line comes from 
\(\int_{t_0}^\infty e^{-at^2}\,dt \le e^{-at_0^2}/(2at_0)\) for \(a,t_0>0\).
The bound holds uniformly over all realizations of \(\bm Y_1,\ldots,\bm Y_h\). Taking expectation with respect to these variables and using \eqref{eq:maurey} therefore gives
\[
 \E\left\|\frac{\sqrt{2r}}h
          \sum_{a=1}^h\bm Y_a-\bm w\right\|_X
 \le C\sqrt{\frac{r\log(2n)}{nh}}.
\]
This completes the proof.
\end{proof}

\section{Auxiliary lemmas} \label{app:auxiliary}

\begin{lemma} \label{lem:half}
{Let \(\bm A,\bm B\in\mathbb R^{d\times d}\) be symmetric matrices and suppose} \({\alpha}I\preceq \bm A\preceq {\beta}I\), where \(0<{\alpha}\le {\beta}\), and \(\bm B\succeq0\).
Then
\begin{equation}
 \|(\bm A+\bm B)^{-1}\bm B\|_2\le\frac{{\alpha}+{\beta}}{2\sqrt{{\alpha\beta}}}.
 \label{eq:half}
\end{equation}
\end{lemma}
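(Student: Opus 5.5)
The plan is to reduce to a scalar problem by a similarity transformation. Since \(\bm A\) is symmetric positive definite, \(\bm A+\bm B\succeq\bm A\succ0\) is invertible. Set
\[
\bm G=\bm A^{-1/2}\bm B\bm A^{-1/2}\succeq0 .
\]
Then
\[
(\bm A+\bm B)^{-1}\bm B=\bm A^{-1/2}(I+\bm G)^{-1}\bm G\,\bm A^{1/2}.
\]
Write \(\bm P=(I+\bm G)^{-1}\bm G\). This matrix is symmetric, commutes with \(\bm G\), and has eigenvalues \(g/(1+g)\in[0,1)\). So \(0\preceq\bm P\preceq I\), and likewise \(0\preceq I-\bm P=(I+\bm G)^{-1}\preceq I\).

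A direct bound \(\|\bm A^{-1/2}\|_2\|\bm P\|_2\|\bm A^{1/2}\|_2\le\sqrt{\beta/\alpha}\) is too weak, since \((\alpha+\beta)/(2\sqrt{\alpha\beta})\) is smaller. The sharper estimate should come from the structure of \(\bm P\) as a ``positive contraction'' conjugated by \(\bm A^{1/2}\). I would bound
\[
\|\bm A^{-1/2}\bm P\bm A^{1/2}\|_2=\sup_{\|\bm x\|_2=\|\bm y\|_2=1}\bm y^\top\bm A^{-1/2}\bm P\bm A^{1/2}\bm x .
\]
The key step is to replace \(\bm A^{\pm1/2}\) by a scalar multiple plus a small symmetric perturbation. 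Let \(c=\frac{\alpha+\beta}{2}\). Then \(\|\bm A-cI\|_2\le\frac{\beta-\alpha}{2}\).

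Rather than conjugating, I would work with the equivalent form
\[
(\bm A+\bm B)^{-1}\bm B=I-(\bm A+\bm B)^{-1}\bm A .
\]
Writing \(\bm A=c(I-\bm E)\) with \(\|\bm E\|_2\le\frac{\beta-\alpha}{\alpha+\beta}<1\) is the natural route to a Kantorovich-type constant.

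The cleanest approach I expect is a Kantorovich/Wielandt-type inequality. For \(\bm M=\bm A^{1/2}\) with condition number \(\kappa=\sqrt{\beta/\alpha}\), any symmetric \(\bm P\) with \(0\preceq\bm P\preceq I\) satisfies
\[
\|\bm M^{-1}\bm P\bm M\|_2\le\frac{\kappa+\kappa^{-1}}{2}=\frac{\alpha+\beta}{2\sqrt{\alpha\beta}} .
\]
To prove this, I would write \(\bm P=\frac12 I+\bm Q\) with \(\|\bm Q\|_2\le\frac12\). Then \(\bm M^{-1}\bm P\bm M=\frac12I+\bm M^{-1}\bm Q\bm M\), and it remains to bound \(\|\frac12 I+\bm M^{-1}\bm Q\bm M\|_2\). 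This is a statement about the numerical range. Testing against unit vectors \(\bm x,\bm y\) reduces it to bounding the Rayleigh-type form
\[
\frac12\,\bm y^\top\bm x+(\bm M^{-1}\bm y)^\top\bm Q(\bm M\bm x).
\]
I would then apply Cauchy--Schwarz in a weighted way and optimize over the angle between \(\bm M\bm x\) and \(\bm M^{-1}\bm y\). The constraint \(\bm y^\top\bm x=(\bm M^{-1}\bm y)^\top(\bm M\bm x)\) ties that angle to \(\bm y^\top\bm x\). The classical Wielandt inequality then bounds the product \(\|\bm M\bm x\|_2\|\bm M^{-1}\bm y\|_2\) in terms of \(\kappa\) and the angle.

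The main obstacle is carrying out this two-variable optimization to land exactly on \((\kappa+\kappa^{-1})/2\). If that becomes messy, I would use a fallback: the extremal case is \(2\times2\), since the sup is attained on \(\operatorname{span}\{\bm x,\bm y\}\) after diagonalizing suitably. There I would compute the bound explicitly with \(\bm M=\diag(\kappa^{1/2},\kappa^{-1/2})\) up to scaling, and take \(\bm P\) to be a rank-one projection \(\bm u\bm u^\top\). That projection is the worst case, because the norm is convex in \(\bm P\) and the extreme points of \(\{0\preceq\bm P\preceq I\}\) are projections. Its norm is \(\|\bm M^{-1}\bm u\|_2\|\bm M\bm u\|_2\). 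By Kantorovich's inequality,
\[
\|\bm M^{-1}\bm u\|_2^2\,\|\bm M\bm u\|_2^2=(\bm u^\top\bm A^{-1}\bm u)(\bm u^\top\bm A\bm u)\le\frac{(\alpha+\beta)^2}{4\alpha\beta}.
\]
For a general projection \(\bm P=\bm V\bm V^\top\) of higher rank, the same argument applies with \(\|\bm M^{-1}\bm V\|_2\|\bm V^\top\bm M\|_2\). This is bounded by the matrix Kantorovich inequality \(\bm V^\top\bm A^{-1}\bm V\preceq\frac{(\alpha+\beta)^2}{4\alpha\beta}(\bm V^\top\bm A\bm V)^{-1}\), which yields the same bound. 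This convexity-plus-Kantorovich route is the one I would commit to.
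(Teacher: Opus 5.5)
Your reduction is sound up to rank one. You correctly have $(\bm A+\bm B)^{-1}\bm B=\bm M^{-1}\bm P\bm M$ with $\bm M=\bm A^{1/2}$ and $0\preceq\bm P\preceq I$. The map $\bm P\mapsto\|\bm M^{-1}\bm P\bm M\|_2$ is convex, and the extreme points of $\{0\preceq\bm P\preceq I\}$ are orthogonal projections. For $\bm P=\bm u\bm u^\top$, the scalar Kantorovich inequality gives exactly $(\alpha+\beta)/(2\sqrt{\alpha\beta})$.

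\textbf{The gap is the higher-rank step.} There you bound $\|\bm M^{-1}\bm V\bm V^\top\bm M\|_2\le\|\bm M^{-1}\bm V\|_2\|\bm V^\top\bm M\|_2$. The right side equals $\bigl(\lambda_{\max}(\bm V^\top\bm A^{-1}\bm V)\,\lambda_{\max}(\bm V^\top\bm A\bm V)\bigr)^{1/2}$. The matrix Kantorovich inequality only yields $\lambda_{\max}(\bm V^\top\bm A^{-1}\bm V)\le K/\lambda_{\min}(\bm V^\top\bm A\bm V)$ with $K=(\alpha+\beta)^2/(4\alpha\beta)$. So the squared product is bounded by $K$ times the condition number of $\bm V^\top\bm A\bm V$, not by $K$.

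The intermediate inequality is in fact false. Take the extreme point $\bm P=I$ (so $\bm V=I$) and $\bm A$ with eigenvalues $\alpha<\beta$. The product is then $\sqrt{\beta/\alpha}$, exactly the crude bound you rejected at the start, even though $\|\bm M^{-1}I\bm M\|_2=1$. The same failure occurs for a rank-two $\bm V$ spanning the eigenvectors of $\bm A$ for $\alpha$ and $\beta$. Submultiplicativity discards the cancellation between $\bm M^{-1}$ and $\bm M$ that the lemma depends on. So passing to extreme points does not by itself reduce the problem to rank one.

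\textbf{A repair.} Reduce to rank one through the dual form of the operator norm instead of through extreme points.
\begin{itemize}
\item For unit $\bm x,\bm y$, set $\bm a=\bm M\bm x$ and $\bm b=\bm M^{-1}\bm y$. Then $\bm y^\top\bm M^{-1}\bm P\bm M\bm x=\operatorname{tr}(\bm P\bm S)$ with $\bm S=\frac12(\bm a\bm b^\top+\bm b\bm a^\top)$.
\item The only possibly nonzero eigenvalues of $\bm S$ are $\frac12(\bm a^\top\bm b\pm\|\bm a\|_2\|\bm b\|_2)$, so at most one is positive. Hence the maximum of $\operatorname{tr}(\bm P\bm S)$ over $0\preceq\bm P\preceq I$ is attained at $\bm P=0$ or at a rank-one projection $\bm u\bm u^\top$.
\item By Cauchy--Schwarz, $\operatorname{tr}(\bm u\bm u^\top\bm S)=(\bm x^\top\bm M\bm u)(\bm y^\top\bm M^{-1}\bm u)\le\|\bm M^{-1}\bm u\|_2\|\bm M\bm u\|_2$.
\item Therefore $\sup_{\bm P}\|\bm M^{-1}\bm P\bm M\|_2=\sup_{\|\bm u\|_2=1}\|\bm M^{-1}\bm u\|_2\|\bm M\bm u\|_2$, and scalar Kantorovich finishes.
\end{itemize}
Alternatively, for a projection $\bm{\Pi}$ onto $\mathcal S$, the idempotent $\bm M^{-1}\bm{\Pi}\bm M$ has norm $1/\sin\phi$, where $\phi$ is the minimal angle between $\bm M^{-1}\mathcal S$ and $\bm M^{-1}\mathcal S^\perp$. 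Wielandt's inequality gives $\cos\phi\le(\beta-\alpha)/(\beta+\alpha)$.

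For comparison, the paper avoids all of this. With $\bm w=(\bm A+\bm B)^{-1}\bm B\bm v$, it uses $\bm A\bm w=\bm B(\bm v-\bm w)$ and $\bm B\succeq0$ to get $\langle\bm w,\bm A\bm w\rangle\le\|\bm A\bm w\|_2$. It then applies the Kantorovich-type bound $\|\bm A\bm u\|_2^2\le(\alpha+\beta)\langle\bm u,\bm A\bm u\rangle-\alpha\beta$ to $\bm u=\bm w/\|\bm w\|_2$.
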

\begin{proof}
By definition, there exists $\vv \in \R^d$ with $\norm{\vv}=1$ such that 
\[
 \|(\bm A+\bm B)^{-1}\bm B\|_2=\norm{(\bm A+\bm B)^{-1}\bm B \vv}.
 \]
Set \(\bm w:=(\bm A+\bm B)^{-1}\bm B \vv\). Then \(\bm A\bm w=\bm B({\bm v}-\bm w)\). Thus
\[
 \langle{\bm v}-\bm w,\bm A\bm w\rangle
 =\langle{\bm v}-\bm w,\bm B({\bm v}-\bm w)\rangle\ge0,
\]
which gives 
\begin{equation} \label{eq:ineqw}
 \langle\bm w,\bm A\bm w\rangle \le  \langle\bm v,\bm A\bm w\rangle \le \|\bm A\bm w\|_2.
\end{equation}
We next give an upper bound for $\norm{\vw}$.
If \(\bm w\ne0\), put \(\bm u=\bm w/\|\bm w\|_2\) and
\(t=\langle\bm u,\bm A\bm u\rangle\). Then \(\|\bm u\|_2=1\)
and \({\alpha}\le t\le {\beta}\). 
From \eqref{eq:ineqw}, one has
\begin{equation} \label{eq:ineqw1}
\norm{\vw} \le \frac{ \|\bm A\bm u\|_2}{\langle\bm u,\bm A\bm u\rangle}= \frac{ \|\bm A\bm u\|_2}{t}.
\end{equation}
Since \({\alpha}I\preceq \bm A\preceq {\beta}I\), the spectral theorem gives \((\bm A-{\alpha}I)(\bm A-{\beta}I)\preceq0\). Expanding the product yields
\[
 \bm A^2-({\alpha}+{\beta})\bm A+{\alpha\beta}I\preceq0.
\]
This immediately gives 
\[
 \|\bm A\bm u\|_2^2=\langle\bm u,\bm A^2\bm u\rangle
 \le({\alpha}+{\beta})\langle\bm u,\bm A\bm u\rangle-{\alpha\beta}\|\bm u\|_2^2
 =({\alpha}+{\beta})t-{\alpha\beta}.
\]
Since \(t\ge {\alpha}>0\), division by \(t^2\) gives
\[
 \frac{\|\bm A\bm u\|_2^2}{t^2}
 \le\frac{({\alpha}+{\beta})t-{\alpha\beta}}{t^2}
 \le\frac{({\alpha}+{\beta})^2}{4{\alpha\beta}},\qquad {\alpha}\le t\le {\beta}.
\]
The scalar function in the middle expression attains its maximum at \(t=2{\alpha\beta}/({\alpha}+{\beta})\). Taking square roots and substituting the resulting bound into \eqref{eq:ineqw1} proves \eqref{eq:half}. The case \(\bm w=0\) is immediate. 
\end{proof}

The following entropy-integral inequality follows by combining \cite[Theorem~3.5(a)]{kmr2014} with the covering-number estimate in \cite[equation~(2.1)]{kmr2014}.

\begin{lemma}
\label{lem:kmr}
Let \(\mathcal V\subset\mathbb R^{d\times N}\) be a fixed nonempty
compact class, and let \(\bm g=(g_1,\ldots,g_N)^\top\) have
independent coordinates satisfying
\[
 \E g_j=0,\qquad \E g_j^2=1,\qquad
 \E\exp(g_j^2/K^2)\le2,\qquad 1\le j\le N,
\]
for some constant \(K>0\). Define
\[
 d_F=\sup_{V\in\mathcal V}\|V\|_F,\qquad
 d_{\rm op}=\sup_{V\in\mathcal V}\|V\|_2.
\]
For \(u>0\), let \(\mathcal N(\mathcal V,u)\) be the smallest
number of operator-norm balls of radius \(u\), with centers in
\(\mathcal V\), that cover \(\mathcal V\). Then, for every \(p\ge1\),
\begin{equation}
 \left(\E\sup_{V\in\mathcal V}\|V\bm g\|_2^p\right)^{1/p}
 \le C_K\left[
  \int_0^{d_{\rm op}}\sqrt{\log\mathcal N(\mathcal V,u)}\,du
  +d_F+\sqrt p\,d_{\rm op}\right],
 \label{eq:kmr}
\end{equation}
where \(C_K\) depends only on \(K\).
\end{lemma}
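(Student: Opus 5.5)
This is the chaos-process bound of Krahmer--Mendelson--Rauhut applied to the linear family \(\bm g\mapsto V\bm g\), \(V\in\mathcal V\). I would take it directly from \cite[Theorem~3.5(a)]{kmr2014}. That theorem bounds the \(L^p\) norm of \(\sup_{V\in\mathcal V}\bigl|\|V\bm g\|_2^2-\E\|V\bm g\|_2^2\bigr|\) for \(\bm g\) with independent, centered, unit-variance, \(K\)-subgaussian coordinates. The bound is expressed through the Talagrand functional \(\gamma_2(\mathcal V,\|\cdot\|_2)\), together with \(d_F\) and \(d_{\rm op}\). In the version I would use, Theorem~3.5(a) controls the norm process \(\sup_V\|V\bm g\|_2\) itself:
\[
 \Bigl(\E\sup_{V\in\mathcal V}\|V\bm g\|_2^p\Bigr)^{1/p}\le C_K\bigl[\gamma_2(\mathcal V,\|\cdot\|_2)+d_F+\sqrt p\,d_{\rm op}\bigr].
\]
If only the chaos form is available, I would recover this norm bound as follows. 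First, \(\E\|V\bm g\|_2^2=\|V\|_F^2\le d_F^2\). Next write \(\|V\bm g\|_2^2\le d_F^2+|\|V\bm g\|_2^2-\|V\|_F^2|\). The \(L^{p/2}\) bound for the chaos is of the form \(\gamma_2(\gamma_2+d_F)+\sqrt p\,d_{\rm op}(\gamma_2+d_F)+p\,d_{\rm op}^2\). This is at most \((\gamma_2+d_F+\sqrt p\,d_{\rm op})^2\) up to constants, and taking square roots gives the norm bound above. The passage from \(L^{p/2}\) of squares to \(L^p\) of norms is immediate, since \(\|Z^2\|_{p/2}^{1/2}=\|Z\|_p\).

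The second step replaces \(\gamma_2\) by the Dudley entropy integral. By the standard covering bound \cite[equation~(2.1)]{kmr2014},
\[
 \gamma_2(\mathcal V,\|\cdot\|_2)\le C\int_0^{d_{\rm op}}\sqrt{\log\mathcal N(\mathcal V,u)}\,du.
\]
This follows by choosing nets at dyadic scales \(d_{\rm op}2^{-k}\) and building an admissible sequence from them. Since the nets have centers in \(\mathcal V\), no further adjustment is needed. Compactness of \(\mathcal V\) guarantees that all covering numbers are finite and the supremum is measurable (one may restrict to a countable dense subset). Substituting into the norm bound yields \eqref{eq:kmr}, with \(C_K\) absorbing the absolute constant.

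The main point requiring care is matching hypotheses. The coordinates are centered with unit variance, and the \(\psi_2\) condition \(\E\exp(g_j^2/K^2)\le2\) is exactly the \(L\)-subgaussian assumption of \cite{kmr2014} with \(L\) comparable to \(K\). I would also check that the constant in Theorem~3.5 depends only on this subgaussian parameter. The square-root algebra relating the chaos and norm forms is routine.
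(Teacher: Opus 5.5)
Your proposal matches the paper's argument: the paper also obtains the lemma by combining \cite[Theorem~3.5(a)]{kmr2014}, which bounds $\bigl(\E\sup_{V}\|V\bm g\|_2^p\bigr)^{1/p}$ by $\gamma_2(\mathcal V,\|\cdot\|_2)+d_F+\sqrt p\,d_{\rm op}$ up to a constant depending only on $K$, with the Dudley-type covering bound \cite[equation~(2.1)]{kmr2014} for $\gamma_2$. Your fallback derivation from the chaos form and your sketch of the Dudley bound are correct extras that the citation already covers; for $1\le p<2$ the fallback should use monotonicity of moments rather than a triangle inequality in $L^{p/2}$.
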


The following lemma states McDiarmid's bounded differences inequality. 

\begin{lemma}[{\cite{mcdiarmid1989}}]
\label{lem:mcdiarmid}
Let \(X_1,\ldots,X_N\) be independent random variables taking
values in measurable spaces \(\mathcal X_1,\ldots,\mathcal X_N\),
and let \(f:\mathcal X_1\times\cdots\times\mathcal X_N\to\mathbb R\)
be measurable with \(\E|f(X_1,\ldots,X_N)|<\infty\).
Suppose that constants \(c_1,\ldots,c_N\ge0\), with
\(\sum_{a=1}^N c_a^2>0\), satisfy
\[
 \bigl|f(x_1,\ldots,x_a,\ldots,x_N)
       -f(x_1,\ldots,x'_a,\ldots,x_N)\bigr|
 \le c_a
\]
for each \(1\le a\le N\) and all choices of the arguments in
their respective spaces. Then, for every \(t>0\),
\begin{equation}
 \Prob\left\{
 f(X_1,\ldots,X_N)-\E f(X_1,\ldots,X_N)\ge t
 \right\}
 \le\exp\left(-\frac{2t^2}{\sum_{a=1}^N c_a^2}\right).
 \label{eq:mcdiarmid}
\end{equation}
\end{lemma}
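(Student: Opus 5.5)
We would prove Lemma~\ref{lem:mcdiarmid} by the standard martingale (Azuma--Hoeffding) argument, since the paper only uses it as a cited tool. Write \(Z=f(X_1,\ldots,X_N)\), let \(\mathcal F_a=\sigma(X_1,\ldots,X_a)\) with \(\mathcal F_0\) trivial, and form the Doob martingale \(Z_a=\E[Z\mid\mathcal F_a]\). Then \(Z_0=\E Z\), \(Z_N=Z\), and \(Z-\E Z=\sum_{a=1}^N D_a\) with \(D_a=Z_a-Z_{a-1}\). Integrability of \(f\) makes every \(Z_a\) well defined.

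The key step is to show that, conditionally on \(\mathcal F_{a-1}\), the increment \(D_a\) takes values in an interval of length at most \(c_a\). By independence and Fubini, we can write \(Z_a=g_a(X_1,\ldots,X_a)\), where
\[
g_a(x_1,\ldots,x_a)=\E f(x_1,\ldots,x_a,X_{a+1},\ldots,X_N).
\]
Moreover, \(Z_{a-1}=\E_{X_a'}\,g_a(X_1,\ldots,X_{a-1},X_a')\), where \(X_a'\) is an independent copy of \(X_a\). Fix the first \(a-1\) arguments. The bounded-difference hypothesis, integrated over the remaining coordinates, gives \(|g_a(\ldots,x)-g_a(\ldots,x')|\le c_a\) for all \(x,x'\). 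Hence \(\sup_x g_a-\inf_x g_a\le c_a\). Therefore \(D_a\) lies in the \(\mathcal F_{a-1}\)-measurable interval \([\inf_x g_a-Z_{a-1},\,\sup_x g_a-Z_{a-1}]\), whose length is at most \(c_a\), and \(\E[D_a\mid\mathcal F_{a-1}]=0\).

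Next we apply Hoeffding's lemma conditionally: a centered variable supported in an interval of length \(c\) satisfies \(\E e^{\lambda D}\le e^{\lambda^2c^2/8}\). This follows from convexity, \(e^{\lambda x}\le\frac{b-x}{b-a}e^{\lambda a}+\frac{x-a}{b-a}e^{\lambda b}\), combined with a second-order Taylor bound on the log-moment function. Conditioning successively on \(\mathcal F_{N-1},\mathcal F_{N-2},\ldots\) gives
\[
\E e^{\lambda(Z-\E Z)}\le\exp\Bigl(\tfrac{\lambda^2}{8}\sum_a c_a^2\Bigr).
\]
Markov's inequality then yields \(\Prob\{Z-\E Z\ge t\}\le\exp\bigl(-\lambda t+\lambda^2\sum_a c_a^2/8\bigr)\). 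Choosing \(\lambda=4t/\sum_a c_a^2\), which is positive because \(\sum_a c_a^2>0\), gives \eqref{eq:mcdiarmid}.

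The main obstacle is the second step. One must make rigorous that the conditional range of \(D_a\) has length at most \(c_a\) uniformly, which requires the measurability and Fubini representation of \(g_a\) via independence. One must also handle the fact that the hypothesis bounds differences only pointwise, and not the values of \(f\) themselves. Once that is settled, Hoeffding's lemma and the Chernoff optimization are routine.
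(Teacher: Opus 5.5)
Your proposal is correct: it is the standard Doob-martingale argument with Hoeffding's lemma, which is the proof in the cited McDiarmid reference, while the paper itself gives no proof and only uses the inequality as a cited tool. One step needs tidying: $\inf_x g_a$ and $\sup_x g_a$ need not be $\mathcal F_{a-1}$-measurable, but this does not matter, because bounded differences already force $f$ to be bounded (so every $g_a$ is defined everywhere), and you can apply Hoeffding's lemma for each fixed $(x_1,\ldots,x_{a-1})$ to $x\mapsto g_a(x_1,\ldots,x_{a-1},x)$ under the law of $X_a$.
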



\begin{thebibliography}{99}
 % References reordered alphabetically by author; citation keys are unchanged.
\bibitem{balan2010signal}
{{R.}~Balan}, {\em On signal reconstruction from its spectrogram}, in 44th Annu. Conf. Inf. Sci. Syst. CISS, IEEE, 2010, pp.~1--4.

\bibitem{balan2006signal}
{{R.}~Balan, P.~Casazza, and D.~Edidin}, {\em On signal reconstruction without phase}, Appl. Comput. Harmon. Anal., 20 (2006), pp.~345--356.

\bibitem{beinert2018}
{R.~Beinert and G.~Plonka}, {\em Enforcing uniqueness in one-dimensional phase retrieval by additional signal information in time domain}, Appl. Comput. Harmon. Anal., 45 (2018), pp.~505--525.

\bibitem{bendory}
{T.~Bendory, R.~Beinert, and Y.~C.~Eldar}, {\em Fourier phase retrieval: uniqueness and algorithms}, in Compressed Sensing and its Applications, 2017, pp.~55--91.

\bibitem{candes2015cdp}
E.~J.~Cand\`es, X.~Li, and M.~Soltanolkotabi,
``Phase retrieval from coded diffraction patterns,''
\emph{Applied and Computational Harmonic Analysis}, vol.~39,
no.~2, pp.~277--299, 2015.


\bibitem{candes2015wf}
E.~J.~Cand\`es, X.~Li, and M.~Soltanolkotabi,
``Phase retrieval via Wirtinger flow: Theory and algorithms,''
\emph{IEEE Transactions on Information Theory}, vol.~61,
no.~4, pp.~1985--2007, 2015.


\bibitem{chai2010array}
{A.~Chai, M.~Moscoso, and G.~Papanicolaou}, {\em Array imaging using intensity-only measurements}, Inverse Probl., 27 ({2011}), 015005.

\bibitem{chen2017twf}
Y.~Chen and E.~J.~Cand\`es,
``Solving random quadratic systems of equations is nearly as easy
as solving linear systems,''
\emph{Communications on Pure and Applied Mathematics}, vol.~70,
no.~5, pp.~822--883, 2017.

\bibitem{davis1970rotation}
C.~Davis and W.~M.~Kahan,
``The rotation of eigenvectors by a perturbation. III,''
\emph{SIAM Journal on Numerical Analysis}, vol.~7,
no.~1, pp.~1--46, 1970.


\bibitem{edidin2019}
{D.~Edidin}, {\em The geometry of ambiguity in one-dimensional phase retrieval}, SIAM J. Appl. Algebr. Geom., 3 (2019), pp.~644--660.

\bibitem{gao2017phaseless}
B.~Gao and Z.~Xu,
``Phaseless recovery using the Gauss--Newton method,''
\emph{IEEE Transactions on Signal Processing}, vol.~65,
no.~22, pp.~5885--5896, 2017.


\bibitem{gross2017improved}
D.~Gross, F.~Krahmer, and R.~Kueng,
``Improved recovery guarantees for phase retrieval from coded
diffraction patterns,''
\emph{Applied and Computational Harmonic Analysis}, vol.~42,
no.~1, pp.~37--64, 2017.


\bibitem{Xcry1}
{R.~W.~Harrison}, {\em Phase problem in crystallography}, J. Opt. Soc. Am. A, 10 (1993), pp.~1046--1055.

\bibitem{hoeffding1963}
W.~Hoeffding, ``Probability inequalities for sums of bounded random
variables,'' \emph{Journal of the American Statistical Association},
vol.~58, no.~301, pp.~13--30, 1963.





\bibitem{huangGao}
{ G.~Huang and S. Li}, {\em PhaseLift for Coded Diffraction Patterns: Optimal Sampling Rate}, arXiv preprint arXiv:2608.02450, 2026.

\bibitem{huang2026near}
M.~Huang,
``Near-quadratic convergence of the Gauss--Newton method for
complex phase retrieval,''
\emph{IEEE Transactions on Information Theory}, vol.~72,
no.~1, pp.~222--245, 2026.


\bibitem{huang2026cdp}
M.~Huang, J.~Wen, and R.~Zhang,
``Recovery performance of PhaseLift for phase retrieval from coded
diffraction patterns,''
\emph{Inverse Problems}, vol.~42, Art.~no.~045021, 2026.


\bibitem{huangPerfm}
{ M.~Huang and Z.~Xu}, {\em Performance bounds of the intensity-based estimators for noisy phase retrieval}, Appl. Comput. Harmon. Anal., 68 (2024), 101584.

\bibitem{huangJoC}
{M.~Huang and Z.~Xu}, {\em No existence of a linear algorithm for the one-dimensional Fourier phase retrieval}, J. Complexity, 86 (2025), 101886.

\bibitem{kmr2014}
F.~Krahmer, S.~Mendelson, and H.~Rauhut,
``Suprema of chaos processes and the restricted isometry property,''
\emph{Communications on Pure and Applied Mathematics}, vol.~67,
no.~11, pp.~1877--1904, 2014.


\bibitem{li2025taf}
H.~Li and J.~Li,
``Truncated amplitude flow with coded diffraction patterns,''
\emph{Inverse Problems}, vol.~41, Art.~no.~015002, 2025.


\bibitem{li2022sampling}
H.~Li, S.~Li, and Y.~Xia,
``Sampling complexity on phase retrieval from masked Fourier
measurements via Wirtinger flow,''
\emph{Inverse Problems}, vol.~38, Art.~no.~105004, 2022.



\bibitem{liu2008phase}
{Y.~J.~Liu et al.}, {\em Phase retrieval in X-ray imaging based on using structured illumination}, Phys. Rev. A, 78 (2008), 023817.

\bibitem{Loewen}
{E.~G.~Loewen and E.~Popov}, {\em Diffraction gratings and applications}, CRC Press, 1997.

\bibitem{mcdiarmid1989}
C.~McDiarmid, ``On the method of bounded differences,'' in
\emph{Surveys in Combinatorics, 1989}, J.~Siemons, Ed.
Cambridge University Press, 1989, pp.~148--188.


\bibitem{miao2008extending}
{J.~Miao, T.~Ishikawa, Q.~Shen, and T.~Earnest}, {\em Extending x-ray crystallography to allow the imaging of noncrystalline materials, cells, and single protein complexes}, Annu. Rev. Phys. Chem., 59 (2008), pp.~387--410.

\bibitem{Xcry2}
{R.~P.~Millane}, {\em Phase retrieval in crystallography and optics}, J. Opt. Soc. Am. A, 7 (1990), pp.~394--411.

\bibitem{paige1982lsqr}
C.~C.~Paige and M.~A.~Saunders,
``LSQR: An algorithm for sparse linear equations and sparse least squares,''
\emph{ACM Transactions on Mathematical Software}, vol.~8,
no.~1, pp.~43--71, 1982.

\bibitem{tropp2012}
J.~A.~Tropp, ``User-friendly tail bounds for sums of random matrices,''
\emph{Foundations of Computational Mathematics}, vol.~12,
pp.~389--434, 2012.


\bibitem{vershynin2018hdp}
R.~Vershynin,
\emph{High-Dimensional Probability: An Introduction With Applications in Data Science}.
Cambridge, U.K.: Cambridge Univ. Press, 2018.

\bibitem{walther1963question}
{A.~Walther}, {\em The question of phase retrieval in optics}, Optica Acta, 10 (1963), pp.~41--49.

\bibitem{wang2018taf}
G.~Wang, G.~B.~Giannakis, and Y.~C.~Eldar,
``Solving systems of random quadratic equations via truncated
amplitude flow,''
\emph{IEEE Transactions on Information Theory}, vol.~64,
no.~2, pp.~773--794, 2018.
\end{thebibliography}
\end{document}